\documentclass[aps,prl,reprint,superscriptaddress,nofootinbib,longbibliography]{revtex4-2}

\usepackage[T1]{fontenc}
\usepackage[utf8]{inputenc}
\usepackage{amsmath,amssymb,amsthm,mathtools,bm}
\usepackage{booktabs}
\usepackage{graphicx}
\usepackage{float}
\usepackage{xcolor}
\usepackage{hyperref}

\graphicspath{{figures/}}

\hypersetup{
    colorlinks=true,
    linkcolor=blue!50!black,
    citecolor=blue!50!black,
    urlcolor=blue!50!black
}

\newcommand{\ii}{\mathrm{i}}
\newcommand{\Tr}{\operatorname{Tr}}
\newcommand{\id}{\mathbb I}
\newcommand{\ket}[1]{|#1\rangle}
\newcommand{\bra}[1]{\langle #1|}
\newcommand{\proj}[1]{\ket{#1}\!\bra{#1}}

\newcommand{\cJ}{\mathcal J}
\newcommand{\cB}{\mathcal B}
\newcommand{\cO}{\mathcal O}
\newcommand{\GF}{\mathbb F_2}
\newcommand{\symp}[2]{\langle #1,#2\rangle_{\rm sp}}

\newtheorem{lemma}{Lemma}
\newtheorem{proposition}{Proposition}
\newtheorem{theorem}{Theorem}
\newtheorem{corollary}{Corollary}

\begin{document}

\title{Fisher-Orthogonal Memory in Quantum Reservoir Computing}

\author{Ce Wang}
\email{phywangce@gmail.com}
\affiliation{School of Physics Science and Engineering, Tongji University, Shanghai 200092, China}
\author{Xingze Qiu}
\affiliation{School of Physics Science and Engineering, Tongji University, Shanghai 200092, China}

\date{\today}

\begin{abstract}
Quantum reservoir computing processes temporal information through driven
many-body dynamics, but its performance is ultimately limited by how accurately
past inputs can be extracted from finite measurements. Here we formulate this
limitation as a local multiparameter estimation problem and introduce a
delay-space quantum Fisher information matrix to quantify the distinguishability
of information stored at different delays.
This perspective identifies Fisher-orthogonal memory as a
measurement-efficient design principle: different delays should perturb the
reservoir state along mutually
Fisher-orthogonal directions. We first analyze the
single-qubit limit using the Gill--Massar bound, revealing an optimal
write-store-routing trade-off. Guided by this structure, we construct solvable
multi-qubit reservoirs based on Clifford routing orbits and Singer-cycle Pauli
algebra. The resulting dynamics yield diagonal delay-space QFIMs with
analytically programmable fading profiles. Under finite-shot
local Pauli readout, these reservoirs retain sharp memory windows
that are absent in a
validation-selected Ising baseline. Their product-task behavior is governed by
second-order responses
inherited from the same Pauli-routing algebra.
Our results provide an analytically controlled route toward
measurement-efficient quantum reservoir computing.
\end{abstract}

\maketitle

\textit{Introduction.---}
Quantum reservoir computing (QRC) processes sequential data through repeated
input injection, fixed reservoir evolution, and classical readout
\cite{Fujii2017Disordered,Ghosh2019QuantumReservoirProcessing,Mujal2021Opportunities}. Across this broad landscape, diverse reservoir architectures
\cite{Nakajima2019SpatialMultiplexing,ChenNurdin2019DissipativeMaps,Nokkala2021Gaussian,MartinezPena2023FiniteDimensions,Sannia2024Dissipation,Kubota2023Noise,MartinezPena2025InputDependence,Schutte2025Expressivity,Cindrak2026MemoryNonlinearity}
have been actively explored, ranging from continuous-time 
spin ensembles
\cite{Fujii2017Disordered,Negoro2018NMR,Suzuki2022Natural,MartinezPena2023SpinIPC,Hou2026CorrelatedSpins}
and weak-measurement or feedback networks
\cite{Mujal2023WeakProjectiveMeasurements,Yasuda2023RepeatedMeasurements,Kobayashi2024FeedbackDriven,Monomi2025WeakMeasurements,Zhu2025FeedbackQRC,Franceschetto2026Backaction}
to discrete quantum circuits
\cite{Chen2020NoisyQuantumComputers,Xia2023ConfiguredQRC,Connerty2026NISQEcho}, 
with successful experimental demonstrations realized across nuclear magnetic resonance
\cite{Negoro2018NMR,Hou2026CorrelatedSpins}, 
superconducting qubits
\cite{Chen2020NoisyQuantumComputers,Suzuki2022Natural,Yasuda2023RepeatedMeasurements}, 
and photonic platforms
\cite{Paparelle2026OpticalQRC}. Concurrently, intense theoretical interest 
has focused on linking QRC performance to fundamental many-body physics, highlighting 
the roles of quantum entanglement
\cite{Kora2024EntanglementQRC,Karimi2025EntanglementKerrQRC,Askari2025DistributedEntanglementQRC}, 
operator spreading and scrambling
\cite{Nahum2018OperatorSpreading,Xiong2025ScramblingNoiseQRC}, quantum chaos and 
symmetry-induced concentration
\cite{Kobayashi2026ChaosQRC,Sannia2025ConcentrationQRC}, and 
non-stabilizer resources such as magic
\cite{Ivaki2026Universality,Xia2026MagicQRC}. Despite these advances, standard evaluations usually assume exact expectation
values and therefore bypass a basic experimental constraint because information
must be inferred from finitely many shots. A useful finite-shot reservoir needs not
only memory capacity but \emph{identifiable} memory. If different historical
inputs perturb the final state along nearly parallel directions, projection
noise obscures their temporal origin despite large individual sensitivities.

Here, we address this challenge by formulating QRC memory as a local
multiparameter-estimation problem and introducing a delay-space quantum Fisher
information matrix (QFIM) to quantify how well different past inputs can be
resolved under finite measurement resources. Guided by the single-qubit optimum,
we construct a multi-qubit reservoir based on a solvable Clifford routing orbit
(CRO). In this design, historical inputs at different delays are routed to
orthogonal Pauli pathways, yielding a transparent mechanism for suppressing
finite-shot mixing between delay coordinates. The construction exposes an
explicit algebraic link between routing, dissipation, and memory
identifiability. Under local finite-shot Pauli readout, the resulting reservoirs
retain sharply resolved memory windows relative to a
validation-selected Ising baseline, demonstrating measurement-efficient
temporal processing.

\textit{QRC as a multiparameter estimation problem.---}
Following the standard discrete-time framework of QRC, each incoming data 
point is encoded into a fresh input qubit that interacts with the reservoir 
through a fixed recurrence unitary
\cite{Fujii2017Disordered}. This input qubit is then traced out 
before the subsequent signal arrives.  The classical signal $s$ is encoded via the state preparation:
\begin{equation}
    \ket{\psi_s}
    =
    \cos\!\left(\frac{\pi s}{2}\right)\ket{0}
    +
    \sin\!\left(\frac{\pi s}{2}\right)\ket{1},
    \qquad
    \tau_s=\proj{\psi_s}.
    \label{eq:input_encoding}
\end{equation}
The input qubit then interacts with an \(N\)-qubit reservoir through a fixed
unitary \(U\).  After tracing out the input, one driven reservoir step is the
channel
\begin{equation}
    \Phi_s(\rho)
    =
    \Tr_{\rm in}
    \left[
        U(\tau_s\otimes\rho)U^\dagger
    \right].
    \label{eq:driven_channel}
\end{equation}
Repeated use of this input-driven channel maps a recent input history into the
reservoir state measured by the readout.  To quantify
local memory distinguishability under
finite-measurement constraints, we 
examine how cleanly infinitesimal variations in the input history can be 
resolved within the driven steady state. We consider a reference background of 
a constant drive $s_0=1/2$, letting $\rho_\ast$ be the corresponding fixed 
point of the channel $\Phi_{s_0}$. Ordering the input stream backward from the 
moment of measurement, $\bm{s}=(s_1,\ldots,s_L)$ where $s_1$ represents the 
most recent signal, the final reservoir state evolves as
\begin{equation}
    \rho_f(\bm{s})
    =
    \Phi_{s_1}\circ\Phi_{s_2}\circ\cdots\circ\Phi_{s_L}(\rho_\ast).
    \label{eq:history_state}
\end{equation}
The final reservoir state is therefore parameterized by the
recent input history. In this local regime, reconstructing the past becomes a
multiparameter estimation problem for the perturbations
$\delta s_k=s_k-s_0$, with $\rho_f(\bm{s})$ as the readout state.

To first order in these perturbations, the local memory content is governed 
by the linear response of the final density matrix,
\begin{equation}
    \delta\rho_f
    =
    \sum_{k=1}^{L}\Delta_k\,\delta s_k,
    \qquad
    \Delta_k
    =
    \left.
    \frac{\partial\rho_f}{\partial s_k}
    \right|_{\bm s=s_0\bm{1}}
    =
    \Phi_{s_0}^{k-1}
    \left[
        \dot\Phi_{s_0}(\rho_\ast)
    \right],
    \label{eq:Delta_def}
\end{equation}
where $\dot\Phi_{s_0}=\partial_s\Phi_s|_{s_0}$. Each $\Delta_k$ is a traceless 
Hermitian tangent operator on the
reservoir. Together, the set $\{\Delta_k\}$ spans the local response space
associated with the input history. To quantify how effectively these
response modes can be distinguished under finite-measurement 
noise, we construct the QFIM in
delay space as
\begin{equation}
   (\mathsf H)_{kl}
= g_{\rho_\ast}(\Delta_k,\Delta_l).
    \label{eq:H_def}
\end{equation}
When $\rho$ is full rank, the SLD metric is
\(g_\rho(A,B)=\operatorname{Re}\Tr[A\,\mathcal J_\rho^{-1}(B)]\), with
\(\mathcal J_\rho(X)=(\rho X+X\rho)/2\)
\cite{Helstrom1976QuantumDetection,Braunstein1994StatisticalDistance,Paris2009QuantumEstimation,Szczykulska2016Multiparameter,Ragy2016Compatibility,SupplementalMaterial}. Each $\Delta_k$ belongs to the reservoir operator space, whereas
$\mathsf H$ acts on the $L$ delay coordinates. Its diagonal entries are the
single-delay QFIs, and its off-diagonal entries quantify Fisher overlap. Two
delays are Fisher-orthogonal when $(\mathsf H)_{kl}=0$ for $k\ne l$. If
$\mathsf H$ is rank deficient, some linear combination of historical
perturbations produces no first-order change in the reservoir state and is
therefore locally invisible under any measurement.

\textit{Single-qubit Fisher cycle.---}
For a single-qubit reservoir, the three-dimensional operator space restricts
the number of mutually Fisher-orthogonal directions to at most three.
Accordingly, we consider reconstructing the three most recent perturbations
($L=3$). A positive matrix $\Omega$ assigns their relative loss weights, and
we begin with the equal-delay choice
$\Omega_{\rm eq}=\operatorname{diag}(1/3,1/3,1/3)$. To quantify
multiparameter estimation precision, we use the Gill--Massar (GM) bound
\cite{Gill2000StateEstimation,SupplementalMaterial}
\begin{equation}
    C_{\rm GM}
    =
    \left\{
    \Tr
    \left[
        \left(
        \mathsf{H}^{-1/2}\Omega \mathsf{H}^{-1/2}
        \right)^{1/2}
    \right]
    \right\}^2 .
    \label{eq:GM_single}
\end{equation}
When measurements are
performed independently on each copy, $C_{\rm GM}$ bounds the weighted
estimation error, with lower values corresponding to higher precision. Guided
by this cost, we construct a tunable joint input--reservoir unitary of the
form
\begin{equation}
    \begin{aligned}
    U_r
    &=
    (\mathbb I_{\rm in}\otimes V_{\rm cyc})\,
    W_r\,
    (L_{\rm in}\otimes\mathbb I_{\rm R}),\\
    L_{\rm in}&=e^{+\ii\pi Y/4},
    \qquad
    V_{\rm cyc}=H_{\rm Had}S^\dagger,\\
    W_r
    &=
    \begin{pmatrix}
    a_r\mathbb I_{\rm R} & b_r Z\\
    b_r X & \ii a_r Y
    \end{pmatrix}_{\rm in}.
    \end{aligned}
    \label{eq:single_Ur}
\end{equation}
Here \(a_r=\sqrt{(1+r)/2}\), \(b_r=\sqrt{(1-r)/2}\), and
\(0<r<1\). The block form refers to the input basis, while its Pauli entries
act on the reservoir. With \(H_{\rm Had}\) and
\(S=\operatorname{diag}(1,\ii)\) denoting the Hadamard and phase gates,
respectively, \(V_{\rm cyc}\) generates
\(Z\mapsto X\mapsto Y\mapsto Z\).

\begin{figure}[t]
    \centering
    \includegraphics[width=0.98\linewidth]{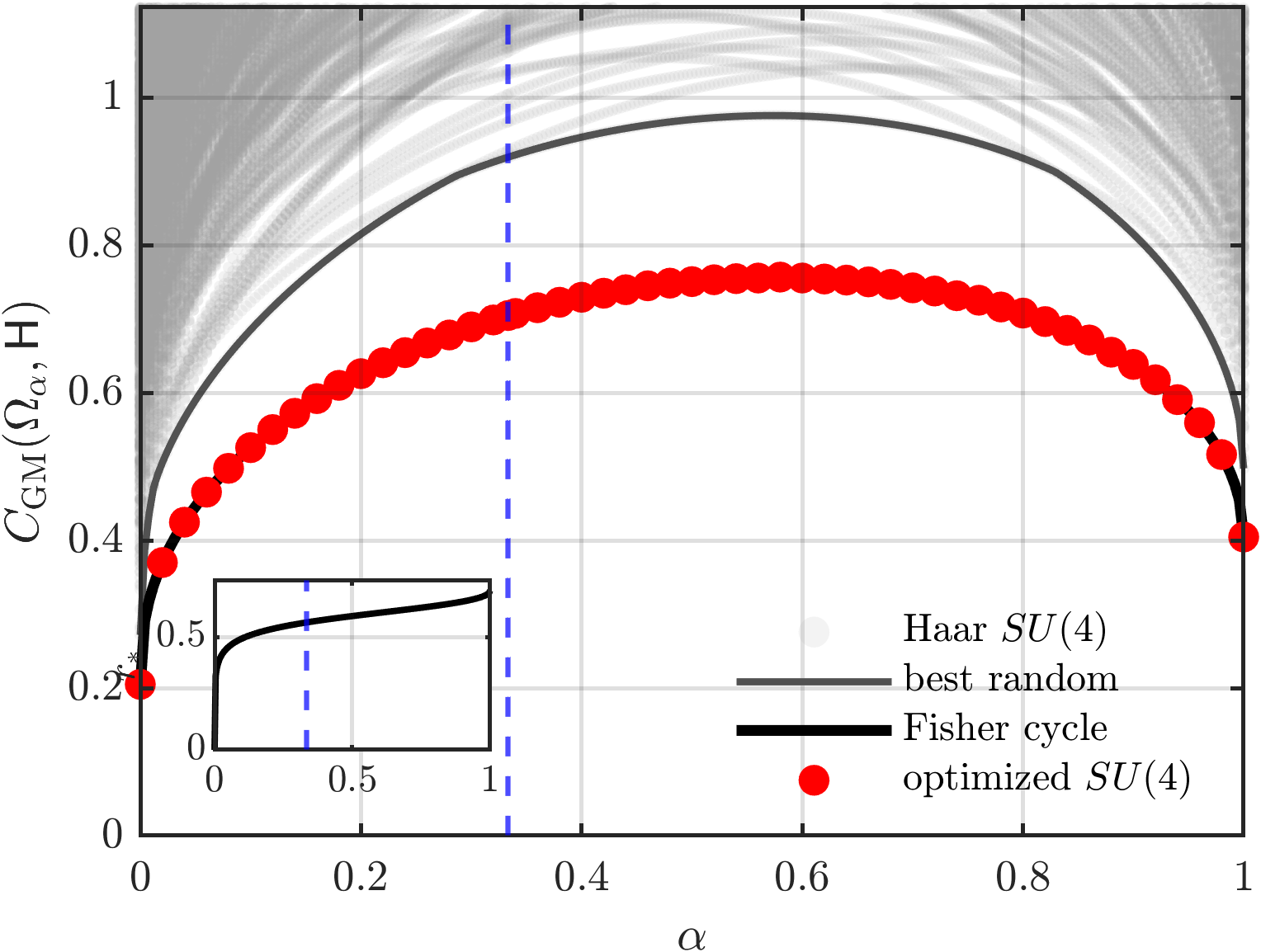}
    \caption{
    Single-qubit GM cost for the weighted three-delay task. Lower 
    values indicate superior estimation precision. The black solid curve 
    represents the optimized Fisher-cycle construction
    $U_r$. Gray points denote 
    random Haar $SU(4)$ gates, with the gray line tracking the empirical 
    random frontier at each $\alpha$. Red circles denote unrestricted local 
    optimizations over the full $SU(4)$ manifold. The
    vertical dashed line
    indicates the equal-weight loss baseline ($\alpha=1/3$), and the inset 
    displays the optimal parameter $r_\ast$ for the Fisher-cycle
    construction.}
    \label{fig:su4_frontier}
\end{figure}

The induced quantum channel factorizes the write-store step 
from the subsequent routing dynamics as 
$\Phi_{s,r}(\rho) = V_{\rm cyc} \Psi_{s,r}(\rho) V_{\rm cyc}^\dagger$. At the operating point $s_0=1/2$ and the maximally mixed reservoir state $\rho_\ast=\mathbb I_{\rm R}/2$, 
the channel and its derivative evaluate to
\begin{equation}
    \begin{aligned}
    \Psi_{s_0,r}(\rho)
    &=
    \frac{1+r}{2}\rho + \frac{1-r}{2} X\rho X, \\
    \left.
    \partial_s\Psi_{s,r}(\rho_\ast)
    \right|_{s_0}
    &=
    c_r Z,
    \qquad
    c_r=\frac{\pi}{2}\sqrt{1-r^2}.
    \end{aligned}
    \label{eq:single_channel}
\end{equation}
Consequently, the fresh perturbation is injected strictly along the $Z$ axis, 
while the write-store channel preserves the $X$ component and damps the $Y,Z$ sectors by $r$. Under the three-axis cyclic routing generated by $V_{\rm cyc}$, 
the resulting delayed responses follow the exact trajectory:
\begin{equation}
    \Delta_1 = c_r X, \qquad
    \Delta_2 = c_r Y, \qquad
    \Delta_3 = r c_r Z.
    \label{eq:single_Delta_route}
\end{equation}
Since these
responses occupy mutually orthogonal Bloch directions, their SLD inner
products vanish. This realization of orthogonal memory modes in quantum-state
tangent space parallels classical analyses based on Fisher information and
information-processing capacity
\cite{Ganguli2008MemoryTraces,Dambre2012InformationProcessing}. The resulting
QFIM is
\begin{equation}
    \mathsf{H}^{(3)}(r)
    =
    \pi^2(1-r^2)
    \operatorname{diag}(1,1,r^2),
    \label{eq:H3r}
\end{equation}
where the superscript denotes the task size, and $r$ explicitly governs the
trade-off between sensitivity to fresh inputs and memory retention
\cite{SupplementalMaterial}.

To test robustness away from equal delay weighting, we use
\(\Omega_\alpha=\operatorname{diag}[(1-\alpha)/2,(1-\alpha)/2,\alpha]\), with
\(0\le\alpha\le1\).  Increasing \(\alpha\) shifts the loss weight from the two
freshest delays to the oldest delay.  We benchmark this ansatz against general
two-qubit operations by sampling Haar-distributed \(SU(4)\) gates and by
performing unrestricted local optimizations over the full \(SU(4)\) manifold.
As shown in Fig.~\ref{fig:su4_frontier}, optimizing \(r\) at
each \(\alpha\) yields the same lower-cost frontier found by unrestricted local
optimizations over \(SU(4)\), while Haar-random gates populate higher costs.
We further establish that this frontier is globally optimal
\cite{SupplementalMaterial}.

\textit{Pauli-string cycles.---}
The single-qubit construction suggests how Fisher-orthogonal memory can
be extended to larger reservoirs. At a maximally mixed steady state, the QFIM
reduces to the Hilbert--Schmidt inner product up to a global constant, so
distinct Pauli strings are Fisher-orthogonal. We therefore generalize the
three-axis route by transporting stored responses along a CRO generated by a
Clifford unitary $V_K$,
\begin{equation}
    P_{j+1}=V_K P_j V_K^\dagger,
    \label{eq:Pauli_orbit}
\end{equation}
with $P_{K+1}=P_1$. A natural choice is a Clifford generator whose conjugation action realizes a
Singer cycle on the Pauli strings
\cite{Kern2010CyclicMUB}.  In this case \(V_K\) cyclically permutes
the \(2^N+1\) maximal commuting Pauli classes, which partition the
\(4^N-1\) nonidentity Pauli strings. Hence
an orbit that follows one representative string from each class yields a CRO of
length \(K=2^N+1\).  As a concrete illustration, a two-qubit reservoir
(\(N=2\)) uses a \(V_5\) generator to produce the cyclic five-step CRO shown in
Fig.~\ref{fig:k5_flow}.

To turn this route into recurrent dynamics, we replace the
single-qubit writing direction $Z$ and damping direction $X$ by anticommuting
reservoir Pauli strings $Q$ and $G$. The resulting write-store block is
\begin{equation}
    W_r(Q, G)
    =
    \begin{pmatrix}
    a_r\mathbb I_{\rm R} & b_r Q \\
    b_r G & -a_r GQ
    \end{pmatrix}_{\rm in},
    \label{eq:many_Wr}
\end{equation}
acting on the joint space of the input and the reservoir. Consequently, for the chosen $V_K$, the complete driven QRC step 
is obtained by combining this block with $V_K$ and the input rotation
$L_{\rm in}$,
\begin{equation}
    U_{r,K}
    =
    (\mathbb{I}_{\rm in}\otimes V_K)
    W_r(Q, G)
    (L_{\rm in}\otimes\mathbb I_{\rm R}) .
    \label{eq:many_Ur}
\end{equation}
Once $V_K$, $Q$, and $G$ are fixed, the Pauli algebra determines the
operator route and the hit locations, while $r$ sets the contraction at each
hit. The injection operator $Q$ fixes the entry point of the fresh perturbation at $P_1 = V_K Q V_K^\dagger$ (see Fig.~\ref{fig:k5_flow} for the $K=5$ example).

As $P_j$ propagates along the CRO, it acquires a factor $r$ whenever
it anticommutes with $G$ (a hit) and remains undamped otherwise. This sequence of hits directly defines the
binary damping pattern $\eta_K(Q,G)$ ($\eta_{K,j}=1$ for a hit and $0$
otherwise) visually encoded in Fig.~\ref{fig:k5_flow}.
Equivalently, up to an irrelevant overall sign,
\(\Delta_k=\pi\sqrt{1-r^2}\,
r^{n_{k-1}(\eta_K)}P_k/2^N\), where
\(n_{k-1}(\eta_K)=\sum_{j=1}^{k-1}\eta_{K,j}\) counts the prior hits
(\(n_0=0\)). Since the \(P_k\) are distinct Pauli strings, these response operators are
Fisher-orthogonal \cite{SupplementalMaterial}. The resulting \(K\)-delay QFIM is therefore strictly
diagonal:
\begin{equation}
    \left(\mathsf{H}^{(K)}_{\eta_K}(r)\right)_{kl}
    =
    \pi^2(1-r^2)r^{2n_{k-1}(\eta_K)}\delta_{kl}.
    \label{eq:H_eta}
\end{equation}

\begin{figure}[t]
    \centering
    \includegraphics[width=0.98\linewidth]{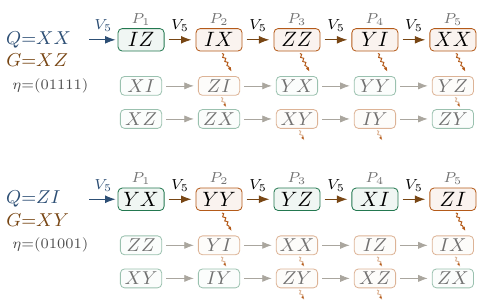}
    \caption{
   Two $K=5$ CROs generated by the same $V_5$. 
The blue arrow indicates the entry point $P_1 = V_5 Q V_5^\dagger$ of the fresh perturbation tangent operator. 
The damping string $G$ assigns the contractive hit pattern $\eta$. 
Pale rows explicitly illustrate the underlying parallel Singer partition structure, 
showing off-route Pauli strings transported simultaneously by the same generator, 
while orange boxes mark where they are damped.}
    \label{fig:k5_flow}
\end{figure}

Crucially, the Singer construction not only organizes the nonidentity Pauli
strings into complete routes, but also supplies the fading required for the
echo state property (ESP). At $s_0$, any nonidentity damping string $G$
anticommutes with at least one Pauli string on every nonidentity CRO. Thus, for
$0<r<1$, the linearized period map contracts every traceless mode,
establishing the ESP locally about $s_0$ with the maximally mixed state as the
unique fixed point. Numerical washout tests further support convergence under
finite input perturbations \cite{SupplementalMaterial}.

\begin{figure}[!t]
    \centering
    \includegraphics[width=0.98\linewidth]{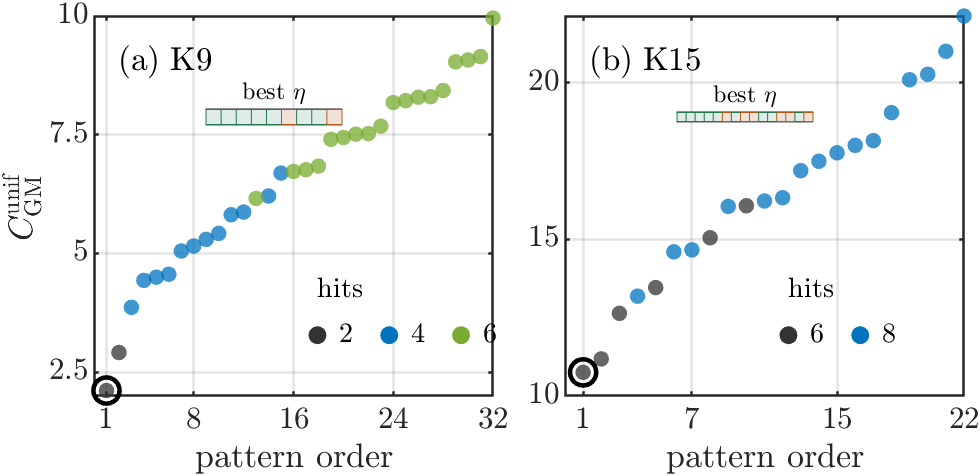}
    \caption{
    Fisher-cost landscape of Clifford damping patterns.
(a),(b) Optimized uniform-window cost \(C_{\rm GM}^{\rm unif}\) for the
distinct admissible hit words generated by the \(K=9\) Singer route
\(\mathcal C_9\) and the \(K=15\) product route \(\mathcal C_{15}\).
Points are ordered by cost and colored by the number of damping hits in one
period.  Open circles mark the best admissible words.  The inset in each
panel shows the corresponding best hit word \(\eta\), with orange cells
denoting \(\eta_j=1\) and green cells denoting \(\eta_j=0\).}
    \label{fig:pattern_scan}
\end{figure}

More flexible fading profiles can be obtained by combining smaller component
CROs into a composite route. Suppose the \(N\)-qubit reservoir is partitioned
into disjoint subsystems containing \(n_\mu\) qubits, with
\(\sum_\mu n_\mu=N\). Each subsystem is assigned a local Clifford generator
\(V_{K_\mu}\) that produces a CRO
\(\{P_a^{(\mu)}\}_{a=1}^{K_\mu}\) of period \(K_\mu\). At global step \(j\),
the \(\mu\)th subsystem occupies the phase
\([j]_\mu\equiv1+[(j-1)\bmod K_\mu]\). The composite generator
\(V_{\rm comp}=\bigotimes_\mu V_{K_\mu}\) therefore transports
\(P_j=\bigotimes_\mu P_{[j]_\mu}^{(\mu)}\), and the resulting route has the
least-common-multiple period
\(K_{\rm comp}=\operatorname{lcm}(K_1,K_2,\ldots)\). For pairwise-coprime
component periods, \(K_{\rm comp}=\prod_\mu K_\mu\), allowing the composite
route to exceed a single \(N\)-qubit Singer period without changing its
\(O(2^N)\) scaling \cite{SupplementalMaterial}.

For factorized operators
\(Q_{\rm comp}=\bigotimes_\mu Q_\mu\) and
\(G_{\rm comp}=\bigotimes_\mu G_\mu\), the global condition
\(\{Q_{\rm comp},G_{\rm comp}\}=0\) requires an odd number of local pairs
\((Q_\mu,G_\mu)\) to anticommute. For each subsystem, let
\(\eta^{(\mu)}\in\{0,1\}^{K_{\rm comp}}\) denote the hit word generated by
\((Q_\mu,G_\mu)\), periodically extended to the common period so that its
\(j\)th entry is evaluated at the local phase \([j]_\mu\). The composite
damping word is then the elementwise modulo-two sum
\begin{equation}
    \eta_{\rm comp}(Q_{\rm comp},G_{\rm comp})
    =
    \bigoplus_\mu \eta^{(\mu)}(Q_\mu,G_\mu) .
    \label{eq:eta_composite_xor}
\end{equation}
This pointwise parity rule can yield fading profiles unavailable to a
single Singer orbit. For pairwise-coprime Singer components, the product route
also has ESP whenever \(G_\mu\neq I\) on every subsystem
\cite{SupplementalMaterial}.

As an illustration of this framework, we examine a three-qubit system ($N=3$),
evaluating damping patterns from both the single Singer CRO ($K=9$) and the
product CRO generated by $V_{15}=V_5\otimes V_3$ ($K_{\rm comp}=15$). We find
32 and 22 distinct admissible hit words for the $K=9$ and
$K_{\rm comp}=15$ routes, respectively, all compatible with
$\{Q,G\}=0$ and local ESP. To
benchmark how these patterns distribute finite-window Fisher information, we
use the uniform-delay GM cost, optimizing \(r\) independently for each damping
word:
\begin{equation}
    C_{\rm GM}^{\rm unif}(\eta_K)
    =
    \min_{0<r<1}
    \frac{1}{K}
    \left[
        \sum_{k=1}^{K}
        \frac{1}{
        \sqrt{(\mathsf H_{\eta_K}(r))_{kk}}
        }
    \right]^2 .
    \label{eq:pattern_cost}
\end{equation}
In Figs.~\ref{fig:pattern_scan}(a) and~\ref{fig:pattern_scan}(b), we
sort the admissible patterns by their minimized costs for $K=9$ and $K=15$.
The cost generally increases with the number of hits and also depends on their
positions. The best words postpone their first hits, thereby preserving the
early-delay QFIs. However, an ESP-compatible word cannot contain more than
$2N-1$ consecutive zeros, which algebraically bounds this undamped prefix
\cite{SupplementalMaterial}.

\textit{Task benchmarks.---}
The Fisher analysis identifies the local response around a stationary input.
We test whether the resulting memory structure remains visible for
finite-amplitude signals, the full recurrent history, and an explicit readout protocol.  We drive three-qubit reservoirs with
\(s_t=s_0+\delta w\,u_t\), where \(u_t\in[-1,1]\) and \(\delta w=0.25\), and
consider the linear and product targets
\(y_t^{\rm lin}(q)=u_{t-q}\) and
\(y_t^{\rm prod}(q)=u_{t-q}u_{t-1}\) for \(q=2,\ldots,20\).

The comparison uses three fixed reservoirs. The two Clifford reservoirs use
the minimum-cost hit words from Fig.~\ref{fig:pattern_scan}, represented by
\((Q,G)=(XIX,ZXI)\) for \(K=9\) and \((XIX,YYX)\) for \(K=15\). For these
tasks, we adjust only the fading parameter, giving \(r=0.35\) and \(0.56\),
respectively. The baseline is a random Ising reservoir with
\(U_{\rm Ising}=e^{-\ii\tau_{\rm Ising}H_{\rm Ising}}\) and
\(H_{\rm Ising}=\sum_{i<j}J_{ij}X_iX_j+h^x\sum_iX_i+h^z\sum_iZ_i\)
\cite{Fujii2017Disordered,MartinezPena2023SpinIPC,Kubota2023Noise}.
It is selected on validation data from an ensemble satisfying the ESP, using
the aggregate linear and product errors over \(q=2,\ldots,8\), and is held
fixed for every task and delay \cite{SupplementalMaterial}.

With exact readout, the feature vector comprises the 63 nonidentity reservoir
Pauli expectations. With finite readout, the same features are reconstructed
from \(N_{\rm shot}=6400\) uniformly randomized local-Pauli measurements using
classical-shadow estimators \cite{Huang2020ClassicalShadows}. The three
reservoirs use the same trajectory lengths, input realizations, and readout
protocol. Only the classical linear readout is refitted for each target.

Figure~\ref{fig:task_benchmark} shows that the fixed \(K=9\) and \(K=15\)
reservoirs retain sharply resolved linear-memory windows, followed by a rapid
loss of accuracy near their respective route periods.  The Ising baseline
instead exhibits a gradual increase in error, and the Clifford thresholds
remain visible under finite-shot readout.  The product task probes the
second-order response and displays a nonmonotonic error profile. Its
main isolated peaks occur when the mixed response vanishes or aliases a
first-order memory direction, as fixed by the Clifford algebra
\cite{SupplementalMaterial}. Within the designed memory windows and away from
these route-specific obstructions, product information remains accessible to
the same randomized local-Pauli readout.

\begin{figure}[t]
    \centering
    \includegraphics[width=0.98\linewidth]{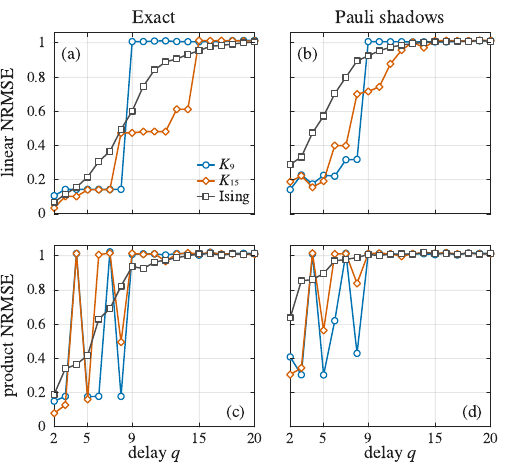}
    \caption{
Fixed-reservoir task benchmark. The top and bottom rows show NRMSE for the
linear target $u_{t-q}$ and product target $u_{t-q}u_{t-1}$, respectively.
The left column uses exact Pauli expectation values, while the right uses
local-Pauli shadows with $N_{\rm shot}=6400$. Blue circles, orange diamonds,
and gray squares denote the $K=9$, $K=15$, and Ising reservoirs. Shaded bands
indicate standard errors.
}
    \label{fig:task_benchmark}
\end{figure}

\textit{Discussion.---}
By casting QRC as a multiparameter estimation problem, we identify
Fisher orthogonality as a design principle for temporal readout under finite
measurements. The single-qubit cycle realizes this principle optimally, while
its Pauli-string extension provides solvable many-body reservoirs with
diagonal and programmable delay-space QFIMs. We further verify that
over the tested system sizes, the Singer routes retain better-conditioned
QFIMs than the sampled random Ising reservoirs and exhibit a
nonlinear-response window that grows linearly with $N$
\cite{SupplementalMaterial}. The routing generators are Clifford, whereas the
full driven dynamics is not because $W_r$ at generic $r$ and generic
finite-amplitude input states $\tau_s$ are non-Clifford. More broadly, our separation of writing, fading, and routing provides a
controlled starting point for future studies of whether and how quantum magic,
non-Markovianity, and temporal coherence can be harnessed in QRC.

\textit{Acknowledgments.---}We thank Zhe-Yu Shi, Chang Liu, Hongye Hu, Pengfei Zhang, Juan Yao and Liao Mao for insightful discussions. This work was supported by the Shanghai Oriental Elite Youth Project (Grant No. QNJY2025111).

\bibliographystyle{apsrev4-2}
\bibliography{refs}

\clearpage
\onecolumngrid

\setcounter{secnumdepth}{2}
\setcounter{section}{0}
\setcounter{subsection}{0}
\setcounter{equation}{0}
\setcounter{figure}{0}
\setcounter{table}{0}
\renewcommand{\thesection}{S\arabic{section}}
\renewcommand{\thesubsection}{\thesection.\arabic{subsection}}
\numberwithin{equation}{section}
\renewcommand{\thefigure}{S\arabic{figure}}
\renewcommand{\thetable}{S\arabic{table}}
\renewcommand{\theHsection}{S\arabic{section}}
\renewcommand{\theHsubsection}{S\arabic{section}.\arabic{subsection}}
\renewcommand{\theHequation}{S\arabic{section}.\arabic{equation}}
\renewcommand{\theHfigure}{S\arabic{figure}}
\renewcommand{\theHtable}{S\arabic{table}}
\renewcommand{\citenumfont}[1]{S#1}
\renewcommand{\bibnumfmt}[1]{[S#1]}

\begin{center}
{\bfseries Supplemental Material for\\
Fisher-Orthogonal Memory in Quantum Reservoir Computing}
\end{center}

\setcounter{tocdepth}{2}
\tableofcontents
\clearpage

This Supplemental Material is organized as follows.  Sections~S1 and~S2
formulate quantum reservoir computing (QRC) memory as a local
multiparameter-estimation problem and state
the corresponding precision bound.  Section~S3 proves the optimality of the
single-qubit Fisher cycle.  Section~S4 develops the many-body construction,
including its Clifford/Singer algebra, loss of initial-state dependence, and
second-order response.  Section~S5 specifies the numerical protocols used for
the figures in the Letter, while Sec.~S6 collects finite-amplitude, finite-copy, and
reservoir-size scaling tests.

\section{QRC Memory as a Local Multiparameter Estimation Problem}
\label{sec:S_local_qfi}

We formulate QRC memory as a local multiparameter estimation problem around
the stationary input \(s_0\).  Each classical input \(s\in[0,1]\) is encoded
into a fresh input qubit,
\begin{equation}
    \ket{\psi_s}
    =
    \cos\!\left(\frac{\pi s}{2}\right)\ket0
    +
    \sin\!\left(\frac{\pi s}{2}\right)\ket1,
    \tau_s=\proj{\psi_s}.
    \label{eq:S_input_encoding}
\end{equation}
For a fixed input-reservoir unitary \(U\), one driven reservoir step is the
channel
\begin{equation}
    \Phi_s(\rho)
    =
    \Tr_{\rm in}\!\left[
        U(\tau_s\otimes\rho)U^\dagger
    \right].
    \label{eq:S_driven_channel}
\end{equation}
Let \(s_0=1/2\), and let \(\rho_\ast\) be the fixed point of
\(\Phi_{s_0}\).  For a backward-ordered input history
\(\bm s=(s_1,\ldots,s_{L_{\rm hist}})\), where \(s_1\) is the most recent
input, the final
reservoir state is
\begin{equation}
    \rho_f(\bm s)
    =
    \Phi_{s_1}\circ\Phi_{s_2}\circ\cdots
    \circ\Phi_{s_{L_{\rm hist}}}(\rho_\ast).
    \label{eq:S_history_state}
\end{equation}
Expanding locally around the constant input history gives
\begin{equation}
    \delta\rho_f
    =
    \sum_{k=1}^{L_{\rm hist}}\Delta_k\,\delta s_k,
    \qquad
    \Delta_k
    =
    \Phi_{s_0}^{k-1}\!\left[\dot\Phi_{s_0}(\rho_\ast)\right],
    \label{eq:S_delta_k}
\end{equation}
where \(\dot\Phi_{s_0}=\partial_s\Phi_s|_{s=s_0}\).  The operators
\(\Delta_k\) are the tangent directions associated with different delay
coordinates.

For a state \(\rho\), define the symmetric-logarithmic-derivative (SLD)
Fisher metric
\begin{equation}
    g_\rho(O_1,O_2)
    =
    \operatorname{Re}\Tr\!\left[
        O_1\,\cJ_\rho^{-1}(O_2)
    \right],
    \qquad
    \cJ_\rho(X)=\frac{\rho X+X\rho}{2}.
    \label{eq:S_qfi_metric}
\end{equation}
For a rank-deficient state, \(\cJ_\rho^{-1}\) denotes the inverse on its
support, equivalently the Moore--Penrose inverse.  This convention is made
explicit by the spectral representation below.
For each delay tangent \(\Delta_k\), its SLD operator \(\mathcal L_k\) at the
operating state is defined by
\[
    \Delta_k
    =
    \cJ_{\rho_\ast}(\mathcal L_k)
    =
    \frac{\rho_\ast\mathcal L_k+\mathcal L_k\rho_\ast}{2},
    \qquad
    \mathcal L_k=\cJ_{\rho_\ast}^{-1}(\Delta_k).
\]
Thus \(\Delta_k\) is the change of the reservoir state caused by the \(k\)th
historical input, while \(\mathcal L_k\) is the corresponding SLD observable.
The delay-space quantum Fisher information matrix (QFIM) is then
\begin{equation}
    (\mathsf H)_{kl}
    =g_{\rho_\ast}(\Delta_k,\Delta_l)
    =\operatorname{Re}\Tr(\rho_\ast\mathcal L_k\mathcal L_l).
    \label{eq:S_H_def}
\end{equation}
Equivalently, if
\(\rho_\ast=\sum_a\lambda_a\ket{a}\bra{a}\), then
\begin{equation}
    (\mathsf H)_{kl}
    =
    2
    \sum_{a,b:\lambda_a+\lambda_b>0}
    \frac{
        \operatorname{Re}\!\left[
            \langle a|\Delta_k|b\rangle
            \langle b|\Delta_l|a\rangle
        \right]
    }{\lambda_a+\lambda_b}.
    \label{eq:S_spectral_H}
\end{equation}
Thus \(\mathsf H\) is a Gram matrix of delayed response operators in the local
SLD Fisher metric.  Its diagonal entries quantify the retained single-delay
quantum Fisher information (QFI), while its off-diagonal entries quantify Fisher overlap between different
delays.  If \(\bm c^T\mathsf H\bm c=0\) for some nonzero coefficient vector
\(\bm c\), then the corresponding temporal combination
\(\sum_k c_k\delta s_k\) is locally
invisible to any measurement.
\textbf{Accordingly, retaining a large response at each delay is not enough:
the QFIM must also keep different delays distinguishable from one another.}

At the maximally mixed state \(\rho_\ast=\id_{\rm R}/d\), where \(d=2^N\), the metric
reduces to
\begin{equation}
    g_{\id_{\rm R}/d}(O_1,O_2)=d\,\Tr(O_1O_2).
    \label{eq:S_maxmixed_metric}
\end{equation}
Since distinct nonidentity Pauli strings are Hilbert-Schmidt orthogonal, this
identity is the basis of the Pauli-cycle construction below.

\section{Measurement Precision and the Gill--Massar Bound}
\label{sec:S_gm_bound}

We now relate the QFIM defined in Sec.~\ref{sec:S_local_qfi} to measurement
precision.  For a smooth \(n_\theta\)-parameter family
\(\rho_{\bm\theta}\), the same construction replaces the delay tangent
\(\Delta_k\) by \(\partial_\mu\rho\) and defines
\(\mathcal L_\mu=\cJ_\rho^{-1}(\partial_\mu\rho)\).  Hence
\(\mathsf H_{\mu\nu}=g_\rho(\partial_\mu\rho,\partial_\nu\rho)
=\operatorname{Re}\Tr(\rho\mathcal L_\mu\mathcal L_\nu)\).
We work on the identifiable tangent subspace,
where \(\mathsf H\) is nonsingular; a singular direction with positive loss
weight has divergent estimation cost.  A positive-operator-valued measure
(POVM) \(\cB=\{E_x\}\) produces probabilities \(p_x=\Tr(\rho E_x)\) and the
classical Fisher information matrix (CFIM)
\begin{equation}
    \mathsf I_{\mu\nu}(\cB)
    =
    \sum_{x:p_x>0}
    \frac{\partial_\mu p_x\,\partial_\nu p_x}{p_x}.
    \label{eq:S_CFI}
\end{equation}
For one copy of a \(d_{\rm sys}\)-dimensional system, the GM trace inequality
states
\begin{equation}
    \Tr\!\left(\mathsf H^{-1}\mathsf I(\cB)\right)\le d_{\rm sys}-1.
    \label{eq:S_GM_trace}
\end{equation}
For \(N_{\rm shot}\) independent copies, the right-hand side is multiplied
by \(N_{\rm shot}\); equivalently,
\(\mathsf I_{\rm tot}=N_{\rm shot}\mathsf I(\cB)\) when the same measurement is applied to
each copy.  The POVM on an individual reservoir copy may be global across
all reservoir qubits.  The additive extension assumes separate, or more
generally separable, measurements across the copies; collective measurements
across different copies are not considered here.

To prove the bound, equip the Hermitian
operator space with
\(\langle O_1,O_2\rangle_\rho=\operatorname{Re}\Tr(\rho O_1O_2)\), and let
\(\mathcal T_{\rm SLD}=\operatorname{span}\{\mathcal L_\mu\}\).  The QFIM is
the Gram matrix of this generally nonorthogonal basis,
\(\mathsf H_{\mu\nu}=\langle\mathcal L_\mu,\mathcal L_\nu\rangle_\rho\).
For one POVM outcome \(E_x\), define
\(\bm\nabla p_x=(\partial_1p_x,\ldots,\partial_{n_\theta}p_x)^T\).  Its entries
are precisely the overlaps with the SLD directions,
\[
    \partial_\mu p_x
    =\Tr[(\partial_\mu\rho)E_x]
    =\langle E_x,\mathcal L_\mu\rangle_\rho.
\]
Writing the orthogonal projection as
\(\Pi_{\mathcal T}E_x=\sum_\mu c_\mu\mathcal L_\mu\), the projection
conditions give
\(\bm\nabla p_x=\mathsf H\bm c\), and hence
\[
    \bm c=\mathsf H^{-1}\bm\nabla p_x,
    \qquad
    \|\Pi_{\mathcal T}E_x\|_\rho^2
    =(\bm\nabla p_x)^T\mathsf H^{-1}(\bm\nabla p_x).
\]
The identity direction carries only probability normalization and is
orthogonal to every SLD because
\(\langle\id,\mathcal L_\mu\rangle_\rho
=\Tr(\partial_\mu\rho)=0\).  The identity component of \(E_x\) is
\(p_x\id\), since \(\langle\id,E_x\rangle_\rho=p_x\) and
\(\|\id\|_\rho^2=1\).  Therefore the SLD projection cannot be longer than
the full component remaining after normalization is removed,
\(E_x-p_x\id\).  This gives
\begin{equation}
    (\bm\nabla p_x)^T\mathsf H^{-1}(\bm\nabla p_x)
    \le
    \|E_x-p_x\id\|_\rho^2
    =
    \Tr(\rho E_x^2)-p_x^2.
    \label{eq:S_projection_bound}
\end{equation}
Since \(0\le E_x\le\id\), we have
\(E_x^2\le\lambda_{\max}(E_x)E_x\le \Tr(E_x)E_x\), and hence
\(\Tr(\rho E_x^2)\le \Tr(E_x)p_x\).  Dividing
Eq.~\eqref{eq:S_projection_bound} by \(p_x\) and summing over the outcomes
gives
\begin{equation}
    \Tr\!\left(\mathsf H^{-1}\mathsf I(\cB)\right)
    =
    \sum_x
    \frac{(\bm\nabla p_x)^T\mathsf H^{-1}(\bm\nabla p_x)}{p_x}
    \le
    \sum_x\left[\frac{\Tr(\rho E_x^2)}{p_x}-p_x\right]
    \le
    \sum_x\left[\Tr(E_x)-p_x\right]
    =
    d_{\rm sys}-1.
    \label{eq:S_GM_proof}
\end{equation}
For each outcome, the quantity
\((\bm\nabla p_x)^T\mathsf H^{-1}(\bm\nabla p_x)/p_x\) measures the SLD-tangent
component resolved by that outcome, normalized by its probability.
Equation~\eqref{eq:S_GM_proof} bounds the sum of these resolved components by
\(d_{\rm sys}-1\) per copy.  A POVM may have arbitrarily many outcomes, but
additional outcomes only divide this fixed total among more probabilities;
they do not create new tangent information.

For a positive loss matrix \(\Omega\), the locally unbiased mean-square error
satisfies
\begin{equation}
    \Tr\!\left[\Omega\,\operatorname{Cov}(\hat{\bm\theta})\right]
    \ge
    \Tr(\Omega \mathsf I_{\rm tot}^{-1}).
    \label{eq:S_weighted_CR}
\end{equation}
Every realizable positive one-copy CFIM belongs to the larger set
\begin{equation}
    \left\{\mathsf I>0:
    \Tr(\mathsf H^{-1}\mathsf I)\le d_{\rm sys}-1\right\}.
    \label{eq:S_GM_relaxed_set}
\end{equation}
Minimizing the right-hand side of Eq.~\eqref{eq:S_weighted_CR} over this
enlarged set can only lower the optimum and therefore gives a
measurement-independent lower bound.  Define
\[
    \mathsf J
    =\mathsf H^{-1/2}\mathsf I\mathsf H^{-1/2},
    \qquad
    \mathsf A
    =\mathsf H^{-1/2}\Omega\mathsf H^{-1/2},
\]
the constraint becomes \(\Tr\mathsf J\le d_{\rm sys}-1\), while the one-copy weighted
cost becomes
\[
    \Tr(\Omega\mathsf I^{-1})
    =\Tr(\mathsf A\mathsf J^{-1}).
\]
The optimum uses the full budget, \(\Tr\mathsf J=d_{\rm sys}-1\).  The matrix
Cauchy--Schwarz inequality gives
\[
    \Tr(\mathsf A\mathsf J^{-1})\,\Tr\mathsf J
    \ge
    \bigl(\Tr\sqrt{\mathsf A}\bigr)^2,
\]
with equality for
\(\mathsf J_\ast=(d_{\rm sys}-1)\sqrt{\mathsf A}/\Tr\sqrt{\mathsf A}\).
The resulting relaxed optimum is
\begin{equation}
    C_{\rm GM}
    =
    \frac{
        \left\{
        \Tr\!\left[
            \left(
            \mathsf H^{-1/2}\Omega \mathsf H^{-1/2}
            \right)^{1/2}
        \right]
        \right\}^2
    }{d_{\rm sys}-1}.
    \label{eq:S_GM_cost_full}
\end{equation}
Thus, for \(N_{\rm shot}\) copies, the corresponding weighted estimation
error obeys
\begin{equation}
    \Tr\!\left[
        \Omega\,\operatorname{Cov}(\hat{\bm\theta})
    \right]
    \ge
    \frac{C_{\rm GM}}{N_{\rm shot}} .
    \label{eq:S_GM_shot_scaling}
\end{equation}
Throughout this work, \(C_{\rm GM}\) denotes this shot-independent error
coefficient rather than the finite-\(N_{\rm shot}\) error itself.
In the main text, the common factor \(1/(d_{\rm sys}-1)\) is omitted when comparing
patterns with the same reservoir dimension.  If \(\mathsf H\) and \(\Omega\) are
diagonal, Eq.~\eqref{eq:S_GM_cost_full} becomes
\begin{equation}
    C_{\rm GM}
    =
    \frac{
        \left(
            \sum_{\mu=1}^{n_\theta}
            \sqrt{\Omega_{\mu\mu}/\mathsf H_{\mu\mu}}
        \right)^2
    }{d_{\rm sys}-1}.
    \label{eq:S_GM_diagonal}
\end{equation}

For measurements performed separately on each reservoir copy,
\(C_{\rm GM}\) depends only on \(\mathsf H\) and \(\Omega\), rather than on a
particular POVM.  It is the coefficient of the GM lower bound, not the error
attained by an arbitrary measurement.  The finite-shot performance of the
local-Pauli readout is evaluated directly from its joint outcome distribution
in Secs.~\ref{sec:S_numerical_protocols}
and~\ref{sec:S_robustness_scaling}.

\section{Optimal Single-Qubit Fisher Cycle}
\label{sec:S_single_qubit}

\subsection{Explicit cycle, channel, and three-delay QFIM}
\label{sec:S_explicit_single_cycle}

The one-qubit reservoir construction in the main text is
\begin{equation}
    \begin{aligned}
    U_r
    &=
    (\id_{\rm in}\otimes V_{\rm cyc})\,
    W_r\,
    (L_{\rm in}\otimes\id_{\rm R}),\\
    L_{\rm in}&=e^{+\ii\pi Y/4},
    \qquad
    V_{\rm cyc}=H_{\rm Had}S^\dagger,\\
    W_r
    &=
    \begin{pmatrix}
    a_r\id_{\rm R} & b_r Z\\
    b_r X & \ii a_r Y
    \end{pmatrix}_{\rm in},
    \end{aligned}
    \label{eq:S_single_U}
\end{equation}
with \(a_r=\sqrt{(1+r)/2}\), \(b_r=\sqrt{(1-r)/2}\), and
\(0<r<1\).  The block structure is in the input-qubit basis, while the Pauli
operators act on the reservoir qubit.  The reservoir Clifford
\(V_{\rm cyc}\) cyclically routes \(Z\mapsto X\mapsto Y\mapsto Z\).

The final factor \(V_{\rm cyc}\) acts after the input qubit is discarded, so
the channel factorizes as
\(\Phi_{s,r}(\rho)=V_{\rm cyc}\Psi_{s,r}(\rho)V_{\rm cyc}^\dagger\).  To derive
\(\Psi_{s,r}\), first rotate the encoded input:
\begin{equation}
    L_{\rm in}\ket{\psi_s}
    =
    \ell_s\ket0+m_s\ket1,
    \qquad
    \ell_s=\frac{\cos(\pi s/2)+\sin(\pi s/2)}{\sqrt2},
    \quad
    m_s=\frac{\sin(\pi s/2)-\cos(\pi s/2)}{\sqrt2}.
    \label{eq:S_rotated_input}
\end{equation}
At the operating point \(s_0=1/2\),
\(\ell_{s_0}=1\), \(m_{s_0}=0\),
\(\dot\ell_{s_0}=0\), and \(\dot m_{s_0}=\pi/2\), where the dot denotes
\(\partial_s\).  The write-store block then gives the two Kraus operators
\begin{equation}
    K_0(s)=\ell_s a_r\id_{\rm R}+m_s b_r Z,
    \qquad
    K_1(s)=\ell_s b_r X+m_s\ii a_rY,
    \label{eq:S_single_kraus}
\end{equation}
so that \(\Psi_{s,r}(\rho)=\sum_{\nu=0,1}K_\nu(s)\rho K_\nu^\dagger(s)\).
Thus \(K_0(s_0)=a_r\id_{\rm R}\), \(K_1(s_0)=b_rX\), while
\(\dot K_0(s_0)=(\pi/2)b_rZ\) and
\(\dot K_1(s_0)=(\pi/2)\ii a_rY\).  Substituting the zeroth-order Kraus
operators gives
\begin{equation}
    \Psi_{s_0,r}(\rho)
    =
    \frac{1+r}{2}\rho+\frac{1-r}{2}X\rho X,
    \label{eq:S_single_channel}
\end{equation}
while differentiating the Kraus representation gives
\begin{equation}
    \begin{aligned}
    \dot\Psi_{s_0,r}(\rho)
    &=
    \sum_{\nu=0,1}
    \left[
        \dot K_\nu(s_0)\rho K_\nu^\dagger(s_0)
        +
        K_\nu(s_0)\rho \dot K_\nu^\dagger(s_0)
    \right] \\
    &=
    \frac{\pi a_rb_r}{2}
    \left[
        Z\rho+\rho Z
        +
        \ii\left(Y\rho X-X\rho Y\right)
    \right].
    \end{aligned}
    \label{eq:S_single_dotPsi}
\end{equation}
At \(\rho_\ast=\id_{\rm R}/2\), the two brackets in
Eq.~\eqref{eq:S_single_dotPsi} contribute equally:
\(Z\rho_\ast+\rho_\ast Z=Z\) and
\(\ii(Y\rho_\ast X-X\rho_\ast Y)=Z\).  Therefore
\begin{equation}
    \left.\partial_s\Psi_{s,r}(\rho_\ast)\right|_{s_0}
    =
    c_r Z,
    \qquad
    c_r=\frac{\pi}{2}\sqrt{1-r^2}.
    \label{eq:S_single_injection}
\end{equation}
The same operating-point channel acts on Pauli components as
\(\Psi_{s_0,r}(X)=X\), \(\Psi_{s_0,r}(Y)=rY\), and
\(\Psi_{s_0,r}(Z)=rZ\).
After the final cyclic routing, the three most recent delayed tangents are
\begin{equation}
    \Delta_1=c_rX,\qquad
    \Delta_2=c_rY,\qquad
    \Delta_3=rc_rZ.
    \label{eq:S_single_Deltas}
\end{equation}
Since \(g_{\id_{\rm R}/2}(O_1,O_2)=2\Tr(O_1O_2)\),
Eq.~\eqref{eq:S_single_Deltas} gives
\begin{equation}
    \mathsf H^{(3)}(r)
    =
    \pi^2(1-r^2)\operatorname{diag}(1,1,r^2).
    \label{eq:S_single_H3}
\end{equation}

\subsection{Global optimality of the single-qubit cycle}

The proof rests on three facts.  First, injecting a pure qubit gives a
rank-two reservoir channel whose contraction spectrum is unchanged by Fisher
normalization.  Second, QFI contractivity imposes a common budget on fresh
writing and delayed retention.  Third, a chronological QR decomposition
measures how much independent Fisher information each additional delay
contributes.  Once these facts are separated, the GM bound follows from a
single norm inequality.

We begin by fixing the tangent-space coordinates.  Write a qubit state as
\(\rho=(\id+\bm x\cdot\bm\sigma)/2\).  A Bloch tangent \(\delta\bm x\),
corresponding to
\(\delta\rho=(\delta\bm x\cdot\bm\sigma)/2\), has squared SLD Fisher length
\[
    g_\rho(\delta\rho,\delta\rho)
    =
    \delta\bm x^T\mathcal G(\bm x)\delta\bm x,
\]
where \(\mathcal G(\bm x)\) is the positive metric matrix below.  It is
not sufficient to compare tangents through their ordinary Bloch-vector
lengths, because the local distinguishability of a perturbation depends on
both the operating state and its direction.  To compare the information
carried by tangents before and after a channel, we therefore use the
Fisher-normalized coordinate
\[
    \bm\xi=\mathcal G(\bm x)^{1/2}\delta\bm x.
\]
\textbf{The coordinate \(\bm\xi\) introduces no new physical degree of
freedom: it is the same state perturbation measured in units of local quantum
distinguishability.}  By construction,
\[
    g_\rho(\delta\rho,\delta\rho)=\|\bm\xi\|^2,
\]
so its squared Euclidean norm is exactly the SLD QFI carried by the
perturbation.  In these coordinates, the action of a channel directly
quantifies how much of that information is retained.

A qubit channel
\(\bm x_{\rm out}=A\bm x_{\rm in}+\bm t\) propagates infinitesimal tangents as
\(\delta\bm x_{\rm out}=A\delta\bm x_{\rm in}\); the affine shift \(\bm t\)
does not enter this derivative.  Because the input and output states generally
carry different local metrics, their normalized tangent coordinates obey
\[
    \bm\xi_{\rm out}
    =
    \mathcal G(\bm x_{\rm out})^{1/2}\delta\bm x_{\rm out}
    =
    \mathcal G(\bm x_{\rm out})^{1/2}A\delta\bm x_{\rm in}
    =
    \mathcal G(\bm x_{\rm out})^{1/2}A
    \mathcal G(\bm x_{\rm in})^{-1/2}\bm\xi_{\rm in}.
\]
This motivates the Fisher-normalized tangent map
\begin{equation}
    \mathcal G(\bm x)
    =
    \id_3+\frac{\bm x\bm x^T}{1-\|\bm x\|^2}
    =
    (\id_3-\bm x\bm x^T)^{-1},
    \qquad
    \widehat A_{\bm x_{\rm in}\to\bm x_{\rm out}}
    =
    \mathcal G(\bm x_{\rm out})^{1/2}
    A
    \mathcal G(\bm x_{\rm in})^{-1/2}.
    \label{eq:S_fixed_point_metric}
\end{equation}
Thus \(\widehat A\) is not a new physical channel.  It is the derivative of
the same channel expressed in coordinates in which QFI length is Euclidean:
the right factor converts a normalized input vector into a Bloch tangent,
\(A\) propagates that tangent, and the left factor normalizes the output in
its own Fisher metric.  Consequently,
\[
    \frac{
        g_{\rho_{\rm out}}(\delta\rho_{\rm out},\delta\rho_{\rm out})
    }{
        g_{\rho_{\rm in}}(\delta\rho_{\rm in},\delta\rho_{\rm in})
    }
    =
    \frac{\|\widehat A\bm\xi_{\rm in}\|^2}{\|\bm\xi_{\rm in}\|^2}.
\]
The singular values of \(\widehat A\) are therefore the principal
single-step retention factors for QFI amplitudes, rather than for ordinary
Bloch-vector lengths.  Their pairwise products control the area of the
parallelogram spanned by two Fisher-normalized tangents and hence their
simultaneous two-direction retention.  At a physical
fixed point, \(\bm x_{\rm in}=\bm x_{\rm out}=\bm x_\ast\), so the map reduces
to
\[
    \widehat A_\ast
    =
    \mathcal G_\ast^{1/2}A\mathcal G_\ast^{-1/2},
    \qquad
    \mathcal G_\ast=\mathcal G(\bm x_\ast).
\]

\begin{lemma}[Rank-two Fisher spectrum]
For a qubit channel induced by a pure input qubit, let \(\bm x_\ast\) be a
full-rank fixed point, and let \(\operatorname{sv}\) denote the ordered
singular-value spectrum.  The Fisher-normalized map satisfies
\begin{equation}
    \operatorname{sv}(\widehat A_\ast)
    =
    \operatorname{sv}(A)
    =
    (\nu_1,\nu_2,\kappa),
    \qquad
    \kappa=\nu_1\nu_2=\sqrt{|\det A|},
    \label{eq:S_metric_contraction_identity}
\end{equation}
with \(1\ge\nu_1\ge\nu_2\ge\kappa\ge0\).  In particular,
\(\sigma_{\min}(\widehat A_\ast)=\kappa\).  Moreover, for any two tangent
coordinates \(\bm u,\bm v\),
\[
    \|\widehat A_\ast\bm u\times\widehat A_\ast\bm v\|
    \le
    \kappa\|\bm u\times\bm v\|.
\]
Thus \(\kappa\) is also the largest possible one-step retention factor for
the area spanned by two Fisher-normalized tangents.
\end{lemma}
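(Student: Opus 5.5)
The plan has two parts. First, I establish the spectral structure of the bare Bloch map \(A\) of a Kraus-rank-two qubit channel. Second, I transfer it to the Fisher-normalized map \(\widehat A_\ast\), where the main obstacle lies. The area statement then follows from linear algebra. Because the input qubit is pure, \(\Phi(\rho)=\sum_{\nu=0,1}K_\nu\rho K_\nu^\dagger\) has at most two Kraus operators. Pre- and post-composing with reservoir unitaries changes \(A\) to \(R_1AR_2\) with \(R_i\in SO(3)\). It also moves the fixed point, \(\bm x_\ast\mapsto R_1\bm x_\ast\) or its analogue, and covariantly conjugates the metric \(\mathcal G\), since \(\mathcal G(R\bm x)=R\,\mathcal G(\bm x)R^T\). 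All singular-value statements are therefore invariant, and I may pass to the Ruskai--Szarek--Werner canonical form for channels of Kraus rank \(\le 2\):
\[
A=\operatorname{diag}(\cos u,\cos v,\cos u\cos v),\qquad \bm t=(0,0,\sin u\sin v).
\]
I would verify this directly from a two-Kraus parametrization rather than only citing it. It gives \(\operatorname{sv}(A)=(\nu_1,\nu_2,\nu_1\nu_2)\) with \(\nu_1=|\cos u|\ge\nu_2=|\cos v|\). Hence \(\kappa=\nu_1\nu_2=\sqrt{|\det A|}\), since \(|\det A|=\nu_1^2\nu_2^2\). The chain \(1\ge\nu_1\ge\nu_2\ge\kappa\) holds because \(\nu_1\le1\) and \(\kappa=\nu_1\nu_2\le\nu_2\). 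The case \(\cos u=\pm1\), which may arise after relabeling axes, is handled by continuity.

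Next comes the Fisher normalization, which I expect to be the main obstacle. Since \(\widehat A_\ast=\mathcal G_\ast^{1/2}A\mathcal G_\ast^{-1/2}\) is a similarity transform, \(\det\widehat A_\ast=\det A\) is free. Equality of the full singular spectra is not automatic, however, because \(\mathcal G_\ast\) is not orthogonal. My first attempt is explicit. In the canonical frame the fixed point solves \(\bm x_\ast=A\bm x_\ast+\bm t\), so it lies on the third axis, with \(x_3=\sin u\sin v/(1-\cos u\cos v)\). Then \(\mathcal G_\ast=\operatorname{diag}(1,1,(1-x_3^2)^{-1})\) is diagonal and commutes with the diagonal \(A\), giving \(\widehat A_\ast=A\) exactly.

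The real difficulty is that the canonical form is reached by rotations \(R_1\ne R_2\) in general. The fixed point of the original channel then corresponds not to the fixed point of the canonical map but to the solution of \(\bm x_\ast=R_1AR_2\bm x_\ast+R_1\bm t\). I would therefore compute \(\widehat A_\ast^T\widehat A_\ast=\mathcal G_\ast^{-1/2}A^T\mathcal G_\ast A\,\mathcal G_\ast^{-1/2}\) in this general setting. The task is to show that it is isospectral to \(A^TA\). Using \(\mathcal G_\ast=(\id_3-\bm x_\ast\bm x_\ast^T)^{-1}\) and the fixed-point equation, this is a rank-one-perturbation computation. QFI monotonicity, \(A^T\mathcal G(\bm x_{\rm out})A\le\mathcal G(\bm x_{\rm in})\), gives \(\|\widehat A_\ast\|\le1\) as a consistency check. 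If the direct computation stalls, my fallback is to use the rank-two saturation of that monotonicity: the extremal channel's Stinespring isometry preserves the SLD metric on its relevant subspace. That would identify the singular directions of \(\widehat A_\ast\) with those of \(A\).

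Finally, the area bound. For any \(3\times3\) matrix \(M\) one has \(M\bm u\times M\bm v=\operatorname{cof}(M)(\bm u\times\bm v)\). The singular values of \(\operatorname{cof}(M)\) are the pairwise products of those of \(M\). With \(\operatorname{sv}(\widehat A_\ast)=(\nu_1,\nu_2,\kappa)\), the largest such product is \(\max(\nu_1\nu_2,\nu_1\kappa,\nu_2\kappa)=\nu_1\nu_2=\kappa\), since \(\kappa\le\nu_2\le\nu_1\le1\). This yields \(\|\widehat A_\ast\bm u\times\widehat A_\ast\bm v\|\le\kappa\|\bm u\times\bm v\|\). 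The bound is attained when \(\bm u,\bm v\) span the top two singular directions, so \(\kappa\) is the optimal one-step area-retention factor. The identity \(\sigma_{\min}(\widehat A_\ast)=\kappa\) is immediate from the ordering.
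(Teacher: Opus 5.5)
\paragraph{Gap: the isospectrality step is never proved.} Most of your proposal is correct. The Ruskai--Szarek--Werner form gives $\operatorname{sv}(A)=(\nu_1,\nu_2,\nu_1\nu_2)$ and the ordering, and your cofactor argument for the area bound is correct. The gap is the one claim that carries the lemma: $\operatorname{sv}(\widehat A_\ast)=\operatorname{sv}(A)$ when the fixed point does not lie on the canonical axis. You correctly flag this as the main obstacle, but you only describe it as ``a rank-one-perturbation computation'' and never carry it out.

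Your first paragraph also claims too much. Under $A\mapsto R_1AR_2$, $\bm t\mapsto R_1\bm t$, the new fixed point is in general not a rotated copy of the old one, so $\widehat A_\ast$ is not simply conjugated. Your explicit check $\widehat A_\ast=A$ therefore covers only the special case in which both canonical frames see the same point.

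The missing step is the paper's two-frame reduction. Write $A=R_{\rm out}DR_{\rm in}$ and pull the rotations through the metric. This gives $\widehat A_\ast=R_{\rm out}\,\mathcal G(\bm x_L)^{1/2}D\,\mathcal G(\bm x_R)^{-1/2}R_{\rm in}$, with two generally distinct points $\bm x_R=R_{\rm in}\bm x_\ast$ and $\bm x_L=R_{\rm out}^T\bm x_\ast=D\bm x_R+\bm t_D$ of equal norm. One must then prove
\[
\det(D\mathcal G_LD-\lambda\mathcal G_R)=\det\mathcal G_R\,(\nu_1^2-\lambda)(\nu_2^2-\lambda)(\kappa^2-\lambda).
\]
This is a rank-two perturbation of $D^2-\lambda\id_3$ (one rank-one term from each metric), not a rank-one one. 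The cancellation is not generic: it needs $\kappa=\nu_1\nu_2$, $t_3^2=(1-\nu_1^2)(1-\nu_2^2)$ and $\|\bm x_L\|=\|\bm x_R\|$ together. For distinct $\nu_1,\nu_2,\kappa$, the matrix determinant lemma reduces the identity to vanishing residues at $\lambda=\nu_1^2,\nu_2^2,\kappa^2$. Writing $\bm x_R=(x_{R,1},x_{R,2},x_{R,3})$, the residue at $\nu_1^2$ is proportional to $x_{R,1}^2$ times
\[
(1-\nu_1^2)(1-\|\bm x_R\|^2)-(\nu_1^2-\nu_2^2)x_{R,2}^2-\frac{[\nu_1^2(1-\nu_2^2)x_{R,3}-\kappa t_3]^2}{\nu_1^2(1-\nu_2^2)},
\]
and this bracket vanishes exactly because of those three relations. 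Your proposal never says where the rank-two identities enter, yet they are the entire content of the lemma; the statement fails at higher Kraus rank.

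\paragraph{Your fallback aims at a false statement.} The singular directions of $\widehat A_\ast$ are generally not those of $A$; only the singular values agree. Take amplitude damping $D=\operatorname{diag}(2^{-1/2},2^{-1/2},1/2)$, $\bm t_D=\bm e_Z/2$, followed by the rotation about $Y$ that sends $(1/2,0,1/2)$ to $(2^{-1/2},0,0)$. Then $\bm x_\ast=(2^{-1/2},0,0)$ is a full-rank fixed point and $A^TA=\operatorname{diag}(1/2,1/2,1/4)$, but
\[
\widehat A_\ast^T\widehat A_\ast=\begin{pmatrix}3/8&0&1/8\\0&1/2&0\\1/8&0&3/8\end{pmatrix},
\]
whose eigenvalue $1/4$ lies along $(\bm e_X-\bm e_Z)/\sqrt2$ rather than $\bm e_Z$. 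Moreover, QFI monotonicity, or any Stinespring saturation argument, only yields $\|\widehat A_\ast\|_{\rm op}\le1$, or at best directions with singular value one. It does not pin down $\nu_2$ or $\kappa=\sigma_{\min}(\widehat A_\ast)$, which are exactly what the lemma asserts and what the subsequent GM argument uses. The isospectrality therefore has to come from the explicit two-frame determinant identity above (or an equivalent argument that genuinely uses the rank-two relations), and your proposal leaves this open.
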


\begin{proof}
Choose the input basis at the operating point so that
\(\ket{\psi_{s_0}}=\ket0\).  Tracing out this pure input leaves at most two
Kraus operators.  Up to a Kraus-space rotation and unitary transformations
at the channel endpoints, every such qubit channel admits the canonical form
\cite{SMRuskai2002QubitChannels}
\begin{equation}
    F_0
    =
    \begin{pmatrix}
        \cos\gamma_2&0\\
        0&\cos\gamma_1
    \end{pmatrix},
    \qquad
    F_1
    =
    \begin{pmatrix}
        0&\sin\gamma_1\\
        \sin\gamma_2&0
    \end{pmatrix},
    \label{eq:S_rank_two_normal_form}
\end{equation}
where axis permutations and paired sign changes allow
\(0\le\gamma_2\le\gamma_1\) and
\(\gamma_1+\gamma_2\le\pi/2\).  Direct substitution
gives
\begin{equation}
    \begin{gathered}
    \bm x\longmapsto D\bm x+\bm t_D,\qquad
    D=\operatorname{diag}(\nu_1,\nu_2,\kappa),\qquad
    \bm t_D=t_3\bm e_Z,\\
    \nu_1=\cos(\gamma_1-\gamma_2),\quad
    \nu_2=\cos(\gamma_1+\gamma_2),\quad
    \kappa=\nu_1\nu_2,\quad
    t_3^2=(1-\nu_1^2)(1-\nu_2^2),\quad
    1\ge\nu_1\ge\nu_2\ge0.
    \end{gathered}
    \label{eq:S_rank_two_parameters}
\end{equation}
Here \(\bm e_X,\bm e_Y,\bm e_Z\) denote the Cartesian unit vectors of Bloch
space.  Endpoint rotations restore the full affine channel as
\(A=R_{\rm out}DR_{\rm in}\) and
\(\bm t=R_{\rm out}\bm t_D\), without changing
\(\kappa=\sqrt{|\det A|}=\nu_1\nu_2\), where
\(R_{\rm in},R_{\rm out}\in SO(3)\).

At the fixed point, the affine equation in the physical Bloch frame becomes
\begin{equation}
    \bm x_\ast
    =R_{\rm out}\!\left(DR_{\rm in}\bm x_\ast+\bm t_D\right)
    \quad\Longleftrightarrow\quad
    \underbrace{R_{\rm out}^T\bm x_\ast}_{\bm x_L}
    =D\underbrace{R_{\rm in}\bm x_\ast}_{\bm x_R}+\bm t_D .
    \label{eq:S_canonical_fixed_coordinates}
\end{equation}
Thus \(\bm x_R\) and \(\bm x_L\) are coordinates of the same physical fixed
point in the input and output canonical frames, and
\(\|\bm x_R\|=\|\bm x_L\|=\|\bm x_\ast\|\).  Substituting
\(A=R_{\rm out}DR_{\rm in}\) into Eq.~\eqref{eq:S_fixed_point_metric} and
using
\(\mathcal G(R\bm x)^{\pm1/2}=R\mathcal G(\bm x)^{\pm1/2}R^T\) gives
\[
    \begin{aligned}
    \widehat A_\ast
    &=\mathcal G(\bm x_\ast)^{1/2}
      R_{\rm out}DR_{\rm in}
      \mathcal G(\bm x_\ast)^{-1/2}\\
    &=R_{\rm out}
      \mathcal G(\bm x_L)^{1/2}D
      \mathcal G(\bm x_R)^{-1/2}
      R_{\rm in}
      \equiv R_{\rm out}\widehat D_\ast R_{\rm in}.
    \end{aligned}
\]
Thus \(\widehat A_\ast\) and \(\widehat D_\ast\) have the same singular
values.

The squared singular values of \(\widehat D_\ast\) are the eigenvalues of
\(\widehat D_\ast^T\widehat D_\ast\).  For compactness, write
\(\mathcal G_R=\mathcal G(\bm x_R)\) and
\(\mathcal G_L=\mathcal G(\bm x_L)\).  The metric square roots can be
removed before taking the determinant because
\[
    \widehat D_\ast^T\widehat D_\ast-\lambda\id_3
    =
    \mathcal G_R^{-1/2}
    \bigl(D\mathcal G_LD-\lambda\mathcal G_R\bigr)
    \mathcal G_R^{-1/2}.
\]
Substituting
\(\bm x_L=D\bm x_R+t_3\bm e_Z\), the explicit metric in
Eq.~\eqref{eq:S_fixed_point_metric}, and the rank-two identities in
Eq.~\eqref{eq:S_rank_two_parameters} reduces the determinant to a polynomial
in \(\lambda\).  The fixed-point relation
\(\|\bm x_L\|=\|\bm x_R\|\) cancels the remaining dependence on the
fixed-point coordinates, leaving
\begin{equation}
    \begin{aligned}
    \det(\widehat D_\ast^T\widehat D_\ast-\lambda\id_3)
    &=
    \frac{
    \det\!\left(D\mathcal G_LD-\lambda\mathcal G_R\right)
    }{
    \det\mathcal G_R
    }
    &=
    (\nu_1^2-\lambda)
    (\nu_2^2-\lambda)
    (\kappa^2-\lambda).
    \end{aligned}
    \label{eq:S_metric_characteristic_factorization}
\end{equation}
Hence
\(\operatorname{sv}(\widehat D_\ast)=(\nu_1,\nu_2,\kappa)\).  The
endpoint rotations then give the same spectrum for \(\widehat A_\ast\), while
\(A=R_{\rm out}DR_{\rm in}\) gives it for \(A\).  Notice that
\(\widehat A_\ast=\mathcal G_\ast^{1/2}A\mathcal G_\ast^{-1/2}\) is similar
to \(A\), so equality of their eigenvalues would hold for any qubit channel.
\textbf{The nontrivial result here is the equality of their singular values,
which follows from the Kraus-rank-two identities in
Eq.~\eqref{eq:S_rank_two_parameters} and need not hold at higher Kraus rank.}
Finally, a linear map with
singular values \(s_1\ge s_2\ge s_3\) can enlarge any two-dimensional area
by at most \(s_1s_2\).  Here this factor is
\(\nu_1\nu_2=\kappa\), proving the area bound and the lemma.
\end{proof}

The spectrum lemma quantifies how an existing reservoir tangent survives one
driven step.  Memory performance also depends on how strongly that same step
writes the newly arriving input.  \textbf{Fresh writing and delayed retention
are not independent resources: the same input--reservoir unitary must
accommodate both within a common QFI budget.}  To express this tradeoff, vary
only the fresh input while keeping the incoming reservoir at \(\rho_\ast\),
and define the Fisher-normalized write vector \(\bm w\) by
\begin{equation}
    \dot\Phi_{s_0}(\rho_\ast)
    =
    \frac{\bm g_{\rm w}\cdot\bm\sigma}{2},
    \qquad
    \bm w=\mathcal G_\ast^{1/2}\bm g_{\rm w},
    \qquad
    h_{\rm w}=\|\bm w\|^2,
    \qquad
    \bm n_{\rm w}=\bm w/\sqrt{h_{\rm w}}.
    \label{eq:S_fresh_write_QFI}
\end{equation}
Here \(\bm\sigma=(X,Y,Z)\) is the Pauli vector.

The common budget follows from the monotonicity of QFI under a
parameter-independent quantum channel.  Any POVM \(\{E_x\}\) performed after
a channel \(\Lambda\)
can be pulled back to the POVM \(\{\Lambda^\dagger(E_x)\}\) before the
channel, with exactly the same outcome probabilities.  \textbf{Every
measurement available after the channel is therefore already represented
before it, so the QFI cannot increase.}  Applying this principle jointly to
the fresh input tangent and the stored reservoir tangent gives the following
writing--retention budget.

\begin{lemma}[Writing--retention budget]
For the pure input encoding used here, whose QFI is \(\pi^2\), the fresh
write strength obeys
\begin{equation}
    h_{\rm w}\le\pi^2(1-\kappa^2).
    \label{eq:S_single_write_bound}
\end{equation}
Equality requires \(\bm n_{\rm w}\) to be a left singular direction of
\(\widehat A_\ast\) with singular value \(\kappa\), together with saturation
of the joint QFI contraction along that direction.
\end{lemma}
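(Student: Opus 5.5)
The plan is to apply QFI monotonicity to a \emph{joint} multiparameter family, not just to the fresh input alone. Before the channel, the state is the product \(\tau_s\otimes\rho_{\bm\theta}\). Here \(s\) is the fresh input. The parameter \(\bm\theta\in\mathbb R^3\) moves the reservoir away from \(\rho_\ast\) along Bloch tangents, parametrized in the Fisher-normalized coordinate \(\bm\xi=\mathcal G_\ast^{1/2}\delta\bm x\). For a product state the SLD metric splits additively and the two tangents are orthogonal, because \(\Tr(\partial_s\tau_s)=0\). So the input QFIM is block diagonal, \(\operatorname{diag}(\pi^2,\id_3)\): the pure encoding has QFI \(\pi^2\), and the \(\bm\xi\) coordinates are Euclidean by construction. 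The unitary \(U\) preserves this QFIM, and the partial trace over the input is a parameter-independent channel. By the pull-back argument stated just before the lemma, applied to the full QFIM in matrix form, the output QFIM is bounded in the positive-semidefinite order by the input QFIM. Concretely, applying the scalar monotonicity to every one-parameter combination \(a\,\delta s+\bm\xi\cdot\delta\bm\theta\) gives the matrix inequality.

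Next I compute the output tangents at the fixed point. The \(s\)-tangent is \(\dot\Phi_{s_0}(\rho_\ast)\), whose normalized vector is \(\bm w\). The \(\bm\xi\)-tangent is propagated by \(\widehat A_\ast\), and the output metric is again \(\mathcal G_\ast\) because \(\rho_\ast\) is fixed. Monotonicity therefore reads
\[
\|a\bm w+\widehat A_\ast\bm\xi\|^2\le \pi^2a^2+\|\bm\xi\|^2\quad\text{for all }a,\bm\xi .
\]
This is equivalent to positivity of the block matrix
\[
\begin{pmatrix}\pi^2-h_{\rm w} & -\bm w^T\widehat A_\ast\\ -\widehat A_\ast^T\bm w & \id_3-\widehat A_\ast^T\widehat A_\ast\end{pmatrix}\ge0 .
\]

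Taking the Schur complement, and using the SVD \(\widehat A_\ast=\sum_i s_i\bm u_i\bm v_i^T\), gives
\[
\pi^2-h_{\rm w}\ \ge\ \sum_i\frac{s_i^2}{1-s_i^2}(\bm w\cdot\bm u_i)^2 .
\]
By the Rank-two Fisher spectrum lemma, every \(s_i\ge\kappa\). The function \(s\mapsto s^2/(1-s^2)\) is increasing, so the right-hand side is at least \(\kappa^2h_{\rm w}/(1-\kappa^2)\). Rearranging yields \(h_{\rm w}\le\pi^2(1-\kappa^2)\).

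Equality forces two conditions. First, \(\bm w\) must have no weight on singular directions with \(s_i>\kappa\), so \(\bm n_{\rm w}\) is a left singular direction with value \(\kappa\). Second, the Schur-complement inequality must be tight, which is exactly saturation of the joint QFI contraction along that direction.

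The main obstacle I expect is the degenerate case where some singular value equals \(1\) (e.g.\ \(\nu_1=1\)), so that \(\id-\widehat A_\ast^T\widehat A_\ast\) is singular. There I would use the generalized Schur complement: block positivity forces \(\bm w\perp\bm u_i\) whenever \(s_i=1\), and the bound then follows on the complement. The case \(\kappa=1\) gives \(h_{\rm w}=0\) directly. A second point needing care is justifying the matrix form of QFI monotonicity. I would argue it by applying the scalar version to each one-parameter combination, which gives the stated quadratic-form inequality without further work.
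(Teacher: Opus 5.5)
Your proposal is correct and is essentially the paper's argument. Both proofs apply QFI monotonicity to joint $(\delta s,\delta\bm x)$ tangents in Fisher-normalized coordinates to obtain $\|(\bm w/\pi\;\;\widehat A_\ast)\|_{\rm op}\le1$, then finish with $\sigma_{\min}(\widehat A_\ast)=\kappa$ from the spectrum lemma. The only difference is that the paper uses the equivalent dual form $\bm w\bm w^T/\pi^2+\widehat A_\ast\widehat A_\ast^T\preceq\id_3$ and projects it onto $\bm n_{\rm w}$, which gives $h_{\rm w}/\pi^2+\kappa^2\le1$ in one line and avoids your generalized Schur-complement treatment of $\nu_1=1$ (precisely the case realized by the optimal cycle).
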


\begin{proof}
Let \(\delta\bm x\) be an incoming reservoir tangent carrying earlier input
information.  \textbf{Contractivity must hold for every joint direction
\((\delta s,\delta\bm x)\), not only for the fresh-input direction
\((\delta s,0)\) and the stored-reservoir direction
\((0,\delta\bm x)\) separately.}  This includes arbitrary linear
combinations of fresh and stored information.  To first order, one driven
step maps such a joint tangent to
\(\delta\bm y=\bm g_{\rm w}\delta s+A\delta\bm x\).  Introduce the
Fisher-normalized four-component input coordinate and its corresponding
output map,
\begin{equation}
\begin{gathered}
    \bm z\equiv
    \binom{\pi\delta s}{\mathcal G_\ast^{1/2}\delta\bm x},
    \qquad
    \mathsf T\equiv
    \begin{pmatrix}\bm w/\pi&\widehat A_\ast\end{pmatrix},\\[-2pt]
    \mathcal G_\ast^{1/2}\delta\bm y
    =\mathsf T\bm z
    =\frac{\bm w}{\pi}(\pi\delta s)
    +\widehat A_\ast\mathcal G_\ast^{1/2}\delta\bm x .
\end{gathered}
\label{eq:S_output_QFI_budget}
\end{equation}
The input and output QFIs are therefore \(\|\bm z\|^2\) and
\(\|\mathsf T\bm z\|^2\), respectively.  Contractivity for every
\(\bm z\) is equivalent to an operator-norm bound, which may be written on
the three-dimensional output space as
\begin{equation}
\begin{aligned}
    \|\mathsf T\bm z\|^2\le\|\bm z\|^2
    \quad\text{for all }\bm z
    &\Longleftrightarrow
    \|\mathsf T\|_{\rm op}\le1
    \Longleftrightarrow
    \mathsf T\mathsf T^T\preceq\id_3,\\
    \mathsf T\mathsf T^T
    &=
    \frac{\bm w\bm w^T}{\pi^2}
    +\widehat A_\ast\widehat A_\ast^T
    \preceq\id_3 .
\end{aligned}
\label{eq:S_output_QFI_matrix_budget}
\end{equation}
Here \(\preceq\) denotes the positive-semidefinite matrix order.  Although
\(\|\mathsf T\bm z\|^2\) contains a cross term between the fresh and stored
tangents, requiring the inequality for every \(\bm z\) is exactly equivalent
to the operator bound above; no cross term has been discarded.  Equivalently,
no linear combination of fresh and stored tangent information can gain QFI
through the joint channel.
Projecting its matrix inequality onto \(\bm n_{\rm w}\) gives
\begin{equation}
    \frac{h_{\rm w}}{\pi^2}
    +
    \|\widehat A_\ast^T\bm n_{\rm w}\|^2
    \le1.
\end{equation}
The spectrum lemma implies
\(\|\widehat A_\ast^T\bm n_{\rm w}\|\ge\kappa\), proving
Eq.~\eqref{eq:S_single_write_bound} and its equality condition.
\end{proof}

The three normalized delayed directions are
\(\bm d_k=\widehat A_\ast^{\,k-1}\bm n_{\rm w}\).  Let
\(\mathsf D_j=(\bm d_1,\ldots,\bm d_j)\), so the three-delay QFIM is
\(\mathsf H_U=h_{\rm w}\mathsf D_3^T\mathsf D_3\).  Its chronological QR
decomposition gives
\begin{equation}
    \sqrt{h_{\rm w}}\,\mathsf D_3
    =
    \mathsf O_{\rm QR}\mathsf C,
    \qquad
    \mathsf O_{\rm QR}^T\mathsf O_{\rm QR}=\id_3,
    \qquad
    \mathsf H_U=\mathsf C^T\mathsf C .
    \label{eq:S_causal_QR}
\end{equation}
We choose the standard QR convention in which \(\mathsf C\) is upper
triangular and \(\mathsf C_{kk}>0\).  The columns of
\(\mathsf O_{\rm QR}\) are the orthonormalized delay directions; this symbol
is distinct from the Pauli write operator \(Q\).

\begin{lemma}[Independent Fisher information from successive delays]
The diagonal entries of the chronological QR factor satisfy
\begin{equation}
    \mathsf C_{11}=\sqrt{h_{\rm w}},
    \qquad
    \mathsf C_{22}\le\nu_1\mathsf C_{11},
    \qquad
    \mathsf C_{33}\le\kappa\mathsf C_{11}.
    \label{eq:S_causal_Fisher_bounds}
\end{equation}
Moreover, \textbf{\(\mathsf C_{kk}^2\) is the independent Fisher information
added by the \(k\)th delay after the preceding delayed directions have been
projected out.}
\end{lemma}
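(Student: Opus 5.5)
We will use the standard geometric reading of the chronological QR decomposition. Let \(\bm q_k\) denote the \(k\)th column of \(\mathsf O_{\rm QR}\). Gram--Schmidt applied in chronological order to the columns \(\sqrt{h_{\rm w}}\,\bm d_k\) gives
\[
\mathsf C_{kk}=\sqrt{h_{\rm w}}\,\bigl\|\Pi_{k-1}^{\perp}\bm d_k\bigr\|,
\]
where \(\Pi_{k-1}^{\perp}\) is the orthogonal projector onto the complement of \(\operatorname{span}\{\bm d_1,\ldots,\bm d_{k-1}\}\). In the Fisher-normalized coordinates, Euclidean length squared is QFI. So \(\mathsf C_{kk}^2\) is exactly the Fisher length of the \(k\)th delayed tangent that remains after the components already explained by earlier delays are removed.

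This interpretation also follows from \(\mathsf H_U=\mathsf C^T\mathsf C\). The ratio \(\det\mathsf H_U^{(k)}/\det\mathsf H_U^{(k-1)}\) of leading principal minors equals \(\mathsf C_{kk}^2\). Equivalently, \(\mathsf C_{kk}^{-2}=[(\mathsf H_U^{(k)})^{-1}]_{kk}\), which is the Schur-complement (nuisance-parameter-adjusted) Fisher information for delay \(k\) given delays \(1,\ldots,k-1\). This settles the \textbf{interpretive} claim. The first bound is immediate: \(\bm d_1=\bm n_{\rm w}\) is a unit vector, so \(\mathsf C_{11}=\sqrt{h_{\rm w}}\).

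For \(\mathsf C_{22}\), we drop the projection and only use \(\|\Pi_1^\perp \bm d_2\|\le\|\bm d_2\|=\|\widehat A_\ast\bm n_{\rm w}\|\le\sigma_{\max}(\widehat A_\ast)\). By the rank-two spectrum lemma, \(\sigma_{\max}(\widehat A_\ast)=\nu_1\), which gives \(\mathsf C_{22}\le\nu_1\mathsf C_{11}\).

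For \(\mathsf C_{33}\) we use the area-retention part of the same lemma. In three dimensions,
\[
\mathsf C_{22}\mathsf C_{33}
= h_{\rm w}\,\bigl|\det(\bm n_{\rm w},\bm d_2,\bm d_3)\bigr|/\mathsf C_{11}.
\]
More concretely, \(\mathsf C_{11}\mathsf C_{22}\mathsf C_{33}=h_{\rm w}^{3/2}|\det\mathsf D_3|\) and \(\mathsf C_{11}\mathsf C_{22}=h_{\rm w}\|\bm d_1\times\bm d_2\|\). This yields
\[
\|\Pi_2^\perp\bm d_3\|=\frac{|\det(\bm d_1,\bm d_2,\bm d_3)|}{\|\bm d_1\times\bm d_2\|}.
\]
Since \(\bm d_2=\widehat A_\ast\bm d_1\) and \(\bm d_3=\widehat A_\ast\bm d_2\), the numerator is the volume spanned by \(\bm d_1\) together with the image parallelogram \(\widehat A_\ast\bm d_1\times\widehat A_\ast\bm d_2\). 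Its magnitude is \(|\bm d_1\cdot(\widehat A_\ast\bm d_1\times\widehat A_\ast\bm d_2)|\le\|\widehat A_\ast\bm d_1\times\widehat A_\ast\bm d_2\|\), because \(\|\bm d_1\|=1\). The lemma's area bound then gives at most \(\kappa\|\bm d_1\times\bm d_2\|\). Dividing by \(\|\bm d_1\times\bm d_2\|\) yields \(\|\Pi_2^\perp\bm d_3\|\le\kappa\), and hence \(\mathsf C_{33}\le\kappa\mathsf C_{11}\).

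The \textbf{main point to verify carefully} is the identification of the residual as a triple product over a cross product, including the triple-product reordering \(\det(\bm d_1,\bm d_2,\bm d_3)=\bm d_1\cdot(\bm d_2\times\bm d_3)\). The degenerate case \(\bm d_1\times\bm d_2=0\) must be treated separately. There \(\mathsf C_{22}=0\), the QR factor with positive diagonal is not defined, and the third bound holds trivially by continuity or by the convention \(\mathsf C_{33}=\|\Pi^\perp\bm d_3\|\le\|\bm d_3\|\), which we must check against \(\kappa\). We expect to simply restrict to the nonsingular \(\mathsf H_U\) case, as Sec.~S2 does.
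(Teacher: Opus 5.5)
Your proof is correct and follows the paper's argument essentially step for step. You bound $\mathsf C_{22}$ by dropping the projection and using $\sigma_{\max}(\widehat A_\ast)=\nu_1$, write $\mathsf C_{33}$ as volume over base area controlled by the Fisher-area bound, and read off the interpretation from the ratio of leading principal minors; your Schur-complement reading $\mathsf C_{kk}^{-2}=[(\mathsf H_U^{(k)})^{-1}]_{kk}$ is a correct addition. One small slip: your first display should read $\mathsf C_{22}\mathsf C_{33}=h_{\rm w}|\det\mathsf D_3|$, with no division by $\mathsf C_{11}$, but the ``more concretely'' identities that follow are correct and give the right residual formula.
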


\begin{proof}
Because \(\bm d_1=\bm n_{\rm w}\) is normalized,
\(\mathsf C_{11}=\sqrt{h_{\rm w}}\).  The next QR diagonal entry is the
component of \(\bm d_2\) orthogonal to \(\bm d_1\).  Since \(\bm d_1\) is a
unit vector, this component is also the area of the parallelogram spanned by
\(\bm d_1\) and \(\bm d_2\):
\[
    \frac{\mathsf C_{22}}{\sqrt{h_{\rm w}}}
    =\|(\id_3-\bm d_1\bm d_1^T)\bm d_2\|
    =\|\bm d_1\times\bm d_2\|
    \le\|\bm d_2\|
    \le\nu_1.
\]
For the third direction, the QR diagonal entry is the height of \(\bm d_3\)
above the plane spanned by \(\bm d_1\) and \(\bm d_2\).  It equals the
corresponding volume divided by the base area:
\begin{equation}
    \begin{aligned}
    \frac{\mathsf C_{33}}{\sqrt{h_{\rm w}}}
    &=
    \frac{|\bm d_1\cdot(\bm d_2\times\bm d_3)|}
    {\|\bm d_1\times\bm d_2\|}\\
    &\le
    \frac{\|\bm d_2\times\bm d_3\|}
    {\|\bm d_1\times\bm d_2\|}
    =
    \frac{\|\widehat A_\ast\bm d_1
    \times\widehat A_\ast\bm d_2\|}
    {\|\bm d_1\times\bm d_2\|}\\
    &\le
    \kappa,
    \end{aligned}
\end{equation}
where \(\bm d_2=\widehat A_\ast\bm d_1\),
\(\bm d_3=\widehat A_\ast\bm d_2\), and the last inequality is the Fisher-area
bound proved above.  This proves Eq.~\eqref{eq:S_causal_Fisher_bounds}.

For the information interpretation, let \(\Pi_{k-1}\) project onto
\(\operatorname{span}(\bm d_1,\ldots,\bm d_{k-1})\), with \(\Pi_0=0\).
Then
\[
    \mathsf C_{kk}^2
    =h_{\rm w}\|(\id_3-\Pi_{k-1})\bm d_k\|^2
    =\frac{\det\mathsf H_U^{(k)}}{\det\mathsf H_U^{(k-1)}},
\]
where \(\mathsf H_U^{(k)}=h_{\rm w}\mathsf D_k^T\mathsf D_k\) and
\(\det\mathsf H_U^{(0)}=1\).  This common value is precisely the part of
the \(k\)th delay that cannot be reproduced by the preceding directions.
The chronological order is used only to resolve the information contributed
by successive delays; simultaneously permuting the delays and their weights
leaves the GM cost unchanged.
\end{proof}

\begin{theorem}[Single-qubit GM bound]
Consider a single-qubit reservoir driven by a pure input qubit.  Assume that
the operating-point fixed state is full rank and that the three-delay QFIM is
positive definite.  For a fixed joint unitary \(U\), let \(A=A(U)\) be the
linear Bloch part of the reference channel and set
\(\kappa_U=\sqrt{|\det A|}\).  For every positive diagonal weight
\(\Omega=\operatorname{diag}(\omega_1,\omega_2,\omega_3)\),
\begin{equation}
    C_{\rm GM}(\Omega,\mathsf H_U)
    \ge f_\Omega(\kappa_U),
    \qquad
    f_\Omega(\kappa)
    =
    \frac{
        \left(
            \sqrt{\omega_1}
            +\sqrt{\omega_2}
            +\sqrt{\omega_3}/\kappa
        \right)^2
    }{
        \pi^2(1-\kappa^2)
    }.
    \label{eq:S_single_GM_bound}
\end{equation}
The Fisher cycle in Sec.~\ref{sec:S_explicit_single_cycle} attains the bound
with \(\kappa_U=r\).
\end{theorem}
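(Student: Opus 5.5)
We bound the Gill--Massar cost from below through the chronological QR factor, and then apply the two budget lemmas. Since \(\mathsf H_U=\mathsf C^T\mathsf C\) with \(\mathsf C\) upper triangular and positive diagonal, we have \(\mathsf H_U^{-1}=\mathsf C^{-1}\mathsf C^{-T}\).

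The first step is to express the cost through \(\mathsf C\). The matrix \(\mathsf H_U^{-1/2}\Omega\mathsf H_U^{-1/2}\) is unitarily similar to \(\Omega^{1/2}\mathsf H_U^{-1}\Omega^{1/2}=(\mathsf C^{-T}\Omega^{1/2})^T(\mathsf C^{-T}\Omega^{1/2})\). Hence \(\Tr[(\mathsf H_U^{-1/2}\Omega\mathsf H_U^{-1/2})^{1/2}]\) equals the trace norm \(\|\mathsf C^{-T}\Omega^{1/2}\|_1\), the sum of singular values. The matrix \(\mathsf C^{-T}\Omega^{1/2}\) is lower triangular with diagonal entries \(\sqrt{\omega_k}/\mathsf C_{kk}\). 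The trace norm dominates \(|\Tr(\mathsf M\mathsf X)|\) for any contraction \(\mathsf X\); taking \(\mathsf X=\id\) gives \(\|\mathsf M\|_1\ge|\Tr\mathsf M|\). We therefore get
\[
C_{\rm GM}(\Omega,\mathsf H_U)\ge\Bigl(\sum_{k=1}^3\frac{\sqrt{\omega_k}}{\mathsf C_{kk}}\Bigr)^2,
\]
where \(C_{\rm GM}\) uses the main-text normalization without the factor \(1/(d_{\rm sys}-1)\), as in \(f_\Omega\). This is the single norm inequality the paper alludes to.

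The second step inserts the successive-delay lemma: \(\mathsf C_{11}=\sqrt{h_{\rm w}}\), \(\mathsf C_{22}\le\nu_1\sqrt{h_{\rm w}}\le\sqrt{h_{\rm w}}\) because \(\nu_1\le1\), and \(\mathsf C_{33}\le\kappa_U\sqrt{h_{\rm w}}\). This yields
\[
C_{\rm GM}\ge\frac{\bigl(\sqrt{\omega_1}+\sqrt{\omega_2}+\sqrt{\omega_3}/\kappa_U\bigr)^2}{h_{\rm w}}.
\]
The writing--retention budget gives \(h_{\rm w}\le\pi^2(1-\kappa_U^2)\), which produces \(f_\Omega(\kappa_U)\).

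Positive definiteness of \(\mathsf H_U\) forces \(\mathsf C_{33}>0\), so \(\kappa_U>0\) and \(f_\Omega\) is finite. The hypotheses of the spectrum lemma, a full-rank fixed point and a pure input, are exactly the assumptions of the theorem. One subtlety is that the successive-delay lemma needs \(\bm d_2=\widehat A_\ast\bm d_1\) with \(\|\bm d_1\|=1\). This holds because the delayed tangents propagate by the Fisher-normalized fixed-point map.

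For attainment, I would use Eq.~\eqref{eq:S_single_H3}. The Fisher cycle has \(A\) with Bloch singular values \((1,r,r)\): \(\Psi\) gives \(\operatorname{diag}(1,r,r)\), and the rotation \(V_{\rm cyc}\) does not change singular values. Hence \(\kappa_U=\sqrt{|\det A|}=r\). Its QFIM is diagonal, \(\pi^2(1-r^2)\operatorname{diag}(1,1,r^2)\), so by Eq.~\eqref{eq:S_GM_diagonal} the cost equals \((\sqrt{\omega_1}+\sqrt{\omega_2}+\sqrt{\omega_3}/r)^2/[\pi^2(1-r^2)]=f_\Omega(r)\).

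The main obstacle is the trace-norm step. I need to check that \(\|\mathsf M\|_1\ge|\Tr\mathsf M|\) is being applied to the right triangular matrix, and that its diagonal entries are \(\sqrt{\omega_k}/\mathsf C_{kk}\) rather than products with off-diagonal corrections. Since the inverse of a triangular matrix has reciprocal diagonal entries, this should hold. Everything else is assembling the earlier lemmas.
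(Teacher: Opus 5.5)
Your proposal is correct and follows the paper's proof essentially step for step: the identity $\sqrt{C_{\rm GM}}=\|\mathsf C^{-T}\Omega^{1/2}\|_\ast\ge\sum_k\sqrt{\omega_k}/\mathsf C_{kk}$, then the successive-delay bounds with $\nu_1\le1$, the writing--retention budget, and the diagonal-QFIM evaluation of the Fisher cycle with $\kappa_U=r$. Two small points: for a qubit the factor $1/(d_{\rm sys}-1)$ equals one, so nothing is dropped by convention; and finiteness of $f_\Omega$ also needs $\kappa_U<1$, which you should state explicitly since it follows from $0<h_{\rm w}\le\pi^2(1-\kappa_U^2)$.
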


\begin{proof}
Positive definiteness implies \(0<\kappa<1\); otherwise the rank-two spectrum
or the writing budget makes the three-delay QFIM singular.  We suppress the
subscript on \(\kappa=\kappa_U\) below.  Because the reservoir is a qubit, the
GM dimension factor is one.  With
\(\|\mathsf K\|_\ast
\equiv\operatorname{Tr}\sqrt{\mathsf K^\dagger\mathsf K}\), one has
\begin{equation}
    \begin{aligned}
    \sqrt{C_{\rm GM}(\Omega,\mathsf H_U)}
    &=\left\|\mathsf C^{-T}\Omega^{1/2}\right\|_\ast\\
    &\ge
    \left|\operatorname{Tr}
    (\mathsf C^{-T}\Omega^{1/2})\right|
    =\sum_{k=1}^3\frac{\sqrt{\omega_k}}{\mathsf C_{kk}}.
    \end{aligned}
    \label{eq:S_permuted_GM_bound}
\end{equation}
The norm identity follows from \(\mathsf H_U=\mathsf C^T\mathsf C\).  We then
use the elementary trace bound
\begin{equation}
    \|\mathsf B\|_\ast
    =\max_{\mathsf U\ {\rm unitary}}
    \left|\Tr(\mathsf U^\dagger\mathsf B)\right|
    \ge |\Tr\mathsf B|,
    \label{eq:S_nuclear_trace_bound}
\end{equation}
with \(\mathsf B=\mathsf C^{-T}\Omega^{1/2}\).  The final equality uses the
positive diagonal of this lower-triangular matrix.

\textbf{The three lemmas reduce the full channel optimization to a scalar
tradeoff controlled by \(\kappa\): fresh writing, retention of the first
delayed tangent, and retention of the two-dimensional area spanned by the
first two delayed tangents cannot be optimized independently.}  They give the
single chain
\begin{equation}
    \begin{aligned}
    \sqrt{C_{\rm GM}}
    &\ge
    \frac{
    \sqrt{\omega_1}+\sqrt{\omega_2}/\nu_1
    +\sqrt{\omega_3}/\kappa
    }{\sqrt{h_{\rm w}}}\\
    &\ge
    \frac{
    \sqrt{\omega_1}+\sqrt{\omega_2}
    +\sqrt{\omega_3}/\kappa
    }{\pi\sqrt{1-\kappa^2}}.
    \end{aligned}
    \label{eq:S_fixed_channel_GM_bound}
\end{equation}
Here the second line uses \(\nu_1\le1\) and
Eq.~\eqref{eq:S_single_write_bound}.  Squaring proves
Eq.~\eqref{eq:S_single_GM_bound}.

For the proposed cycle, \(\widehat A_{\rm cyc}=A_{\rm cyc}\),
\(A_{\rm cyc}^3=r^2\id_3\), and
\(\operatorname{sv}(A_{\rm cyc})=(1,r,r)\).  Its normalized delayed
directions are \((\bm e_X,\bm e_Y,r\bm e_Z)\), and
\(h_{\rm w}=\pi^2(1-r^2)\).  Hence
\begin{equation}
    \mathsf H_{\rm cyc}^{(3)}(r)
    =
    \pi^2(1-r^2)\operatorname{diag}(1,1,r^2),
    \qquad
    C_{\rm GM}\!\left(\Omega,\mathsf H_{\rm cyc}^{(3)}(r)\right)
    =
    f_\Omega(r).
    \label{eq:S_single_Hcyc_opt}
\end{equation}
Thus every inequality is saturated when \(r=\kappa\).
\end{proof}

Because the cycle realizes every \(0<\kappa<1\), the global optimum is
\begin{equation}
    \inf_U C_{\rm GM}(\Omega,\mathsf H_U)
    =
    \min_{0<r<1}f_\Omega(r).
    \label{eq:S_single_global_optimum}
\end{equation}
The main-text endpoint losses, for which one or more weights vanish, follow
continuously from this positive-weight result.

For uniform weighting,
\begin{equation}
    f_{\rm unif}(r)
    =
    \frac{(2+r^{-1})^2}{3\pi^2(1-r^2)},
    \qquad
    2r_{\rm opt}^3+2r_{\rm opt}^2-1=0,
    \qquad
    r_{\rm opt}\simeq0.56520,
    \qquad
    C_{\rm GM}^{\rm unif}\simeq0.70508.
\end{equation}

\subsubsection{Equality conditions for the optimal cycle}

The theorem above shows that the Fisher cycle attains the global bound.  We
now ask whether a channel with a qualitatively different tangent dynamics can
attain the same value.  For strictly positive weights, equality in the final
bound requires every inequality used in its derivation to be saturated.
Following these equality conditions shows that orthogonal cyclic routing is
not merely sufficient, but necessary at the optimum.

\textbf{First, the writing and retention bounds must both be saturated.}
Equality throughout Eq.~\eqref{eq:S_fixed_channel_GM_bound} requires
\begin{equation}
    \nu_1=1,
    \qquad
    h_{\rm w}=\pi^2(1-\kappa^2),
    \qquad
    \mathsf C_{22}=\mathsf C_{11},
    \qquad
    \mathsf C_{33}=\kappa\mathsf C_{11}.
    \label{eq:S_single_equality_conditions}
\end{equation}
Thus the channel uses the full fresh-writing budget, retains one delayed
direction without loss, and retains the third independent direction with the
maximal factor \(\kappa\).

\textbf{Second, the three delays cannot overlap in Fisher space.}
The nuclear-norm inequality in Eq.~\eqref{eq:S_permuted_GM_bound} must also
be saturated.  Let
\(\mathsf B=\mathsf C^{-T}\Omega^{1/2}\).  Equality in
\(\|\mathsf B\|_\ast\ge\operatorname{Tr}\mathsf B\) requires the real matrix
\(\mathsf B\) to be symmetric and positive semidefinite.  On the other hand,
\(\mathsf B\) is lower triangular with positive diagonal.  It must therefore
be diagonal.  Since every \(\omega_k>0\), this also makes \(\mathsf C\)
diagonal.  By the chronological QR construction, its off-diagonal entries
are precisely the projections of later delays onto earlier Fisher directions.
Their disappearance means that the three delayed directions are mutually
Fisher orthogonal.

Normalize these directions as
\[
    \bm e_1=\bm d_1,
    \qquad
    \bm e_2=\bm d_2,
    \qquad
    \bm e_3=\bm d_3/\kappa.
\]
Equations~\eqref{eq:S_causal_QR} and
\eqref{eq:S_single_equality_conditions} make
\((\bm e_1,\bm e_2,\bm e_3)\) an orthonormal Fisher basis and give
\[
    \widehat A_\ast\bm e_1=\bm e_2,
    \qquad
    \widehat A_\ast\bm e_2=\kappa\bm e_3.
\]

\textbf{Third, the rank-two singular spectrum closes the route into a
three-cycle.}  The condition \(\nu_1=1\) gives
\(\operatorname{sv}(\widehat A_\ast)=(1,\kappa,\kappa)\).  Because
\(\widehat A_\ast\bm e_1=\bm e_2\) and both vectors are normalized,
\(\bm e_1\) and \(\bm e_2\) are the corresponding right and left unit
singular directions.  The restriction of \(\widehat A_\ast\) to the
orthogonal complement of \(\bm e_1\) is therefore \(\kappa\) times an
isometry.  Since it already maps
\(\bm e_2\) to \(\kappa\bm e_3\), the image of the remaining basis vector
must be orthogonal to both \(\bm e_2\) and \(\bm e_3\).  Hence
\(\widehat A_\ast\bm e_3=\pm\kappa\bm e_1\), and
\begin{equation}
    [\widehat A_\ast]_{\{\bm e_j\}}
    =
    \begin{pmatrix}
        0&0&\pm\kappa\\
        1&0&0\\
        0&\kappa&0
    \end{pmatrix},
    \qquad
    \widehat A_\ast^3=\pm\kappa^2\id_3.
    \label{eq:S_single_saturating_cycle}
\end{equation}
Finally, \(\nu_1=1\) makes the canonical translation vanish through
Eq.~\eqref{eq:S_rank_two_parameters}; every saturating reference channel is
therefore unital.  \textbf{Every saturating channel consequently realizes
the same damped three-cycle in the Fisher-normalized tangent space, up to
rotations of the Fisher coordinate system and sign conventions.}  This fixes
the optimal induced reservoir dynamics; it does not require a unique
microscopic joint-unitary realization.

\subsubsection{Comparison with amplitude-damping channels}

This comparison concerns only the most favorable contraction spectrum that
amplitude damping could supply after an idealized cyclic routing of its
principal directions; it is not a competing full-rank stationary QRC model.
An amplitude-damping channel with determinant parameter \(\kappa\) has
\(\operatorname{sv}(\widehat A_{\rm AD})
=(\sqrt{\kappa},\sqrt{\kappa},\kappa)\).  After cyclically routing its three
principal contraction directions, the most favorable three-delay QFIM and
its GM bound are
\begin{equation}
    \begin{aligned}
    C_{\rm GM}(\Omega,\mathsf H_{\rm AD})
    &\ge
    \frac{
        \left(
            \sqrt{\omega_1}
            +\sqrt{\omega_2/\kappa}
            +\sqrt{\omega_3}/\kappa
        \right)^2
    }{
        \pi^2(1-\kappa^2)
    },\\
    \mathsf H_{\rm AD}^{(3)}(\kappa)
    &=
    \pi^2(1-\kappa^2)
    \operatorname{diag}(1,\kappa,\kappa^2).
    \end{aligned}
    \label{eq:S_single_H_AD}
\end{equation}
At equal determinant parameter \(\kappa=r\), amplitude damping therefore
pays an additional factor \(\kappa\) in the second-delay QFI.  For uniform
weighting,
\begin{equation}
    C_{\rm AD}^{\rm unif}(\kappa)
    =
    \frac{
        (1+\kappa^{-1/2}+\kappa^{-1})^2
    }{
        3\pi^2(1-\kappa^2)
    },
    \qquad
    \kappa_{\rm AD,opt}\simeq0.60668,
    \qquad
    C_{\rm AD}^{\rm unif}\simeq0.82636.
\end{equation}
This spectral benchmark is about \(17.2\%\) above the Fisher-cycle optimum.
Moreover, bare amplitude damping has a pure fixed point, so its actual
three-delay stationary QFIM is rank deficient.  The endpoints \(\kappa=0\) and
\(\kappa=1\) are likewise degenerate and do not affect the interior optimum.

\section{Clifford Pauli Routes for Fisher-Orthogonal Memory}
\label{sec:S_many_body}

This section develops the many-body reservoir and establishes the properties
needed for temporal processing.  We first derive how a fresh input tangent is
written, damped, and transported through mutually orthogonal Pauli directions.
We then address three questions that are not evident from the circuit alone:
how long the Pauli route remains distinct, whether its binary dynamics is
irreducible or decomposes into invariant components, and which delayed
products survive at second order.

Clifford conjugation turns these dynamical questions into the iteration of a
binary symplectic matrix.  \textbf{The polynomial analysis below has two
specific purposes.}  It certifies the closure and exact period of the Pauli
route, and it determines whether the route can be confined to a proper Pauli
subspace.  The first property fixes the designed memory window; the second
determines which argument can be used in the later analysis of the echo-state
property.

\subsection{Pauli-route dynamics of the QRC channel}

We begin with the driven channel because it identifies precisely what the
later Clifford construction must provide.  For an \(N\)-qubit reservoir of
dimension \(d=2^N\), the Pauli string \(Q\) specifies the freshly written
direction, \(G\) determines which stored directions are damped, and the
reservoir Clifford \(V\) advances the recurrent route.  Let \(Q\) and \(G\)
satisfy \(Q^2=G^2=\id\) and \(\{Q,G\}=0\).  Define
\begin{equation}
    W_r(Q,G)
    =
    \begin{pmatrix}
        a_r\id_{\rm R} & b_r Q\\
        b_r G & -a_rGQ
    \end{pmatrix}_{\rm in},
    \qquad
    a_r=\sqrt{\frac{1+r}{2}},\quad
    b_r=\sqrt{\frac{1-r}{2}}.
    \label{eq:S_many_W}
\end{equation}
The complete driven step is
\begin{equation}
    U_{r,V,Q,G}
    =
    (\id_{\rm in}\otimes V)W_r(Q,G)
    (L_{\rm in}\otimes\id_{\rm R}),
    \qquad
    L_{\rm in}=e^{+\ii\pi Y/4}.
    \label{eq:S_many_U}
\end{equation}
We denote the corresponding reduced reservoir channel by
\(\Phi_{s,r}^{V,Q,G}\).
The Pauli identities \(Q^2=G^2=\id_{\rm R}\) make \(W_r(Q,G)\) a unitary block.
The additional anticommutation condition \(GQG=-Q\) realizes the desired
write--store structure: the fresh input derivative is written into the
\(Q\) direction with the same full-strength coefficient as in the
single-qubit cycle, while \(Q\) lies in a sector damped by \(G\).  Using the
rotated input amplitudes \(\ell_s,m_s\) from Eq.~\eqref{eq:S_rotated_input},
the write--store channel before the final routing has Kraus operators
\begin{equation}
    B_0(s)=\ell_s a_r\id_{\rm R}+m_s b_rQ,
    \qquad
    B_1(s)=\ell_s b_rG-m_s a_rGQ,
    \label{eq:S_many_kraus}
\end{equation}
and
\(\Psi_{s,r}^{Q,G}(\rho)
=\sum_{\nu=0,1}B_\nu(s)\rho B_\nu^\dagger(s)\).  At
\(s_0=1/2\), these reduce to \(B_0=a_r\id_{\rm R}\), \(B_1=b_rG\), with
\(\dot B_0=(\pi/2)b_rQ\) and \(\dot B_1=-(\pi/2)a_rGQ\).  Therefore the
operating-point channel is
\begin{equation}
    \Psi_{s_0,r}^{Q,G}(\rho)
    =
    \frac{1+r}{2}\rho+\frac{1-r}{2}G\rho G.
    \label{eq:S_many_Psi}
\end{equation}
Consequently,
\begin{equation}
    \Psi_{s_0,r}^{Q,G}(P)
    =
    \begin{cases}
        P,& [G,P]=0,\\
        rP,& \{G,P\}=0.
    \end{cases}
    \label{eq:S_many_damping}
\end{equation}

Differentiating with respect to \(s\) at \(s_0\) gives the linear-response
superoperator.  Keeping the adjoint of \(GQ\) explicit,
\(\dot B_1^\dagger=-(\pi/2)a_rQG\), one obtains
\begin{equation}
    \begin{aligned}
    \dot\Psi_{s_0,r}^{Q,G}(\rho)
    =
    &\sum_{\nu=0,1}
    \left[
        \dot B_\nu(s_0)\rho B_\nu^\dagger(s_0)
        +
        B_\nu(s_0)\rho\dot B_\nu^\dagger(s_0)
    \right] \\
    =
    &\frac{\pi\sqrt{1-r^2}}{4}
    \left(
        Q\rho+\rho Q-GQ\rho G-G\rho QG
    \right).
    \end{aligned}
    \label{eq:S_many_dotPsi_general}
\end{equation}
For \(\rho_\ast=\id_{\rm R}/d\), this gives the information-writing derivative
\begin{equation}
    \dot\Psi_{s_0,r}^{Q,G}(\rho_\ast)
    =
    \frac{\pi\sqrt{1-r^2}}{4d}
    \left(2Q-2GQG\right)
    =
    \frac{\pi\sqrt{1-r^2}}{d}Q,
    \label{eq:S_many_write_derivative}
\end{equation}
where the last equality uses \(\{Q,G\}=0\).  This is the many-body analogue of
Eq.~\eqref{eq:S_single_injection}.  If \(Q\) commuted with \(G\), the fresh
write derivative would cancel exactly because \(GQG=Q\).  Thus the
anticommutation condition is necessary for information writing in this block,
not just for the later damping pattern.  After the final routing,
\begin{equation}
    \dot\Phi_{s_0,r}^{V,Q,G}(\rho_\ast)
    =
    \frac{\pi\sqrt{1-r^2}}{d}\,VQV^\dagger.
    \label{eq:S_many_fresh_tangent}
\end{equation}
Let the target route be
\begin{equation}
    P_{j+1}=VP_jV^\dagger,\qquad P_{K+1}=P_1,
    \qquad P_1=VQV^\dagger.
    \label{eq:S_many_route}
\end{equation}
Define the hit word
\begin{equation}
    \eta_j=
    \begin{cases}
        0,& [G,P_j]=0,\\
        1,& \{G,P_j\}=0,
    \end{cases}
    \qquad
    n_m(\eta)=\sum_{j=1}^m\eta_j,\quad n_0=0.
    \label{eq:S_eta_word}
\end{equation}
For a \(K\)-periodic route we write
\(\eta_K(V,Q,G)=(\eta_1,\ldots,\eta_K)\), with the same accumulated count
\(n_m(\eta_K)\).
A perturbation injected \(k\) steps before readout has tangent
\begin{equation}
    \Delta_k
    =
    \frac{\pi\sqrt{1-r^2}}{d}\,
    r^{n_{k-1}(\eta_K)}P_k,
    \label{eq:S_many_Delta_k}
\end{equation}
up to an irrelevant sign.  Using Eq.~\eqref{eq:S_maxmixed_metric}, the
designed \(K\)-delay QFIM is
\begin{equation}
    \left(\mathsf H^{(K)}_{\eta_K}(r)\right)_{kl}
    =
    \pi^2(1-r^2)r^{2n_{k-1}(\eta_K)}\delta_{kl}.
    \label{eq:S_many_H_eta}
\end{equation}
Thus \(V\) determines the operator direction assigned to each delay, whereas
the binary word \(\eta_K\) records only the damping accumulated before that
direction reaches the readout.  Orthogonality of distinct Pauli strings makes
the designed QFIM diagonal.  \textbf{Routing fixes where each delay is stored;
the hit word fixes only how strongly that direction survives.}  The remaining
task is therefore to construct a
long route of distinct Pauli strings and characterize its invariant-subspace
structure, which will enter the echo-state analysis below.

\subsection{Binary symplectic representation of Clifford routes}

For an \(N\)-qubit reservoir, we ignore Pauli phases and represent each Pauli
string by
\[
    \bm v=(x_1,\ldots,x_N,z_1,\ldots,z_N)^T\in\GF^{2N}.
\]
We write \(P(\bm v)\) for the corresponding phase-free Pauli string.
Here \(\GF=\{0,1\}\) is the binary field.  All additions and multiplications
are evaluated modulo \(2\), so \(1+1=0\) and subtraction is the same as
addition.  The term ``field'' only means that the usual arithmetic operations,
including division by a nonzero element, remain within this set.  On each
qubit,
\[
    I\leftrightarrow(0,0),\qquad
    X\leftrightarrow(1,0),\qquad
    Z\leftrightarrow(0,1),\qquad
    Y\leftrightarrow(1,1).
\]
The binary symplectic form is
\begin{equation}
    \symp{\bm u}{\bm v}=\bm u^T\mathsf J\bm v,
    \qquad
    \mathsf J=
    \begin{pmatrix}
        0&\id_N\\
        \id_N&0
    \end{pmatrix},
    \label{eq:S_symplectic_form}
\end{equation}
and two Pauli strings anticommute iff
\(\symp{\bm u}{\bm v}=1\).  A Clifford unitary
\(V\) acts on these binary labels through a \(2N\times2N\) symplectic matrix
\(M\) \cite{SMHostens2005Clifford}:
\begin{equation}
    V P(\bm v)V^\dagger=\pm P(M\bm v),
    \qquad
    M^T\mathsf J M=\mathsf J.
    \label{eq:S_Clifford_symplectic}
\end{equation}
Thus \(V\) is the physical reservoir unitary, whereas \(M\) is its binary
representation on Pauli strings.  The signs do not affect the Fisher matrix
or the damping words.  Our row and column convention follows directly from
the ordering of \(\bm v\).  Let \(\bm e_j\) denote the \(j\)th binary unit
vector.
Then
\[
    P(\bm e_j)=X_j,\qquad
    P(\bm e_{N+j})=Z_j,
    \qquad 1\le j\le N.
\]
The \(j\)th column of \(M\) is therefore the binary label of
\(VX_jV^\dagger\), and the \((N+j)\)th column is the label of
\(VZ_jV^\dagger\).  Within each column, the first \(N\) rows give the output
\(X\) components and the last \(N\) rows give the output \(Z\) components.
This convention makes the complete high-dimensional matrix recoverable
from the Clifford action on the \(2N\) Pauli generators.

An important advantage of this representation is that several global
properties of the Clifford routing are encoded by ordinary binary
polynomials, most notably the characteristic polynomial of \(M\).  We first
list the concrete matrices used in this work and then use their
characteristic polynomials to determine the corresponding Pauli-cycle
periods.

Before discussing the general period criterion, we give the concrete routing
generators used throughout this work.  They will also serve as examples for
the abstract properties below.  Pauli strings are ordered as
\(\sigma^{(1)}\cdots\sigma^{(N)}
=\sigma^{(1)}\otimes\cdots\otimes\sigma^{(N)}\), binary labels
use \(\bm v=(x_1,\ldots,x_N,z_1,\ldots,z_N)^T\), and products of gates are
applied from right to left.  Let
\[
    H_{\rm Had}
    =
    \frac{1}{\sqrt2}
    \begin{pmatrix}
        1&1\\
        1&-1
    \end{pmatrix},
    \qquad
    S=
    \begin{pmatrix}
        1&0\\
        0&\ii
    \end{pmatrix},
\]
and let \(C_{a\to b}\) denote a controlled-NOT (CNOT) gate with control \(a\)
and target \(b\).
Global phases are immaterial.

The one-qubit routing generator, denoted by \(V_{\rm cyc}\) in
Sec.~S3 and by \(V_3\) when its period is emphasized, is, up to a global
phase,
\begin{equation}
    V_3
    =
    \frac{1}{2}\left(I-\ii X-\ii Y-\ii Z\right),
    \qquad
    M_3=
    \begin{pmatrix}
        1&1\\
        1&0
    \end{pmatrix}.
    \label{eq:S_explicit_V3}
\end{equation}
It realizes \(X\to Y\to Z\to X\).

The two-qubit routing generator used as the running example below is
\begin{equation}
    V_5
    =
    C_{1\to2}\,
    S_1\,
    (H_{\rm Had})_2\,
    (H_{\rm Had})_1,
    \label{eq:S_explicit_V5}
\end{equation}
with binary symplectic matrix
\begin{equation}
    M_5=
    \begin{pmatrix}
        0&0&1&0\\
        0&0&1&1\\
        1&1&1&0\\
        0&1&0&0
    \end{pmatrix}.
    \label{eq:S_explicit_M5}
\end{equation}
Ignoring Pauli signs, its action on the four single-site generators is
\begin{equation}
\begin{aligned}
    X_1&\longmapsto Z_1,
    &
    Z_1&\longmapsto Y_1X_2,
    \\
    X_2&\longmapsto Z_1Z_2,
    &
    Z_2&\longmapsto X_2 .
\end{aligned}
    \label{eq:S_M5_generator_action}
\end{equation}
Since conjugation preserves products, Eq.~\eqref{eq:S_M5_generator_action}
determines the image of every two-qubit Pauli string and allows the
period-five routes in Fig.~2 of the main text to be checked directly.

For the three-qubit generator that produces the period-nine Singer route, we
use
\begin{equation}
\begin{aligned}
    V_9
    &=
    C_{2\to1}\,
    C_{3\to1}\,
    C_{3\to2}\,
    (H_{\rm Had})_2\,
    C_{2\to3}\,
    S_1\,
    S_3\,
    (H_{\rm Had})_3\,
    S_3\,
    (H_{\rm Had})_2 ,
\end{aligned}
    \label{eq:S_explicit_V9}
\end{equation}
whose symplectic matrix is
\begin{equation}
    M_9=
    \begin{pmatrix}
        1&1&0&0&0&1\\
        0&1&1&0&1&0\\
        0&0&1&0&1&1\\
        1&0&0&1&0&0\\
        1&0&0&1&1&0\\
        1&0&0&1&1&1
    \end{pmatrix}.
    \label{eq:S_explicit_M9}
\end{equation}

\subsection{Singer-route period and complete Pauli transport}
\label{sec:S_singer_roots}

The circuit representation specifies a Clifford unitary but does not by
itself determine the length of its Pauli route or whether the associated
binary dynamics is irreducible.  The first property fixes the number of
distinct Fisher-orthogonal memory directions.  The second determines whether
the Singer spanning argument can be used in the later echo-state analysis.
Both can be read compactly from the characteristic polynomial, without
enumerating all \(4^N-1\) nonidentity Pauli strings.

\textbf{The polynomial plays two different roles.}  Polynomial divisibility
gives a candidate closing period, and an additional order test determines
whether that period is minimal.  Irreducibility instead excludes a proper
invariant Pauli subspace.  The resulting spanning property is not an
independent performance objective; it is the algebraic input required by the
Singer-route echo-state proof in Sec.~\ref{sec:S_esp}.

For a binary symplectic matrix \(M\), define
\begin{equation}
    \chi_M(\lambda)
    =
    \det(\lambda\id_{2N}-M)
    =
    \lambda^{2N}+c_{2N-1}^{(M)}\lambda^{2N-1}
    +\cdots+c_1^{(M)}\lambda+1 .
    \label{eq:S_characteristic_polynomial}
\end{equation}
Here \(\lambda\) is only a formal polynomial variable, not the eigenvalue
notation \(\lambda_a\), \(\lambda_{\min}\), or \(\lambda_{\max}\) used
elsewhere, and all coefficients belong to \(\GF=\{0,1\}\).
Throughout this subsection, additions between polynomial
coefficients and between binary vectors or matrices are understood modulo
two; exponents, dimensions, and orbit lengths are ordinary integers.  We
retain the conventional \(+\) notation inside polynomials and write
\(\lambda^K-1\) for the periodicity condition, noting that \(1=-1\) in \(\GF\).
The constant term is one because \(M\) is invertible.  Two root symmetries
hold before any irreducibility assumption is made.
First, \(M^{-1}=\mathsf J^{-1}M^T\mathsf J\) implies
\begin{equation}
    \chi_M(\lambda)=\lambda^{2N}\chi_M(\lambda^{-1}),
    \label{eq:S_reciprocal_characteristic}
\end{equation}
so \(\chi_M\) is reciprocal: if \(\zeta\) is a root in a field where the
polynomial factorizes, then \(\zeta^{-1}\) is also a root.  Second,
arithmetic modulo two gives
\begin{equation}
    \chi_M(\zeta^2)=\bigl[\chi_M(\zeta)\bigr]^2=0,
    \label{eq:S_root_squaring}
\end{equation}
so the square of every root is again a root.  The scalar \(\zeta\) is a root
value, not the matrix \(M\).

Now suppose that \(\chi_M\) is irreducible of degree \(2N\).  Over
\(\GF\), its roots are the \(2N\) distinct conjugates obtained by repeated
squaring,
\(\zeta,\zeta^2,\ldots,\zeta^{2^{2N-1}}\).  Reciprocity requires
\(\zeta^{-1}\) to be one of these same roots.  The standard theorem for
irreducible self-reciprocal polynomials places this inverse after \(N\)
squarings \cite{SMMeynGotz1989SelfReciprocal}:
\[
    \zeta^{-1}=\zeta^{2^N},
    \qquad
    \zeta^{2^N+1}=1.
\]
Hence \(\zeta\) is a root of \(\lambda^{2^N+1}-1\).  Since
\(\chi_M(\lambda)\) is the
minimal polynomial of \(\zeta\) over \(\GF\), it follows that
\begin{equation}
    \chi_M(\lambda)\mid \lambda^K-1,
    \qquad
    K=2^N+1.
    \label{eq:S_char_divisibility}
\end{equation}
Cayley--Hamilton then yields
\begin{equation}
    M^K=\id_{2N}.
    \label{eq:S_reciprocal_period_bound}
\end{equation}
Thus irreducibility and symplectic reciprocity imply only that the matrix
order divides \(K\).  We say that the corresponding Clifford generator
produces a Singer cycle when \(M\) has order exactly \(K\).
In what follows, ``Singer cycle'' refers to this algebraic action of \(M\),
whereas ``Singer route'' denotes a particular Pauli-label orbit generated by
that action.  Once
Eq.~\eqref{eq:S_reciprocal_period_bound} is known, exactness is checked by
\begin{equation}
    M^{K/p}\ne\id_{2N}
    \qquad
    \text{for every prime divisor }p\text{ of }K .
    \label{eq:S_singer_test}
\end{equation}

The routing matrices used in this work make the distinction explicit.  Their
characteristic polynomials satisfy
\begin{equation}
\begin{aligned}
    \chi_{M_3}(\lambda)&=\lambda^2+\lambda+1,
    &\lambda^3-1&=(\lambda-1)\chi_{M_3}(\lambda),\\
    \chi_{M_5}(\lambda)&=\lambda^4+\lambda^3+\lambda^2+\lambda+1,
    &\lambda^5-1&=(\lambda-1)\chi_{M_5}(\lambda),\\
    \chi_{M_9}(\lambda)&=\lambda^6+\lambda^3+1,
    &\lambda^9-1&=(\lambda^3-1)\chi_{M_9}(\lambda).
\end{aligned}
    \label{eq:S_explicit_characteristic_polynomials}
\end{equation}
All three polynomials are irreducible and reciprocal.  Since three and five
are prime, \(M_3\ne\id_2\) and \(M_5\ne\id_4\) fix their exact orders.  For
\(M_9\), the only nontrivial candidate is three, and direct evaluation gives
\(M_9^3\ne\id_6\).  Hence \(M_3,M_5\), and \(M_9\) generate Singer cycles of
periods three, five, and nine.  The \(M_5\) example also makes the two root
symmetries above concrete.  Its period gives
\begin{equation}
    M_5^{-1}=M_5^4,
    \label{eq:S_M5_inverse_period}
\end{equation}
which is the matrix counterpart of
\(\zeta^{-1}=\zeta^{2^N}=\zeta^4\) at \(N=2\).  Since both \(2\) and \(4\)
are coprime to \(5\), \(M_5^2\) and \(M_5^4\) visit the same five Pauli
labels as \(M_5\), but in a different temporal order.

Irreducibility has a direct consequence for Pauli routing.  For any nonzero
label \(\bm v\), define
\begin{equation}
    \mathcal W_{\bm v}
    =
    \operatorname{span}_{\GF}
    \{\bm v,M\bm v,M^2\bm v,\ldots\}.
    \label{eq:S_orbit_span}
\end{equation}
This is a nonzero \(M\)-invariant subspace.  Suppose, to the contrary, that it
does not span the full label space, and write
\(d_{\mathcal W}=\dim\mathcal W_{\bm v}\), with
\(0<d_{\mathcal W}<2N\).  Extending a basis of
\(\mathcal W_{\bm v}\) to the full label space puts \(M\) in
block-upper-triangular form because
\(M\mathcal W_{\bm v}\subseteq\mathcal W_{\bm v}\):
\[
    M=
    \begin{pmatrix}
        M_{11}&M_{12}\\
        0&M_{22}
    \end{pmatrix},
    \qquad
    M_{11}\in\GF^{d_{\mathcal W}\times d_{\mathcal W}},
    \quad
    M_{22}\in
    \GF^{(2N-d_{\mathcal W})\times(2N-d_{\mathcal W})}.
\]
The determinant of a block-triangular matrix is the product of the
determinants of its diagonal blocks.  Therefore
\[
    \chi_M(\lambda)
    =
    \det(\lambda\id_{d_{\mathcal W}}-M_{11})\,
    \det(\lambda\id_{2N-d_{\mathcal W}}-M_{22}),
\]
which is a nontrivial factorization because both factors have positive
degree.  This contradicts the irreducibility of \(\chi_M\).  Hence the full
orbit spans \(\mathcal W_{\bm v}=\GF^{2N}\).  Cayley--Hamilton expresses every
higher power through the first \(2N\), so these first \(2N\) labels already
span the \(2N\)-dimensional space.  Because there are exactly \(2N\) of them,
\begin{equation}
    \bm v,M\bm v,\ldots,M^{2N-1}\bm v
    \quad\text{form a basis of }\GF^{2N}.
    \label{eq:S_singer_orbit_basis}
\end{equation}
Since \(M\) is invertible, any \(2N\) consecutive labels on the same orbit
also form a basis.

We can now determine the length of every nonzero Pauli orbit.  Since
\(M^K=\id_{2N}\), every label returns after \(K\) steps.  Suppose that a
nonzero label \(\bm v\) returned earlier, so that
\(M^m\bm v=\bm v\) for some \(0<m<K\).  The fixed subspace
\[
    \mathcal F_m=\ker(M^m-\id_{2N})
\]
would then be nonzero.  It is also invariant under \(M\): if
\(\bm w\in\mathcal F_m\), then
\(M^m(M\bm w)=M(M^m\bm w)=M\bm w\), and hence
\(M\bm w\in\mathcal F_m\).  Because \(\chi_M\) is irreducible, \(M\) has no
nonzero proper invariant subspace.  Therefore
\(\mathcal F_m=\GF^{2N}\), which would mean that
\(M^m=\id_{2N}\) on the full label space.  This contradicts the assumption
that \(M\) has exact order \(K\).  Thus every nonzero Pauli label has orbit
length exactly \(K=2^N+1\).

There are \(2^{2N}-1=4^N-1\) nonzero binary labels, one for each
nonidentity Pauli string.  Since distinct orbits do not overlap and each
contains \(K\) labels, they split into
\begin{equation}
    \frac{4^N-1}{2^N+1}=2^N-1
    \label{eq:S_number_singer_orbits}
\end{equation}
disjoint period-\(K\) cycles \cite{SMHestenes1970SingerGroups}.  Equivalently,
the nonidentity Pauli strings form \(2^N-1\) separate routes, each visiting
\(2^N+1\) distinct operator directions before returning to its starting
point.

The same Pauli strings also admit a complementary partition into commuting
classes.  For the Singer generators used here, \(K=2^N+1\) maximal commuting
Pauli classes are permuted cyclically.  Each class contains \(2^N-1\)
nonidentity strings and, together with the identity, forms a maximal Abelian
Pauli subgroup of size \(2^N\).  Distinct classes share no nonidentity
element.  Since \((2^N+1)(2^N-1)=4^N-1\), they exhaust all nonidentity Pauli
strings.  Each of the \(2^N-1\) period-\(K\) orbits above therefore intersects
every commuting class exactly once.  Equivalently, the Pauli strings form a
\((2^N-1)\times(2^N+1)\) array whose rows are Singer orbits and whose columns
are maximal commuting classes.  Every column can therefore be accessed by one
joint Pauli measurement.  The same partition is known as the cyclic
mutually-unbiased-basis construction \cite{SMKern2010CyclicMUB} and gives the
row--column structure illustrated in Fig.~2 of the main text.

\subsection{Composite Clifford routes}

Suppose the reservoir is partitioned into blocks, with a local Clifford
routing generator \(V_\mu\) producing a period-\(K_\mu\) route on block
\(\mu\).  The product generator and the resulting global period obey
\begin{equation}
    V_{\rm comp}=\bigotimes_\mu V_\mu,
    \qquad
    K_{\rm comp}=\operatorname{lcm}(K_1,K_2,\ldots).
    \label{eq:S_product_clock}
\end{equation}
For each block, let \(\eta^{(\mu)}\in\{0,1\}^{K_{\rm comp}}\) be its local
hit word, periodically extended to the common period.  For factorized Pauli
strings, the global word is the elementwise parity
\begin{equation}
    \eta_{\rm comp}(Q_{\rm comp},G_{\rm comp})
    =
    \bigoplus_\mu\eta^{(\mu)}(Q_\mu,G_\mu).
    \label{eq:S_product_eta}
\end{equation}
Here \(\bigoplus\) denotes elementwise addition modulo two, not the number
of damping events accumulated along the route.
For
\(Q_{\rm comp}=\bigotimes_\mu Q_\mu\) and
\(G_{\rm comp}=\bigotimes_\mu G_\mu\), let
\(\bm v_{Q,\mu}\) and \(\bm v_{G,\mu}\) denote the binary labels of the
component strings \(Q_\mu\) and \(G_\mu\).  The same parity rule gives
\begin{equation}
    \{Q_{\rm comp},G_{\rm comp}\}=0
    \quad\Longleftrightarrow\quad
    \bigoplus_\mu
    \symp{\bm v_{Q,\mu}}{\bm v_{G,\mu}}
    =1.
    \label{eq:S_product_write_condition}
\end{equation}
Equation~\eqref{eq:S_product_write_condition} guarantees
\(\{Q_{\rm comp},G_{\rm comp}\}=0\) and hence a nonzero fresh write-in branch
at the operating point.  It does not ensure that every previously stored mode
is damped; that separate requirement is analyzed in
Sec.~\ref{sec:S_esp}.  By contrast, the periods \(K_\mu\), their least common
multiple, and the products below are ordinary integers.

The period-fifteen construction used in the main text is the simplest
example.  Its recurrent unitary and binary representation are
\begin{equation}
    V_{15}=V_5^{(1,2)}\otimes V_3^{(3)},
    \qquad
    M_{15}\simeq M_5\oplus M_3,
    \label{eq:S_explicit_V15}
\end{equation}
where \(\simeq\) denotes the coordinate permutation from
\((x_1,x_2,x_3,z_1,z_2,z_3)\) to
\((x_1,x_2,z_1,z_2,x_3,z_3)\).  Hence
\begin{equation}
    \operatorname{ord}(M_{15})=\operatorname{lcm}(5,3)=15,
    \qquad
    \chi_{M_{15}}(\lambda)
    =\chi_{M_5}(\lambda)\chi_{M_3}(\lambda).
    \label{eq:S_explicit_M15}
\end{equation}
Explicitly,
\begin{equation}
    \chi_{M_{15}}(\lambda)
    =
    (\lambda^4+\lambda^3+\lambda^2+\lambda+1)
    (\lambda^2+\lambda+1).
    \label{eq:S_M15_characteristic}
\end{equation}
Thus \(M_{15}\) generates a period-\(15\) composite route rather than a
six-dimensional Singer cycle.  Although its period follows directly from the
component periods, the global matrix is reducible, so the spanning argument
used for a single Singer route no longer applies.  This algebraic distinction
changes the subsequent echo-state analysis: the composite construction must
instead be treated through its component hit words in
Sec.~\ref{sec:S_esp}.

There is also a simple bound on the period reached by pairwise-coprime Singer
components.  If block \(\mu\) contains \(n_\mu\) qubits, then
\(K_\mu=2^{n_\mu}+1\) and \(\sum_\mu n_\mu=N\).  Pairwise coprimality requires
the largest powers of two dividing the integers \(n_\mu\) to be distinct.
Indeed, writing \(n_\mu=2^{a_\mu}b_\mu\) with \(b_\mu\) odd, two periods
\(2^{n_\mu}+1\) and \(2^{n_\nu}+1\) share a nontrivial factor whenever
\(a_\mu=a_\nu\).  After ordering the distinct nonnegative integers \(a_\mu\),
we have \(a_\mu\ge\mu-1\) and hence \(n_\mu\ge2^{\mu-1}\).  If
\(n_{\rm blk}\) denotes the number of component reservoirs, then
\begin{equation}
    \frac{K_{\rm comp}}{2^N}
    =\prod_\mu\left(1+2^{-n_\mu}\right)
    \le
    \prod_{\mu=1}^{n_{\rm blk}}\left(1+2^{-2^{\mu-1}}\right)
    =2\left(1-2^{-2^{n_{\rm blk}}}\right)<2 .
    \label{eq:S_composite_period_bound}
\end{equation}
Hence \(K_{\rm comp}<2^{N+1}\).  For \(N=3\), the construction in
Eq.~\eqref{eq:S_explicit_V15} realizes \(K_{\rm comp}=15<16\).  Composite
routing can therefore extend a particular route beyond the single Singer
period while retaining the same \(O(2^N)\) scaling.

\subsection{Local echo-state conditions for Clifford routes}
\label{sec:S_esp}

Throughout this work, the echo-state property (ESP) refers to the local
property around the constant operating history \(s_t=s_0\): reservoir states
driven by the same sufficiently small input perturbations lose their dependence
on the initial state.  \textbf{For the channel considered here, this local ESP
is equivalent to every nonidentity Pauli orbit encountering at least one
damping hit.}  A transparent orbit would preserve a traceless component of the
initial state indefinitely.  Strict contraction is established below at
\(s_t=s_0\) and persists in a sufficiently small neighborhood by continuity.
No uniform claim is made for arbitrary input amplitudes; the finite-amplitude
sequences used in the task benchmark are tested directly in
Sec.~\ref{sec:S_finite_amplitude_washout}.  In this subsection,
\(\oplus\) denotes addition in \(\GF\), whereas \(\sum\) denotes an ordinary
integer sum that counts physical damping hits.

Consider a nonidentity Pauli orbit
\(\cO=\{P_1,\ldots,P_{K_{\cO}}\}\) under \(V\), and let
\[
    n_{\rm hit}(\cO)
    =\sum_{j=1}^{K_{\cO}}\eta_j
    \in\mathbb Z_{\ge0}
\]
be the number of damping hits in one period.  If \(\bm v_P\) and
\(\bm v_G\) are the binary labels of \(P_1\) and \(G\), respectively, then
\begin{equation}
    \eta_j=\symp{\bm v_G}{M^{j-1}\bm v_P}\in\GF,
    \label{eq:S_singer_hit_shadow}
\end{equation}
where \(\eta_j=0\) denotes commutation and \(\eta_j=1\) denotes
anticommutation.

\begin{theorem}[Local ESP and finite transparent runs]
For the Clifford write-store channel with \(0<r<1\), the operating-point
dynamics has local ESP iff
\begin{equation}
    n_{\rm hit}(\cO)>0
    \qquad
    \text{for every nonidentity Pauli orbit }\cO.
    \label{eq:S_finite_hitting}
\end{equation}
Whenever this condition holds, no hit word contains \(2N\) consecutive
zeros.  Equivalently, along every nonidentity orbit, any \(2N\) consecutive
routed Pauli strings contain at least one string that anticommutes with \(G\).
\end{theorem}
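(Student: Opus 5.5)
The plan is to linearize the operating-point channel on the Pauli basis, read off its spectrum orbit by orbit, and then handle the transparent-run statement with a Krylov-subspace argument over \(\GF\).

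\textbf{Step 1: the linearized channel on Pauli strings.} At \(s_0\) the driven channel is \(\Phi_{s_0,r}^{V,Q,G}(\rho)=V\Psi_{s_0,r}^{Q,G}(\rho)V^\dagger\). By Eq.~\eqref{eq:S_many_damping} and Eq.~\eqref{eq:S_Clifford_symplectic}, every nonidentity Pauli string \(P\) is mapped to \(\pm r^{\eta(P)}\,VPV^\dagger\). Here \(\eta(P)=1\) if \(\{G,P\}=0\) and \(\eta(P)=0\) otherwise. The identity is fixed, and \(\Phi_{s_0}\) is affine on the state space. Hence on the traceless operator space it is a signed, weighted permutation matrix that is block diagonal over the nonidentity Pauli orbits \(\cO\).

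Restricted to the span of one orbit of length \(K_{\cO}\), the \(K_{\cO}\)th power of the map returns each \(P_j\) to \(\pm r^{n_{\rm hit}(\cO)}P_j\). The eigenvalues on that block therefore all have modulus \(r^{n_{\rm hit}(\cO)/K_{\cO}}\).

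\textbf{Step 2: the equivalence.} If \(n_{\rm hit}(\cO)>0\) for every orbit, the spectral radius on traceless operators is strictly below one, since \(0<r<1\). So \(\Phi_{s_0}^t(\rho)\to\id_{\rm R}/d\) for every initial state, and the maximally mixed state is the unique fixed point. The linearized iteration of the driven channel near \(s_0\) is then uniformly contracting, and continuity in \(s\) extends this contraction to a sufficiently small neighborhood of the constant history, which gives local ESP.

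Conversely, if some orbit has \(n_{\rm hit}(\cO)=0\), the block for \(\cO\) is a signed permutation and hence an isometry on its span. Take two initial states \(\id_{\rm R}/d\pm\epsilon P_1\). Their difference keeps constant Hilbert--Schmidt norm under the constant drive, so the initial-state dependence never washes out and local ESP fails. The only delicacy is the Pauli signs, which enter as a fixed \(\pm1\) per period and do not affect moduli.

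\textbf{Step 3: no \(2N\) consecutive zeros.} I would use Eq.~\eqref{eq:S_singer_hit_shadow}, \(\eta_j=\symp{\bm v_G}{M^{j-1}\bm v_P}\). Suppose \(\eta_j=\cdots=\eta_{j+2N-1}=0\), and set \(\bm w=M^{j-1}\bm v_P\). The assumption says \(\bm v_G\) is symplectically orthogonal to \(\bm w,M\bm w,\ldots,M^{2N-1}\bm w\).

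By Cayley--Hamilton, every power \(M^m\bm w\) lies in the span of these first \(2N\) vectors. This holds for negative powers as well, because \(M\) is invertible with constant term \(1\) in \(\chi_M\). So \(\bm v_G\) is orthogonal to the whole Krylov space \(\mathcal W_{\bm w}\), which contains the entire orbit. Every entry of the hit word is then zero, i.e. \(n_{\rm hit}(\cO)=0\), contradicting Eq.~\eqref{eq:S_finite_hitting}.

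Notably, this step does not need irreducibility of \(\chi_M\), so it also covers composite routes. The equivalent phrasing, that any \(2N\) consecutive routed strings contain one anticommuting with \(G\), is just a restatement of this.

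\textbf{Main obstacle.} I expect the main care to be needed in Step 2: making precise that contraction of the linearized period map at \(s_0\) is exactly the intended notion of local ESP, and that an isometric orbit block genuinely rules it out. The algebraic bound in Step 3 is routine once the Krylov-span observation is made.
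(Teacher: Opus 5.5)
Your proposal is correct and follows essentially the same route as the paper. The equivalence comes from the period map $\pm r^{n_{\rm hit}(\cO)}$ on each orbit, with the undamped orbit under constant drive giving necessity. The window bound comes from Cayley--Hamilton with unit constant term: your Krylov-span argument is the dual of the paper's binary recurrence for $\eta_j$, solved forward and backward. The one point to make explicit in Step 2 is the one the paper states. A spectral radius below one gives an adapted norm in which $\Phi_{s_0}$ contracts traceless operators by some $c<1$. It is this norm bound, not the spectral radius itself, that survives small changes of $s$ and controls products of different $\Phi_{s_t}$.
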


\begin{proof}
After one complete period,
\[
    \left(\Phi_{s_0,r}^{V,Q,G}\right)^{K_{\cO}}(P_1)
    =\pm r^{n_{\rm hit}(\cO)}P_1 .
\]
The sign is irrelevant for contraction.  The hitting condition places at
least one factor \(r<1\) on every nonstationary Pauli orbit, so the
operating-point channel has spectral radius strictly below one on the
traceless operator space.  Since the channel is unital,
\(\rho_\ast=\id_{\rm R}/d\) is then its unique attracting fixed point.  In
finite dimension, there is a norm in which this channel contracts every
traceless operator by a common factor \(c<1\).  Continuity in \(s\) preserves
a strict common contraction throughout some neighborhood of \(s_0\), so any
input sequence confined to this neighborhood exponentially washes out the
initial-state difference.  This proves the local ESP.  Conversely, local ESP
includes the constant sequence \(s_j=s_0\), for which an undamped Pauli orbit
would retain an initial-state component indefinitely; the hit condition is
therefore also necessary.

The second statement follows directly from the characteristic polynomial in
Eq.~\eqref{eq:S_characteristic_polynomial}.  Applying Cayley--Hamilton to
Eq.~\eqref{eq:S_singer_hit_shadow} gives the binary recurrence
\begin{equation}
    \eta_{j+2N}
    \oplus c_{2N-1}^{(M)}\eta_{j+2N-1}
    \oplus\cdots
    \oplus c_1^{(M)}\eta_{j+1}
    \oplus\eta_j
    =0
    \qquad\text{in }\GF .
    \label{eq:S_hit_recurrence}
\end{equation}
If \(2N\) consecutive entries vanished, this recurrence would propagate the
zeros forward.  Its constant coefficient is one, so it can also be solved
backward, forcing the entire orbit word to vanish.  That orbit would have
\(n_{\rm hit}(\cO)=0\), contradicting local ESP.  Therefore
\[
    \sum_{m=j}^{j+2N-1}\eta_m\ge1
    \qquad
    \text{for every }j .
\]
The sum in this last inequality is an ordinary count, not a modulo-two sum.
\end{proof}

\begin{corollary}[Automatic local ESP for Singer routing]
If the Clifford generator produces a Singer cycle, then any nonidentity
damping string \(G\) gives local ESP at the operating point for \(0<r<1\).
\end{corollary}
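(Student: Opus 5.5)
The plan is to verify the hitting condition \eqref{eq:S_finite_hitting} of the preceding theorem for every nonidentity Pauli orbit; the theorem then gives local ESP for $0<r<1$. By Eq.~\eqref{eq:S_singer_hit_shadow}, the hit word of an orbit starting at label $\bm v\neq0$ is $\eta_j=\symp{\bm v_G}{M^{j-1}\bm v}$. We therefore need to show that, for a nonzero label $\bm v_G$, these $K$ bits cannot all vanish.

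\textbf{Step 1: apply the spanning property.} For a Singer cycle, $\chi_M$ is irreducible, and Eq.~\eqref{eq:S_singer_orbit_basis} shows that the first $2N$ labels $\bm v, M\bm v,\ldots,M^{2N-1}\bm v$ form a basis of $\GF^{2N}$.

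\textbf{Step 2: argue by contradiction.} Suppose $\eta_j=0$ for all $j$ on some orbit. Then $\bm v_G^T\mathsf J\bm u=0$ for every $\bm u$ in this basis. By linearity this gives $\bm v_G^T\mathsf J=0$, and since $\mathsf J$ is invertible, $\bm v_G=0$. This contradicts the assumption that $G$ is a nonidentity string. Hence $n_{\rm hit}(\cO)\ge1$ on every orbit.

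\textbf{Step 3: conclude.} The theorem now yields local ESP, and $\rho_\ast=\id_{\rm R}/d$ is the unique fixed point.

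The only step that needs care is ensuring that the spanning argument covers every orbit, including those other than the route that carries $Q$. This holds because irreducibility was used for an arbitrary nonzero $\bm v$. It is also worth noting that the $2N$-window bound from the theorem is consistent with this result: even $2N$ consecutive labels already span, so at most $2N-1$ consecutive zeros can occur.
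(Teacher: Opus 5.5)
Your proposal is correct and matches the paper's proof: you use the spanning property in Eq.~\eqref{eq:S_singer_orbit_basis}, which holds for every nonzero label, to show that an all-zero hit word would make $\bm v_G$ symplectically orthogonal to a basis. Nondegeneracy of $\mathsf J$ then forces $G=I$, a contradiction, so every orbit satisfies the hitting condition of the preceding theorem, which gives local ESP.
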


\begin{proof}
Equation~\eqref{eq:S_singer_orbit_basis} shows that any \(2N\) consecutive
labels on a nonzero Singer orbit form a basis.  If their hit values all
vanished, Eq.~\eqref{eq:S_singer_hit_shadow} would make \(\bm v_G\)
symplectically orthogonal to a basis.  Nondegeneracy of the symplectic form
would then force \(\bm v_G=0\), contrary to \(G\ne I\).  Thus every
nonidentity orbit contains a damping hit.
\end{proof}

This automatic conclusion is special to irreducible Singer routing.  For a
reducible composite generator, the local structure must also be used.

\begin{proposition}[Local ESP for coprime Singer components]
Assume that all component routes are Singer cycles with pairwise coprime
periods and that the restriction \(G_\mu\) of the damping string is
nonidentity on every component.  Then the product route satisfies
Eq.~\eqref{eq:S_finite_hitting} and therefore has local ESP for \(0<r<1\).
\end{proposition}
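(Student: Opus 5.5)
The plan is to prove directly the hitting condition \eqref{eq:S_finite_hitting} for the product generator \(V_{\rm comp}\), and then invoke the Local ESP theorem. First I would fix a nonidentity orbit. Because \(V_{\rm comp}=\bigotimes_\mu V_\mu\) acts blockwise, every Pauli string factorizes as \(P=\bigotimes_\mu P_\mu\), and its orbit is \(P_j=\bigotimes_\mu P^{(\mu)}_{[j]_\mu}\). Here \(P^{(\mu)}\) is the component orbit of \(P_\mu\); when \(P_\mu=I\) it is the constant identity orbit. Applying the parity rule \eqref{eq:S_product_eta} to \(G=\bigotimes_\mu G_\mu\) gives \(\eta_j=\bigoplus_\mu f_\mu([j]_\mu)\). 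In this expression \(f_\mu:\mathbb Z_{K_\mu}\to\GF\) is the component hit word \(f_\mu(a)=\symp{\bm v_{G,\mu}}{M_\mu^{a-1}\bm v_{P,\mu}}\), and \(f_\mu\equiv0\) whenever \(P_\mu=I\).

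Next I suppose, for contradiction, that \(\eta_j=0\) for every \(j\) in one global period. Since the \(K_\mu\) are pairwise coprime, the Chinese remainder theorem makes \(j\mapsto([j]_\mu)_\mu\) a bijection from \(\mathbb Z_{K_{\rm comp}}\) onto \(\prod_\mu\mathbb Z_{K_\mu}\). Hence \(\bigoplus_\mu f_\mu(a_\mu)=0\) holds for every independent choice of phases \((a_\mu)\). Fix all phases except one and vary \(a_\nu\). This shows that every \(f_\nu\) is constant on \(\mathbb Z_{K_\nu}\). The coprimality hypothesis enters only at this decoupling step.

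The remaining and main step is to rule out a constant hit word on any \emph{active} component, meaning one with \(P_\nu\ne I\); such a component exists because \(P\) is nonidentity. The constant \(0\) is excluded by the Singer corollary applied to block \(\nu\), since \(G_\nu\ne I\) and \(M_\nu\) is a Singer cycle. For the constant \(1\), I would sum the word over one period: \(\bigoplus_{a=1}^{K_\nu}f_\nu(a)=\symp{\bm v_{G,\nu}}{S\bm v_{P,\nu}}\) with \(S=\sum_{a=0}^{K_\nu-1}M_\nu^a\). Because \(\chi_{M_\nu}\) is irreducible of degree \(2n_\nu\ge2\), the value \(1\) is not a root, so \(M_\nu-\id\) is invertible. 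Together with \((M_\nu-\id)S=M_\nu^{K_\nu}-\id=0\), this gives \(S=0\), so the sum vanishes. But \(K_\nu=2^{n_\nu}+1\) is odd, so the all-ones word sums to \(1\), which is a contradiction. Thus no active component word is constant, and the assumed vanishing of \(\eta\) on the whole orbit is impossible.

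Every nonidentity orbit therefore has \(n_{\rm hit}(\cO)>0\), and the Local ESP theorem yields local ESP for \(0<r<1\). I expect the exclusion of the constant-one word to be the only delicate point, since the Singer corollary alone does not cover it. The odd-period parity argument above is the first thing I would try, and it appears sufficient.
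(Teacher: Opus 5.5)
Your proof is correct and follows the paper's argument. The Chinese-remainder decoupling forces every active local hit word to be constant, the Singer corollary excludes the all-zero word, and the fact that $\chi_{M_\nu}(1)\neq 0$ excludes the all-one word. The only difference is how you reach that last exclusion: the paper substitutes the all-one word into the Cayley--Hamilton hit recurrence, which would require $\chi_{M_\nu}(1)=0$, while you sum the word over one period, using $\sum_{a=0}^{K_\nu-1}M_\nu^a=0$ together with the oddness of $K_\nu$; both rest on the same fact.
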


\begin{proof}
Pairwise coprimality implies that one global period visits every tuple of
local route positions.  Suppose that the global hit parity in
Eq.~\eqref{eq:S_product_eta} vanished for every such tuple.  Fixing all local
route positions except one and varying the remaining position would force the
corresponding active local hit word to be constant.  This cannot occur for a
local Singer cycle with \(G_\mu\ne I\).  Its local routing matrix \(M_\mu\)
has an irreducible characteristic polynomial.  An all-zero local word is
excluded by the Singer corollary, while an all-one local word would solve
Eq.~\eqref{eq:S_hit_recurrence} only if
\(\chi_{M_\mu}(1)=0\) in \(\GF\), which would make \(\lambda+1\) a factor of
\(\chi_{M_\mu}(\lambda)\).  Thus neither constant word is possible, and every
nonidentity product orbit contains a damping hit.  No irreducibility is
assumed for the global product matrix \(M_{\rm comp}\).
\end{proof}

The proposition is only a sufficient condition.  If a component damping
string is trivial, an orbit supported entirely on that component may remain
undamped; more generally, noncoprime component periods can correlate the
local phases and allow the parity word to vanish.  Such cases must be checked
orbit by orbit using the theorem above.  Even when ESP holds, that theorem
still requires a hit in every window of \(2N\) consecutive steps.
Composite routing can therefore lengthen the full period, but cannot create
an arbitrarily long transparent segment.

\subsection{Writing and resolving delayed products}
\label{sec:S_route_specific_second_order}

The preceding analysis concerns first-order Fisher memory, its stability,
and the attainable route length.  We finally turn to nonlinear tasks, which
depend on how an incoming perturbation acts on information already present
in the reservoir.  To keep the chain rule readable, this subsection uses
\(\Phi_s\equiv\Phi_{s,r}^{V,Q,G}\).  For
\(1\le k<j\le L_{\rm hist}\), the coordinate \(s_j\) is
older than \(s_k\).  To expose the chain rule, first differentiate with
respect to \(s_j\) while leaving the \(k\)th channel undifferentiated.  At
the operating point, all channels older than \(s_j\) act on the fixed
point and hence leave it unchanged, giving
\[
    \left.
    \frac{\partial\rho_f}{\partial s_j}
    \right|_{\substack{s_a=s_0\\ a\ne k}}
    =
    \Phi_{s_0}^{k-1}\!\left[
        \Phi_{s_k}\!\left(
            \Phi_{s_0}^{j-k-1}
            [\dot\Phi_{s_0}(\rho_\ast)]
        \right)
    \right].
\]
The older perturbation is therefore created by the rightmost
\(\dot\Phi_{s_0}\), propagated for \(j-k-1\) steps, acted on by the
derivative of the \(k\)th channel, and finally propagated for another
\(k-1\) steps to the readout.  Differentiating the preceding expression
with respect to \(s_k\) gives the mixed tangent
\begin{equation}
    \Gamma_{kj}
    =
    \left.
    \frac{\partial^2\rho_f}{\partial s_k\,\partial s_j}
    \right|_{\bm s=s_0\bm{1}}
    =
    \Phi_{s_0}^{k-1}\!\left[
        \dot\Phi_{s_0}\!\left(
            \Phi_{s_0}^{j-k-1}
            [\dot\Phi_{s_0}(\rho_\ast)]
        \right)
    \right].
    \label{eq:S_Gamma_def}
\end{equation}
Here \(\bm 1\) denotes the all-ones history vector, so
\(\bm s=s_0\bm1\) is the operating-point history.
No \(\ddot\Phi_{s_0}\) term appears because \(k\ne j\): the two
derivatives act on two distinct occurrences of the driven channel in
Eq.~\eqref{eq:S_history_state}.
Evaluating this derivative requires the action of the input-derivative
superoperator on an already stored Pauli string.  Applying
Eq.~\eqref{eq:S_many_dotPsi_general} to \(P\) and using \(\{Q,G\}=0\)
gives
\begin{equation}
    \dot\Psi_{s_0,r}^{Q,G}(P)
    =
    \begin{cases}
        \pi\sqrt{1-r^2}\,PQ,& [P,Q]=[P,G]=0,\\
        0,&\text{otherwise}.
    \end{cases}
    \label{eq:S_many_dotPsi}
\end{equation}
The older perturbation arrives at the second write event as
\begin{equation}
    \Phi_{s_0}^{j-k-1}\dot\Phi_{s_0}(\rho_\ast)
    =
    \frac{\pi\sqrt{1-r^2}}{d}
    r^{n_{j-k-1}(\eta)}P_{j-k}.
    \label{eq:S_old_tangent}
\end{equation}
It follows from Eq.~\eqref{eq:S_many_dotPsi} that the mixed response is
nonzero only if
\begin{equation}
    [P_{j-k},Q]=[P_{j-k},G]=0.
    \label{eq:S_second_condition}
\end{equation}
When this condition holds, define
\begin{equation}
    \widetilde P_1=V(P_{j-k}Q)V^\dagger,\qquad
    \widetilde P_{m+1}=V\widetilde P_mV^\dagger,
    \label{eq:S_product_route}
\end{equation}
and let \(\widetilde\eta_m=0\) or \(1\) according as
\(\widetilde P_m\) commutes or
anticommutes
with \(G\).  The resulting mixed tangent is
\begin{equation}
    \Gamma_{kj}
    =
    \frac{\pi^2(1-r^2)}{d}\,
    r^{
        n_{j-k-1}(\eta)
        +
        \sum_{m=1}^{k-1}\widetilde\eta_m
    }
    \widetilde P_k.
    \label{eq:S_Gamma_Pauli}
\end{equation}
Thus a delayed product is written along a definite Pauli route and can be
accessed by a linear readout of Pauli observables.

For the main-text target
\[
    y_t^{\rm prod}(q)=u_{t-q}u_{t-1},\qquad q\ge2,
\]
with the reservoir read after processing the current input, the relevant
indices are given below.  Throughout this discussion, \(q\) denotes only the
integer delay; binary Pauli labels are written as bold vectors.
\begin{equation}
    k=2,\qquad j=q+1,\qquad j-k=q-1.
    \label{eq:S_product_indices}
\end{equation}
If \(\bm v_Q,\bm v_G\in\GF^{2N}\) denote the binary labels of \(Q\) and
\(G\), define the delay-\(q\) write-in indicator by
\begin{equation}
    w_{\rm prod}(q;Q,G)=1
    \quad\Longleftrightarrow\quad
    \symp{\bm v_Q}{M^{q-1}\bm v_Q}=0
    \ \text{and}\
    \symp{\bm v_G}{M^{q-1}\bm v_Q}=0.
    \label{eq:S_product_write_indicator}
\end{equation}
Thus \(w_{\rm prod}=1\) means that the stored Pauli string commutes with
both \(Q\) and \(G\), so the leading mixed tangent exists.  If
\(w_{\rm prod}=0\), Eq.~\eqref{eq:S_many_dotPsi} gives
\(\Gamma_{2,q+1}=0\).  Only when \(w_{\rm prod}=1\) can the origin of the
response label be
followed directly through the last two write-store-routing steps.  The older
tangent reaches the \(u_{t-1}\) write event with label \(M^{q-1}\bm v_Q\).
Multiplication by the newly injected \(Q\) adds \(\bm v_Q\) modulo two.  The
routing operation within that same channel then contributes one factor of
\(M\), and the operating-point channel for the current input \(u_t\)
contributes a second factor before readout:
\[
    M^{q-1}\bm v_Q
    \ \xrightarrow{\ \times Q\ }\
    \bm v_Q\oplus M^{q-1}\bm v_Q
    \ \xrightarrow{\ V\text{ at }u_{t-1}\ }\
    M\!\left(\bm v_Q\oplus M^{q-1}\bm v_Q\right)
    \ \xrightarrow{\ V\text{ at }u_t\ }\
    M^2\!\left(\bm v_Q\oplus M^{q-1}\bm v_Q\right).
\]
Thus the binary Pauli label of the second-order response for delay \(q\) is
\begin{equation}
    \bm p_q^{(2)}
    =
    M^2\!\left[
        \bm v_Q\oplus\left(M^{q-1}\bm v_Q\right)
    \right]
    \in\GF^{2N}.
    \label{eq:S_product_label}
\end{equation}
Equation~\eqref{eq:S_product_label} determines the Pauli direction of the
second-order tangent only when \(w_{\rm prod}(q;Q,G)=1\).  If this write-in
condition fails, the tangent vanishes and the corresponding product is not
encoded at second order.  If it holds, the response is nonzero, but its label
\(\bm p_q^{(2)}\) may still coincide with a first-order memory direction, so
that the product signal shares an observable with a linear echo.  The analysis
below treats these two cases separately.

The aliasing condition has a particularly simple form on a genuine Singer
route.  Retaining the convention above, \(q\) is the integer delay in the
target \(u_{t-q}u_{t-1}\), while \(\ell=q-1\) is the corresponding separation
along the Pauli route.

\begin{proposition}[Singer-route aliasing count]
Let \(M\) generate a Singer route of period \(K=2^N+1\).  Conditional on a
nonzero write-in indicator \(w_{\rm prod}(q;Q,G)\), the second-order label
in Eq.~\eqref{eq:S_product_label} can coincide with a first-order route label
only when \(3\mid K\), and then only at
\begin{equation}
    q=1+\frac K3,
    \qquad
    q=1+\frac{2K}{3}.
    \label{eq:S_singer_collision_delays}
\end{equation}
The full return \(q=K+1\) is a separate zero of the mixed response.  Since
\(3\mid(2^N+1)\) exactly when \(N\) is odd, no such route aliases occur for
even \(N\); for odd \(N\), there are at most the two candidates above.  Thus
the number of possible aliases remains bounded by two as \(N\) grows, while
the return zero is always present.
\end{proposition}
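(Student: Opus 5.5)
The plan is to translate the coincidence condition into an equation inside the finite field that a Singer generator carries, and then solve that equation with a norm argument.

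\textbf{Step 1: reduce to a label equation.} The first-order route labels are the orbit $\{M^m\bm v_Q\}_{m}$, since $P_k$ has label $M^k\bm v_Q$. By Eq.~\eqref{eq:S_product_label}, with $\ell=q-1$, a coincidence means $M^2(\bm v_Q\oplus M^{\ell}\bm v_Q)=M^{m}\bm v_Q$ for some integer $m$. Because $M$ is invertible, this is equivalent to
\[
(\id_{2N}\oplus M^{\ell})\bm v_Q=M^{b}\bm v_Q
\]
for some $b$. The write-in indicator only guarantees that the response is nonzero; it plays no further role. If $K\mid\ell$, then $\bm v_Q\oplus M^\ell\bm v_Q=0$. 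The ``product'' is then $QQ=\id$, which is not a traceless direction, and the mixed response vanishes. This is the return zero at $q=K+1$, which I treat separately. From here on I assume $K\nmid\ell$.

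\textbf{Step 2: pass to the field.} $\chi_M$ is irreducible of degree $2N$, and by Eq.~\eqref{eq:S_singer_orbit_basis} $\bm v_Q$ is a cyclic vector. The map $f(\lambda)\mapsto f(M)\bm v_Q$ therefore gives an isomorphism of $\GF[\lambda]$-modules
\[
\GF[\lambda]/(\chi_M)\;\cong\;\GF^{2N}.
\]
The left side is the field $\mathbb F_{4^N}$, with $M$ acting as multiplication by a root $\zeta$. A polynomial in $M$ annihilates $\bm v_Q$ only if it is zero in this field. The condition from Step 1 thus becomes
\[
1+x=\zeta^{b},\qquad x=\zeta^{\ell}\neq1 .
\]
Here $\zeta$ has multiplicative order exactly $K=2^N+1$, because $M$ has exact order $K$ and $f(M)=0$ iff $\chi_M\mid f$.

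\textbf{Step 3: solve with the norm.} The order-$K$ subgroup $\mu_K\subset\mathbb F_{4^N}^\times$ consists exactly of the elements $y$ with $y^{2^N+1}=1$, that is, with $y\,y^{2^N}=1$. The map $y\mapsto y^{2^N}$ is a field automorphism, and on $\mu_K$ it acts as inversion. Both $x$ and $1+x$ lie in $\mu_K$, so
\[
(1+x)(1+x)^{2^N}=(1+x)(1+x^{-1})=1 .
\]
Expanding in characteristic two gives $x+x^{-1}=1$, that is, $x^2+x+1=0$. So $x$ is a primitive cube root of unity. This requires $3\mid K$, and it forces $\ell\equiv K/3$ or $2K/3\pmod K$.

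Conversely, when $3\mid K$, $\omega=\zeta^{K/3}$ satisfies $1+\omega=\omega^2\in\mu_K$. The label therefore does alias; this confirms the candidates but is not needed for the ``only when'' claim. Restricting to $1\le\ell<K$ gives $q=1+K/3$ and $q=1+2K/3$.

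\textbf{Step 4: parity of $N$.} Since $2\equiv-1\pmod 3$, we have $2^N+1\equiv(-1)^N+1\pmod 3$. Hence $3\mid K$ iff $N$ is odd, which gives the stated count of at most two aliases together with the always-present return zero.

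The main obstacle is Step 2. It must be justified that coincidence of labels, a statement about vectors, is equivalent to an identity among powers of $\zeta$, including the claim that $\zeta$ has order exactly $K$. I would handle this through the cyclic-vector isomorphism, citing Eq.~\eqref{eq:S_singer_orbit_basis} and the exact-order definition of a Singer cycle.
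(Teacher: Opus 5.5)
Your proof is correct and is essentially the paper's argument in field language. The cyclic-vector isomorphism plays the role of the paper's observation that a polynomial in $M$ annihilating $\bm v_Q$ annihilates the whole orbit basis, which turns the vector identity into $\id_{2N}\oplus M^{\ell}=M^{m}$. Your Frobenius map $y\mapsto y^{2^N}$, acting as inversion on $\mu_K$, is exactly the paper's $N$-fold squaring with $2^N\equiv-1\pmod K$. The only difference is how the alias index is eliminated: you multiply the conjugated identity by the original to get $(1+x)(1+x^{-1})=1$, hence $x^2+x+1=0$ for $x=\zeta^{\ell}$ directly, whereas the paper equates two expressions for $\id_{2N}\oplus M^{-\ell}$ to obtain $\ell\equiv 2m\pmod K$ and then shows that $M^{m}$ has order three.
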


This proposition classifies aliases only after a nonzero mixed response has
been written.  Whether that response is present still depends on the
\(Q,G\)-dependent write-in condition \(w_{\rm prod}\), which can introduce
additional zeros.

\begin{proof}
Let \(\ell=q-1\) and write \(\bm v=\bm v_Q\).  Aliasing means that the
binary sum in Eq.~\eqref{eq:S_product_label} lies on the first-order route.
The common prefactor \(M^2\) merely shifts the route index, so for some
integer \(m\), defined modulo \(K\),
\begin{equation}
    \bm v\oplus M^\ell\bm v=M^m\bm v .
    \label{eq:S_singer_collision_vector}
\end{equation}
For a Singer route, the \(2N\) vectors
\(\bm v,M\bm v,\ldots,M^{2N-1}\bm v\) form a basis of \(\GF^{2N}\), as
shown in Eq.~\eqref{eq:S_singer_orbit_basis}.  Let
\(P(M)=\id_{2N}\oplus M^\ell\oplus M^m\).  Since \(P(M)\) is a polynomial in
\(M\), it commutes with \(M\).  Thus \(P(M)\bm v=0\) implies
\(P(M)M^a\bm v=M^aP(M)\bm v=0\) for every vector in the orbit basis.  The
matrix \(P(M)\) therefore vanishes on the entire label space, and hence
\begin{equation}
    \id_{2N}\oplus M^\ell=M^m,
    \label{eq:S_singer_collision_matrix}
\end{equation}
where \(\oplus\) denotes matrix addition modulo two.

We next extract the allowed exponents from this matrix identity.  The only
unusual step is that all matrix entries and additions are in \(\GF\).  Since
\(\id_{2N}\) and \(M^\ell\) commute, squaring their binary sum gives
\[
    \left(\id_{2N}\oplus M^\ell\right)^2
    =\id_{2N}\oplus M^\ell\oplus M^\ell\oplus M^{2\ell}
    =\id_{2N}\oplus M^{2\ell}.
\]
The two middle terms cancel because a matrix added to itself is zero over
\(\GF\).  Squaring both sides of
Eq.~\eqref{eq:S_singer_collision_matrix} repeatedly therefore gives, after
\(a\) squarings,
\[
    \id_{2N}\oplus M^{2^a\ell}=M^{2^a m}.
\]
Set \(a=N\).  Since \(K=2^N+1\), we have
\(2^N\equiv-1\pmod K\).  Moreover, \(M^K=\id_{2N}\), so exponents of
\(M\) may be reduced modulo \(K\).  The repeatedly squared identity is thus
\[
    \id_{2N}\oplus M^{-\ell}=M^{-m}.
\]
We now obtain a second expression with exactly the same left-hand side.
Multiplying the original identity
\(\id_{2N}\oplus M^\ell=M^m\) by \(M^{-\ell}\) gives
\[
    \id_{2N}\oplus M^{-\ell}=M^{m-\ell}.
\]
Equating the two right-hand sides and using the exact order \(K\) of \(M\)
yields
\begin{equation}
    M^{-m}=M^{m-\ell}
    \quad\Longrightarrow\quad
    -m\equiv m-\ell\pmod K
    \quad\Longrightarrow\quad
    \ell\equiv2m\pmod K .
    \label{eq:S_singer_collision_congruence}
\end{equation}
Here exact order means that \(M^a=M^b\) holds precisely when
\(a\equiv b\pmod K\).

It remains to determine which values of \(m\) can satisfy this congruence.
Let \(X=M^m\).  Substituting \(\ell\equiv2m\) into
Eq.~\eqref{eq:S_singer_collision_matrix} gives
\[
    \id_{2N}\oplus X\oplus X^2=0.
\]
Multiplication by \(\id_{2N}\oplus X\) is useful because all intermediate
terms cancel in pairs over \(\GF\):
\[
    0
    =\left(\id_{2N}\oplus X\right)
     \left(\id_{2N}\oplus X\oplus X^2\right)
    =\id_{2N}\oplus X^3.
\]
Consequently \(X^3=\id_{2N}\), or equivalently
\begin{equation}
    M^{3m}=\id_{2N}.
    \label{eq:S_singer_collision_order_three}
\end{equation}
Because \(M\) has exact order \(K\), Eq.~\eqref{eq:S_singer_collision_order_three}
is equivalent to \(3m\equiv0\pmod K\).  The solution \(m\equiv0\) is
impossible in Eq.~\eqref{eq:S_singer_collision_matrix}, because it would
require the invertible matrix \(M^\ell\) to vanish.  Nonzero solutions
therefore exist only if \(3\mid K\), in which case
\(m\equiv K/3\) or \(2K/3\pmod K\).  Equation~\eqref{eq:S_singer_collision_congruence}
then gives \(\ell\equiv2K/3\) or \(K/3\pmod K\), respectively.  Since
\(q=\ell+1\), these are precisely the two delays in
Eq.~\eqref{eq:S_singer_collision_delays}.  Finally,
\(K=2^N+1\) is divisible by three exactly when \(N\) is odd, because
\(2\equiv-1\pmod3\).  The full return \(\ell=K\) is different:
\(M^K\bm v=\bm v\), so \(\bm v\oplus M^K\bm v=0\) and the mixed label
vanishes rather than aliasing a nonzero first-order direction.
\end{proof}

The implication used in the proof is therefore one-way.  If
\(w_{\rm prod}(q;Q,G)=1\) and the resulting nonzero second-order label
aliases a first-order route direction, then \(q\) must be one of the two delays
in Eq.~\eqref{eq:S_singer_collision_delays}.  The converse need not hold:
the write-in condition may still give \(w_{\rm prod}=0\), and at all other
delays the mixed response may likewise be absent for the chosen \(Q,G\).

\begin{proposition}[Initial second-order write-in window]
For any nonidentity write string \(Q\) on an \(N\)-qubit Singer route, let
\(L_2\) be the largest integer such that
\(w_{\rm prod}(\ell+1;Q,G)=1\) for every
\(\ell=1,\ldots,L_2\).  Then
\begin{equation}
    L_2\le 2N-2.
    \label{eq:S_product_initial_run_bound}
\end{equation}
Thus a continuously supported initial product window grows at most linearly
with the reservoir size, or logarithmically with the Singer period
\(K=2^N+1\).
\end{proposition}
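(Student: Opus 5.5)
The plan is to discard the $G$-condition and prove the bound from the $Q$-condition alone, using the Singer basis property Eq.~\eqref{eq:S_singer_orbit_basis} in the same way as the proof of the Singer ESP corollary. By Eq.~\eqref{eq:S_product_write_indicator}, $w_{\rm prod}(\ell+1;Q,G)=1$ requires in particular
\[
    a_\ell\equiv\symp{\bm v_Q}{M^\ell\bm v_Q}=0 .
\]
Let $L_2^{(Q)}$ be the longest initial run of zeros in $(a_1,a_2,\ldots)$. Since dropping the $G$-condition can only lengthen the run, $L_2\le L_2^{(Q)}$. It therefore suffices to show that $a_\ell=1$ for some $\ell\in\{1,\ldots,2N-1\}$.

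The key extra observation is that $a_0=0$ holds automatically. The binary symplectic form is alternating: $\symp{\bm v}{\bm v}=2\,\bm x\cdot\bm z=0$ in $\GF$. Physically, every Pauli string commutes with itself. So suppose, for contradiction, that $a_1=\cdots=a_{2N-1}=0$. Together with $a_0=0$, this gives
\[
    \symp{\bm v_Q}{M^\ell\bm v_Q}=0\quad\text{for all }\ell=0,1,\ldots,2N-1 .
\]
This is exactly the all-zero hit word on $2N$ consecutive steps of the orbit of $\bm v_Q$, with $Q$ itself playing the role of the damping string in Eq.~\eqref{eq:S_singer_hit_shadow}.

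Because $M$ generates a Singer cycle and $\bm v_Q\neq0$ (since $Q\neq I$), Eq.~\eqref{eq:S_singer_orbit_basis} says that $\bm v_Q,M\bm v_Q,\ldots,M^{2N-1}\bm v_Q$ form a basis of $\GF^{2N}$. Then $\bm v_Q$ is symplectically orthogonal to a basis, and nondegeneracy of $\mathsf J$ forces $\bm v_Q=0$, a contradiction. Hence some $\ell\le 2N-1$ has $a_\ell=1$, which gives $w_{\rm prod}(\ell+1;Q,G)=0$ and therefore $L_2\le 2N-2$.

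Equivalently, the same conclusion follows from the recurrence Eq.~\eqref{eq:S_hit_recurrence} applied to the word $(a_\ell)$, exactly as in the local-ESP theorem. The final scaling remark is just $2N-2=2\log_2(K-1)-2$.

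I expect no real obstacle. The only point needing care is to count the automatic zero $a_0$ toward the $2N$ consecutive entries. Without it, the argument would give only the weaker bound $2N-1$.
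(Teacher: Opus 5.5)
Your proposal is correct and matches the paper's proof. Both drop the $G$-condition, use that the alternating form supplies the extra zero at $\ell=0$, and invoke the Singer basis property Eq.~\eqref{eq:S_singer_orbit_basis} together with nondegeneracy of the symplectic form to force $\bm v_Q=0$.
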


\begin{proof}
The first condition in
Eq.~\eqref{eq:S_product_write_indicator} gives
\(\symp{\bm v_Q}{M^\ell\bm v_Q}=0\) for
\(\ell=1,\ldots,L_2\); the same expression also vanishes at \(\ell=0\)
because the symplectic form is alternating.  If \(L_2\ge2N-1\), the nonzero
label \(\bm v_Q\) would therefore be symplectically orthogonal to
\[
    \bm v_Q,M\bm v_Q,\ldots,M^{2N-1}\bm v_Q .
\]
These \(2N\) consecutive Singer labels form a basis of \(\GF^{2N}\) by
Eq.~\eqref{eq:S_singer_orbit_basis}.  Nondegeneracy of the symplectic form
would then force \(\bm v_Q=0\), contradicting \(Q\ne I\).
\end{proof}

The \(K=9\) model gives the smallest example of the two universal Singer
aliases.  Its routing matrix obeys the following identity over \(\GF\):
\begin{equation}
    M_9^6+M_9^3+\id_6=0 .
    \label{eq:S_K9_order_three}
\end{equation}
Here the additions are modulo two.  Applying this identity to \(\bm v_Q\)
gives
\[
    \bm v_Q\oplus\left(M_9^3\bm v_Q\right)=M_9^6\bm v_Q,
    \qquad
    \bm v_Q\oplus\left(M_9^6\bm v_Q\right)=M_9^3\bm v_Q.
\]
Equation~\eqref{eq:S_product_label} therefore reduces to
\begin{equation}
    \bm p_4^{(2)}
    =M_9^2\!\left[
        \bm v_Q\oplus\left(M_9^3\bm v_Q\right)
    \right]
    =M_9^8\bm v_Q,
    \qquad
    \bm p_7^{(2)}
    =M_9^2\!\left[
        \bm v_Q\oplus\left(M_9^6\bm v_Q\right)
    \right]
    =M_9^5\bm v_Q.
    \label{eq:S_K9_product_collisions}
\end{equation}
Whenever the write-in condition in
Eq.~\eqref{eq:S_product_write_indicator} holds, the \(q=4\) and \(q=7\)
product tangents therefore alias the first-order
route at \(P_8\) and \(P_5\), respectively.  For the fixed benchmark pair
\(Q=XIX\), \(G=ZXI\), however, the write-in indicator vanishes at both
delays.  Their leading mixed tangents are therefore absent rather than
merged with these linear-memory directions.

The preceding \(K=9\) example concerns a single irreducible Singer route.
The second fixed Clifford reservoir in Fig.~4 instead uses the composite
period-15 route introduced in Eq.~\eqref{eq:S_explicit_V15}.  Its product
response can fail for an additional, physically transparent reason: one
component route may return before the complete product route closes.  To make
these partial returns explicit, write
\begin{equation}
    M_{15}=M_5\oplus M_3,
    \qquad
    \bm v_Q=(\bm v_{Q,5},\bm v_{Q,3}).
    \label{eq:S_K15_product_clock_labels}
\end{equation}
Here \(\oplus\) denotes the direct sum of the two component routing matrices,
not the binary addition used between Pauli labels.  The response label then
separates into the corresponding components:
\begin{equation}
    \bm p_q^{(2)}
    =
    \left(
        M_5^2\!\left[
            \bm v_{Q,5}\oplus\left(M_5^{q-1}\bm v_{Q,5}\right)
        \right],
        M_3^2\!\left[
            \bm v_{Q,3}\oplus\left(M_3^{q-1}\bm v_{Q,3}\right)
        \right]
    \right).
    \label{eq:S_K15_product_response}
\end{equation}
For \(q=6\) and \(q=11\), the exponent \(q-1\) is a nonzero multiple of
\(5\).  The two-qubit component has returned,
\(M_5^{q-1}\bm v_{Q,5}=\bm v_{Q,5}\), whereas the one-qubit component has
not.  The stored string then fails the first condition in
Eq.~\eqref{eq:S_second_condition}, independently of \(G\), and the leading
mixed tangent vanishes.  At \(q=16\), the full route returns to \(Q\), and
the second condition fails because \(\{Q,G\}=0\).  These obstructions account
for the corresponding peaks in the numerical product benchmark.

\subsection{A parameter-counting estimate beyond Clifford routing}

Equation~\eqref{eq:S_composite_period_bound} shows that coprime product routes
extend the available period without changing its \(O(2^N)\) scale.  This
raises a broader question.  Is the same scale expected when the recurrent
dynamics is not restricted to Clifford routing?  The following argument is a
local parameter-counting estimate for regular solution families, not a
universal upper bound.  Let \(d=2^N\).  A pure input qubit and a joint unitary
\(U\in U(2d)\) generate a channel of Kraus rank at most two.  After removing a
unitary rotation of the discarded output qubit and a common change of
reservoir basis, the effective dimension of a generic rank-two family is
\begin{equation}
    n_{\rm par}(d)=4d^2-4-(d^2-1)=3d^2-3
    \label{eq:S_rank2_parameter_dimension}
\end{equation}
independent real parameters.

Let \(\widehat\Delta_j\) denote the normalized tangent obtained after \(j\)
reservoir steps.  A length-\(K\) cycle requires \(K(K-1)/2\) pairwise
orthogonality conditions.  Closing the normalized route,
\(\widehat\Delta_K=\pm\widehat\Delta_0\), imposes another
\(d^2-2\) conditions because the traceless Hermitian operator space has
dimension \(d^2-1\).  The total number of independent equalities is therefore
\begin{equation}
    n_{\rm eq}(K)=\frac{K(K-1)}{2}+d^2-2.
    \label{eq:S_cycle_constraint_count}
\end{equation}
If the orthogonality and closure constraints have independent gradients at a
solution, which is the regularity assumption used here, they cannot outnumber
the available parameters.  Thus \(n_{\rm eq}(K)\le n_{\rm par}(d)\) gives
\begin{equation}
    K\le 2d=2^{N+1}.
    \label{eq:S_regular_cycle_bound}
\end{equation}
Combining this estimate with the dimension of the traceless operator space
gives
\[
    K\le\min\{2d,d^2-1\}.
\]
This estimate does not exclude isolated constructions protected by additional
symmetries.  ESP does
not change the count because strict contraction is an inequality rather than
an additional equality.  For the three-qubit reservoirs in the main text,
\(d=8\) and Eq.~\eqref{eq:S_regular_cycle_bound} gives \(K\le16\); the
period-\(15\) product route is therefore close to the natural
single-input-qubit scale.

\section{Numerical Protocols for the Letter}
\label{sec:S_numerical_protocols}

This section specifies the simulations behind Figs.~1, 3, and 4 of the
Letter.  The first two subsections test the single-qubit optimum and enumerate
Clifford damping patterns.  The last subsection gives the common data,
measurement, and regression protocol used to compare the three fixed
reservoirs in the task benchmark.

\subsection{Optimization over unrestricted
\texorpdfstring{\(SU(4)\)}{SU(4)} dynamics}

The red points in Fig.~1 test whether a generic input--reservoir interaction
can improve upon the Fisher cycle.  We directly minimized
\(C_{\rm GM}(\Omega_\alpha,\mathsf H)\) over \(U\in SU(4)\) at
\(s_0=1/2\), using
\[
    \Omega_\alpha
    =
    \operatorname{diag}\!\left(
        \frac{1-\alpha}{2},
        \frac{1-\alpha}{2},
        \alpha
    \right),
    \qquad
    \alpha\in\{0,1/50,\ldots,1\}\cup\{1/3\}.
\]
For every trial gate, the
channel fixed point and the three-delay QFIM were recomputed.  The channel
derivative was evaluated by a centered finite difference with step
\(10^{-6}\).

At each \(\alpha\), the four lowest-cost gates from an independent scan of
\(2\times10^4\) Haar-random \(SU(4)\) matrices were used as initial points.
The initial pool was screened to ensure that each gate defines a valid
fading-memory model at the operating point.  The computed fixed point was
required to satisfy the channel equation to residual \(10^{-9}\), while the
spectral radius of the nonstationary sector was required to be below
\(1-10^{-8}\).  The latter condition enforces strict contraction toward the
fixed point and is the numerical ESP criterion used here.  We further
required the three-delay QFIM to be nonsingular
\((\lambda_{\min}>10^{-9})\), so that all three local input directions are
identifiable and the GM cost is finite.

Each seed was refined for \(420\) trials.  A trial direction was generated
from a complex Ginibre matrix by taking its Hermitian part, removing its
trace, and normalizing its Frobenius norm.  The resulting \(\Xi\) explores all
\(15\) independent generator directions of \(SU(4)\), so the search is not
restricted to Clifford gates or to the write--store--routing ansatz.  The
trial gate was \(U'=e^{-i\epsilon\Xi}U\).  Cost-reducing moves were accepted
directly, while uphill moves were accepted with probability
\(\exp[(C_{\rm GM}^{\rm old}-C_{\rm GM}^{\rm new})/T_{\rm SA}]\), where the
simulated-annealing temperature is
\(T_{\rm SA}=\max[10^{-5},0.015\epsilon]\).  The step began at
\(\epsilon=0.16\), was multiplied by \(0.986\) after every trial, and was
bounded below by \(10^{-5}\).  The best gate among the four refined seeds
gives the red point.  Within this search, no optimized \(SU(4)\) gate
improved upon the Fisher-cycle frontier.

\subsection{Enumeration and optimization of damping patterns}

For the three-qubit Clifford scans, we enumerated admissible Pauli pairs
\((Q,G)\) satisfying \(\{Q,G\}=0\), the target route condition, and the global
finite Pauli hitting condition.  For each admissible pair, the hit word
\(\eta_K(V,Q,G)\) was computed from Eq.~\eqref{eq:S_eta_word}, and the QFIM was
computed analytically from Eq.~\eqref{eq:S_many_H_eta}.  The uniform-window
score used in the main text is
\begin{equation}
    C_{\rm GM}^{\rm unif}(\eta_K)
    =
    \min_{0<r<1}
    \frac{1}{K}
    \left[
        \sum_{k=1}^K
        \frac{1}{
            \sqrt{
            \left(\mathsf H^{(K)}_{\eta_K}(r)\right)_{kk}
            }
        }
    \right]^2,
    \label{eq:S_pattern_cost}
\end{equation}
where the common Hilbert-space dimension factor in the GM bound is omitted
because all three-qubit patterns being compared have the same dimension.  The
enumeration gives 32 admissible damping words for the \(K=9\) Singer route
and 22 admissible damping words for the \(K=15\) product route.

\subsection{Fixed-reservoir task benchmark}
\label{sec:S_dynamical_benchmarks}

The main-text benchmark evolves the full finite-amplitude channel rather
than the linearized tangent equations.  The input sequence is
\begin{equation}
    s_t=s_0+\delta w\,u_t,
    \qquad
    s_0=\frac12,
    \qquad
    u_t\sim{\rm Unif}[-1,1],
    \label{eq:S_task_input}
\end{equation}
with \(\delta w=0.25\) in the reported data.  The tested targets are
\begin{equation}
    y_t^{\rm lin}(q)=u_{t-q},
    \qquad
    y_t^{\rm prod}(q)=u_{t-q}u_{t-1},
    \qquad q=2,\ldots,20 .
    \label{eq:S_task_targets}
\end{equation}
The product target couples the input \(q\) steps before readout to the most
recent previous input, in agreement with the convention used in the
second-order analysis above.

\paragraph{Reservoir selection and data.}
One representative of each model class is fixed before the test sequences
are generated and is then used for every delay and both targets.  The two
Clifford reservoirs are
\begin{equation}
    (K,Q,G,r)=(9,XIX,ZXI,0.35),
    \qquad
    (K,Q,G,r)=(15,XIX,YYX,0.56).
    \label{eq:S_fixed_clifford_models}
\end{equation}
The Pauli pairs are fixed before the task simulations.  The fading parameter
of each route is then selected once on validation data using the randomized
local-Pauli readout at \(\delta w=0.25\) and \(N_{\rm shot}=6400\).  We scan
\(r=0.28,0.29,\ldots,0.40\) for \(K=9\) and
\(r=0.51,0.52,\ldots,0.64\) for \(K=15\), minimizing the equally weighted
mean of the linear and product validation NRMSE over \(q=2,\ldots,8\).
This gives the values in Eq.~\eqref{eq:S_fixed_clifford_models}, which are
then held fixed for all reported delays, both tasks, and both readout
conditions.  The tuning sequences are disjoint from those used for the
reported curves.
For the task-independent screening below, each Ising candidate is evaluated
through its \(K=2^3+1=9\)-delay QFIM.  Here
\(\Tr\mathsf H\) is the total delay sensitivity,
\(K\lambda_{\min}(\mathsf H)\) measures the weakest resolved input
combination after removing the trivial window-length scale,
\(r_{\rm eff}=(\Tr\mathsf H)^2/\Tr(\mathsf H^2)\) estimates the number of
resolved delay directions, and
\(\operatorname{cond}(\mathsf H)=\lambda_{\max}/\lambda_{\min}\) measures
their imbalance.  Large values of the first three diagnostics and a small
condition number are favorable; Sec.~\ref{sec:S_qfim_scaling} discusses them
in detail.

The Ising baseline is selected once from an independently generated library
of \(1000\) reservoirs satisfying the operating-point ESP test described
below.  Evaluating the full task benchmark for every library member would be
unnecessarily expensive, so we first retain candidates with simultaneously
large total Fisher information, large weakest-delay information, broad usable
rank, and moderate spectral imbalance.  For a metric \(x_i\) of candidate
\(i\), let \(\operatorname{pos}_{\downarrow}(x_i)\) denote its ordinal position
after sorting the \(1000\) values from largest to smallest, and let
\(\operatorname{pos}_{\uparrow}(x_i)\) denote the position after sorting from
smallest to largest; position one is best in either ordering.
The prescreening score is
\begin{equation}
\begin{aligned}
S_{{\rm pre},i}={}&
\operatorname{pos}_{\downarrow}\!\left(K\lambda_{\min}(\mathsf H_i)\right)
+\operatorname{pos}_{\downarrow}\!\left(r_{\rm eff}(\mathsf H_i)/K\right)\\
&+\operatorname{pos}_{\downarrow}\!\left(\Tr\mathsf H_i\right)
+\operatorname{pos}_{\uparrow}\!\left(\operatorname{cond}(\mathsf H_i)\right),
\end{aligned}
\label{eq:S_ising_prescreen_score}
\end{equation}
where \(\mathsf H_i\) is the delay-space QFIM of candidate \(i\).
The four ordinal ranks are summed so that no arbitrary relative units are
assigned to the four diagnostics.  We retain the \(120\) candidates with the
smallest \(S_{{\rm pre},i}\).  These candidates are then ranked by the
mean exact-readout validation NRMSE of the linear and product targets over
\(q=2,\ldots,8\) and three independent input sequences.  The selected Ising
unitary is held fixed thereafter; no test data and no finite-shot realization
enter this selection.

Each trajectory starts from the maximally mixed reservoir state
\(\rho_0=\id_{\rm R}/8\) and undergoes \(400\) washout steps.  We subsequently
collect \(1800\), \(900\), and \(900\) states for
training, validation, and test, respectively.  A linear ridge readout is
fitted for every target.  The Pauli features are standardized using the
training segment, and the ridge penalty is selected on the validation segment
from \(0\) and \(10^{-10},10^{-9},\ldots,10^{-2}\).  After this selection,
the feature normalization and readout coefficients are refitted on the
combined training and validation segments and evaluated once on the test
segment.  Performance is reported as the held-out normalized root-mean-square
error
\begin{equation}
    {\rm NRMSE}
    =
    \sqrt{
        \frac{
            \langle(\hat y_t-y_t)^2\rangle_{\rm test}
        }{
            \operatorname{Var}_{\rm test}(y_t)
        }
    }.
    \label{eq:S_NRMSE}
\end{equation}
The exact-readout curves are averaged over four independent input
realizations.  \textbf{Only the classical linear readout is refitted for each target;
the three reservoir dynamics remain fixed across all delays and both tasks.}

\paragraph{Exact Pauli features and shadow readout.}
Exact readout uses the expectation values of all \(4^3-1=63\) nonidentity
reservoir Pauli strings.  The finite-copy experiment instead samples one
uniformly random local Pauli basis on each copy.  Let
\(\bm b=(b_1,b_2,b_3)\in\{X,Y,Z\}^3\) denote the three selected bases and
\(\bm o=(o_1,o_2,o_3)\in\{+1,-1\}^3\) the corresponding outcomes.  For fixed
\((\bm b,\bm o)\), the product projector is
\(\bigotimes_a(\id+o_ab_a)/2\).  Each of the \(3^3=27\) settings is selected
with probability \(1/27\), so including the random setting in the recorded
outcome gives the \(27\times8=216\)-element POVM
\begin{equation}
    E_{\bm b,\bm o}
    =
    \frac{1}{27}
    \bigotimes_{a=1}^3\frac{\id+o_a b_a}{2}.
    \label{eq:S_shadow_povm}
\end{equation}
For a Pauli string \(P=\bigotimes_aP_a\), let \(\operatorname{wt}(P)\) be its number of
nonidentity factors.  A shot contributes only when every nonidentity \(P_a\)
matches the sampled basis \(b_a\), an event of probability
\(3^{-\operatorname{wt}(P)}\).
Multiplying the observed eigenvalues and compensating this matching
probability gives the classical-shadow estimator
\cite{SMHuang2020ClassicalShadows}
\begin{equation}
    \widehat p_P(\bm b,\bm o)
    =
    3^{\operatorname{wt}(P)}
    \prod_{a:P_a\ne I}o_a\,
    \mathbf 1[b_a=P_a],
    \qquad
    \mathbb E\widehat p_P=\Tr(\rho P).
    \label{eq:S_shadow_estimator}
\end{equation}
For example,
\(\widehat p_{XIZ}=9o_1o_3\mathbf 1[b_1=X]\mathbf 1[b_3=Z]\): only one ninth
of the settings match its two nonidentity factors, and the prefactor restores
the correct average.  Thus the same shadow record provides unbiased estimates
of all Pauli features, although higher-weight strings have fewer matching
shots and generally larger variances.  \textbf{The shadow protocol therefore
changes the precision with which these Pauli features are read, but not the
reservoir dynamics that created them.}
At every time step, the complete \(216\)-event distribution is sampled
multinomially with \(N_{\rm shot}=6400\).  All reconstructed Pauli features
therefore originate from the same outcome counts, retaining their sampling
correlations.  We use four independently generated input sequences and draw
four conditionally independent shadow records for each sequence.  The four
scores are first averaged at fixed input sequence, and the plotted mean and
standard error are then evaluated across the four sequence-level averages.
This preserves the nested sampling structure instead of treating repeated
shadow draws from the same trajectory as independent input realizations.
The same protocol is used for every curve in the right column of Fig.~4 of
the main text.

\paragraph{Clifford reservoirs.}
For the Clifford reservoirs, the full finite-amplitude channel is iterated
directly.  Given a reservoir Clifford \(V\), writing string \(Q\), damping
string \(G\), and fading parameter \(r\), the one-step unitary is
\begin{equation}
    U_{r,V,Q,G}
    =
    (\id_{\rm in}\otimes V)\,
    W_r(Q,G)\,
    (L_{\rm in}\otimes\id_{\rm R}),
    \qquad
    L_{\rm in}=e^{+\ii\pi Y/4},
    \label{eq:S_numeric_Ur}
\end{equation}
with
\begin{equation}
    W_r(Q,G)
    =
    \begin{pmatrix}
        a_r\id_{\rm R} & b_rQ\\
        b_rG & -a_rGQ
    \end{pmatrix},
    \qquad
    a_r=\sqrt{\frac{1+r}{2}},
    \qquad
    b_r=\sqrt{\frac{1-r}{2}} .
    \label{eq:S_numeric_Wr}
\end{equation}
The \(K=9\) model uses the Singer generator \(V_9\), whereas the \(K=15\)
model uses the product generator \(V_{15}=V_5\otimes V_3\).  Their complete
finite-amplitude unitaries are fixed by Eq.~\eqref{eq:S_fixed_clifford_models}
and are not reoptimized for individual targets.  No explicit history cutoff
is imposed; all earlier inputs remain in the recurrent state until they are
attenuated by the channel.

\paragraph{Random Ising baseline.}
The Ising library consists of fully connected Ising reservoirs with uniform
\(X\) and \(Z\) fields on one input qubit and three reservoir qubits.  Each joint unitary
has the form
\begin{equation}
    U_{\rm Ising}=e^{-\ii\tau_{\rm Ising}H_{\rm Ising}},
    \label{eq:S_ising_U}
\end{equation}
where
\begin{equation}
    H_{\rm Ising}
    =
    \sum_{a<b}J_{ab}X_aX_b
    +
    h^x\sum_a X_a
    +
    h^z\sum_a Z_a .
    \label{eq:S_ising_H}
\end{equation}
The indices \(a,b\) run over all four qubits before the input is traced out;
qubit \(1\) is the input and qubits \(2,3,4\) form the reservoir.  The fixed
Ising model used in Fig.~4 is specified by
\begin{equation}
\begin{gathered}
    (\tau_{\rm Ising},h^x,h^z)=(1.323,0.2193,-3.347),\\
    (J_{12},J_{13},J_{14},J_{23},J_{24},J_{34})
    =(-0.2175,-0.04014,-0.9958,0.8867,0.5425,0.8112).
\end{gathered}
    \label{eq:S_fixed_ising_parameters}
\end{equation}
No additional normalization is applied to these six couplings.
The parameters are drawn independently and uniformly from
\(J_{ab}\in[-1,1]\), \(h^x,h^z\in[-4,4]\), and
\(\tau_{\rm Ising}\in[0.05,10]\).  At \(s_0=1/2\), a sample is retained only
when its fixed-point residual is below \(10^{-9}\), its fixed state is full
rank to numerical tolerance, and the largest eigenvalue modulus on the
nonstationary channel sector is below \(1-10^{-5}\).  These tests remove
nonconvergent and nearly nonmixing dynamics before the task-based selection
described above.  The selected Ising unitary is then simulated with exactly
the same washout, trajectories, readout features, multinomial sampling, and
ridge protocol as the two Clifford reservoirs.

The systematic sharp peaks of the Clifford product curves coincide with the
route-specific response zeros and first/second-order aliases derived in
Sec.~\ref{sec:S_route_specific_second_order}.  Residual deviations can also
arise from finite input amplitude, shadow noise, and regression conditioning.

\section{Finite-Amplitude Robustness and Scaling with Reservoir Size}
\label{sec:S_robustness_scaling}

The preceding sections characterize the reservoir through its local tangent
dynamics and QFIM at the operating point \(s_0\).  Connecting these results to a
finite-data QRC requires three further checks.  The driven channel must erase
its initial condition at the finite input amplitudes used in the benchmark,
the encoded information must remain accessible with a finite measurement
budget, and the favorable QFIM structure must persist as the reservoir and its
memory window grow.  This section addresses these questions in turn.  We
verify finite-amplitude washout, derive and test the finite-copy behavior of
random local-Pauli shadows, and examine how the QFIM spectrum and second-order
write-in support vary with reservoir size.  These results delimit the regime
in which the local analytic conclusions apply to the task benchmarks in the
Letter.

\subsection{Finite-amplitude washout}
\label{sec:S_finite_amplitude_washout}

The algebraic result in Sec.~\ref{sec:S_esp} establishes local ESP near the
operating point but does not assert uniform contraction for arbitrary input
amplitudes.  We therefore test directly whether the finite-amplitude driven
dynamics used in the benchmark erase their initial condition.  This numerical
test supports ESP only over the sampled amplitudes and input families.  For a
common input sequence and two initial reservoir states, define
\begin{equation}
    D_t(\rho,\sigma)
    =
    \frac12\left\|
        \Phi_{s_t}\circ\cdots\circ\Phi_{s_1}(\rho-\sigma)
    \right\|_1 .
    \label{eq:S_finite_washout_distance}
\end{equation}
For every nonidentity three-qubit Pauli string \(P\), the Pauli probe is the
physical state pair
\(\rho_\pm^{(P)}=(\id\pm P)/8\).  Its initial difference is \(P/4\), so
\(D_0(\rho_+^{(P)},\rho_-^{(P)})=1\).  We propagate these \(63\) normalized
Pauli probes together with \(16\) random pairs of orthogonal pure states,
which have the same initial trace distance.  The input set contains constant,
alternating, and \(64\) independent random sequences at each of
\(\delta w=0,0.05,0.10,0.25,0.40\).  Figure~\ref{fig:S_finite_washout}
shows the maximum over all tested initial differences and sequences.

\begin{figure}[H]
    \centering
    \includegraphics[width=0.98\linewidth]{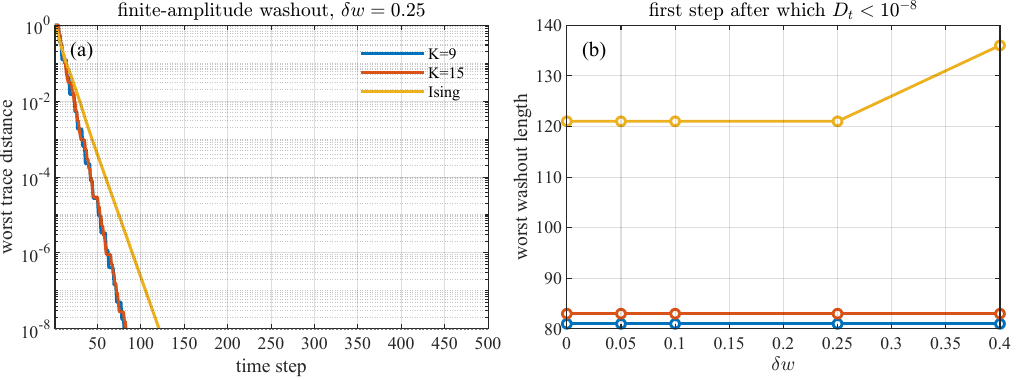}
    \caption{Finite-amplitude loss of the initial reservoir state.  (a) The
    worst trace distance at \(\delta w=0.25\).  (b) The first time after
    which the distance remains below \(10^{-8}\), maximized over the tested
    sequences.  All three reservoirs converge for every tested amplitude;
    the largest observed washout lengths are \(81\), \(83\), and \(136\) for
    \(K=9\), \(K=15\), and Ising, respectively.}
    \label{fig:S_finite_washout}
\end{figure}

At \(500\) steps the worst distances are between \(10^{-16}\) and
\(4\times10^{-15}\), close to numerical precision.  This finite test supports
the washout length used in the task simulations.  It is not presented as a
uniform proof over all bounded input sequences; rather, it verifies the
finite-amplitude regime and the three fixed dynamics used in Fig.~4.

\subsection{Classical Fisher information of the shadow readout}
\label{sec:S_shadow_fim}

The QFIM describes the information present in the state before a measurement
is chosen.  To determine how much of this information is accessible to the
fixed shadow protocol, consider the local model
\(\rho(\bm\theta)=\rho_\ast+\sum_k\theta_k\Delta_k+O(\|\bm\theta\|^2)\), where
\(\theta_k\) is the perturbation of the \(k\)th historical input.  The outcome
probability and its derivative at the operating point are
\[
    p_{\bm b,\bm o}
    =\Tr(\rho_\ast E_{\bm b,\bm o}),
    \qquad
    \partial_k p_{\bm b,\bm o}
    =\Tr(\Delta_k E_{\bm b,\bm o}).
\]
Substitution into the standard classical Fisher-information formula gives
\begin{equation}
    \mathsf I^{\rm sh}_{kl}
    =
    \sum_{\bm b,\bm o}
    \frac{
        \Tr(\Delta_k E_{\bm b,\bm o})
        \Tr(\Delta_l E_{\bm b,\bm o})
    }{
        \Tr(\rho_\ast E_{\bm b,\bm o})
    }.
    \label{eq:S_shadow_fim}
\end{equation}
Thus \(\mathsf H\) quantifies the information available in the quantum state,
whereas \(\mathsf I^{\rm sh}\) quantifies the part retained after the random
local-Pauli measurement is fixed; in general
\(\mathsf I^{\rm sh}\preceq\mathsf H\).  Equation~\eqref{eq:S_shadow_estimator}
and Eq.~\eqref{eq:S_shadow_fim} describe two uses of the same outcomes: the
former reconstructs Pauli features for regression, while the latter measures
their local sensitivity to the historical inputs.
For \(N_{\rm shot}\) independent copies, the total CFIM is
\(N_{\rm shot}\mathsf I^{\rm sh}\).  To compare only the coupling between delay
directions, Fig.~\ref{fig:S_fisher_matrices} displays
\begin{equation}
    \widetilde{\mathsf F}_{kl}
    =
    \frac{\mathsf F_{kl}}{\sqrt{\mathsf F_{kk}\mathsf F_{ll}}},
    \qquad \mathsf F\in\{\mathsf H,\mathsf I^{\rm sh}\}.
    \label{eq:S_normalized_fisher}
\end{equation}

\begin{figure}[H]
    \centering
    \includegraphics[width=0.98\linewidth]{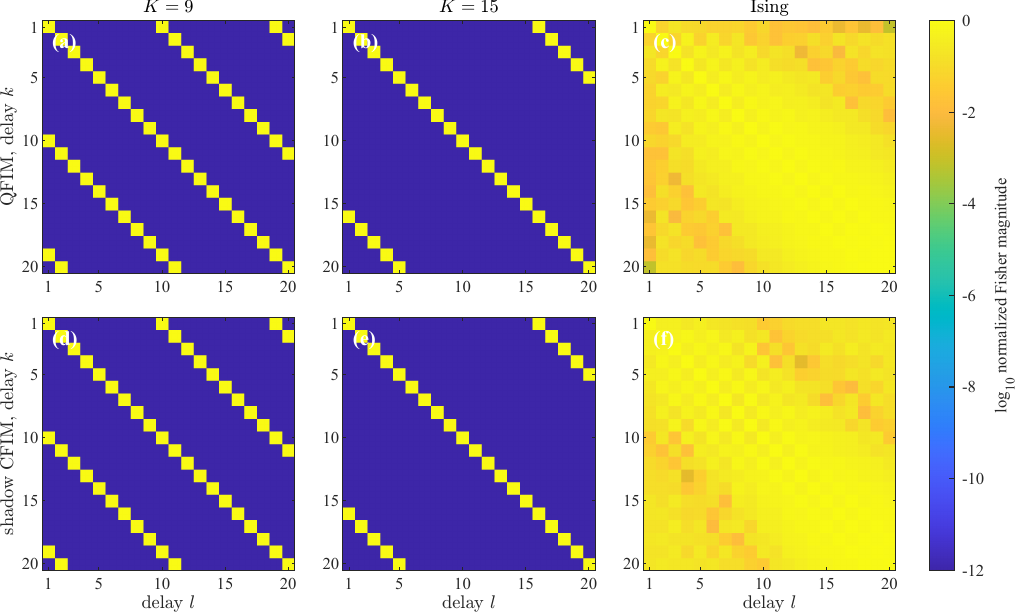}
    \caption{Delay-space Fisher structure of the three fixed reservoirs at
    \(s_0=1/2\).  The top row shows the normalized QFIM and the bottom row
    shows the normalized CFIM of the uniformly randomized
    local-Pauli POVM.  The Clifford routes retain their periodic diagonal
    structure after the measurement is fixed, whereas the Ising response is
    distributed over a dense set of delay directions.}
    \label{fig:S_fisher_matrices}
\end{figure}

The similarity between the two rows for the Clifford reservoirs does not
mean that the shadow POVM saturates the QFIM element by element.  The
normalization in Eq.~\eqref{eq:S_normalized_fisher} removes the diagonal
efficiencies and exposes only the delay mixing.  The finite-shot simulations
retain both the diagonal losses and the complete multinomial covariance.

\subsection{First- and second-order finite-copy scaling}
\label{sec:S_signal_scaling}

The local signal carried by a linear tangent is proportional to
\(\delta w\).  Its squared signal-to-noise ratio under independent sampling
is therefore controlled by
\begin{equation}
    R_1=N_{\rm shot}(\delta w)^2.
    \label{eq:S_R1}
\end{equation}
The leading product tangent is second order in the input amplitude, so its
corresponding local resource is
\begin{equation}
    R_2=N_{\rm shot}(\delta w)^4.
    \label{eq:S_R2}
\end{equation}
These relations concern the local signal order and are distinct from an
assumption of Gaussian feature noise.  We test them using exact multinomial
sampling from Eq.~\eqref{eq:S_shadow_povm}, while holding the reservoirs,
POVM, targets, and regression procedure fixed.

\begin{figure}[H]
    \centering
    \includegraphics[width=0.99\linewidth]{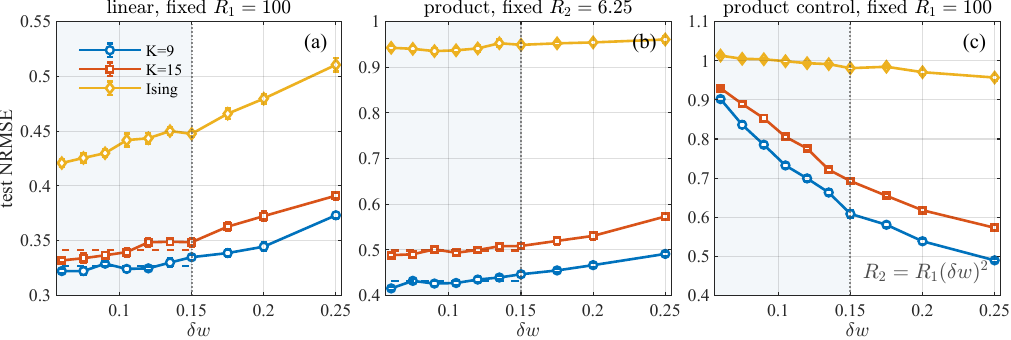}
    \caption{Finite-copy scaling at linear delay \(q=3\) and product delay
    \(q=3\).  (a) Linear prediction at fixed \(R_1=100\).  (b) Product
    prediction at fixed \(R_2=6.25\).  (c) The same product target at fixed
    \(R_1=100\), for which \(R_2=R_1(\delta w)^2\) is not held constant.  The
    shaded interval marks the resolved local regime
    \(0.06\le\delta w\le0.15\); dashed segments show the mean Clifford scores
    over that interval.  Error bars are standard errors over three input
    sequences and five multinomial realizations.}
    \label{fig:S_signal_scaling}
\end{figure}

At fixed \(R_1\), the linear curves form approximate plateaus in the local
regime.  At fixed \(R_2\), the Clifford product curves show the analogous
second-order plateaus.  By contrast, fixing \(R_1\) for the product task
forces \(R_2\) to decrease quadratically with \(\delta w\), producing the
strong drift in Fig.~\ref{fig:S_signal_scaling}(c).  Departures from a perfect
plateau at larger amplitudes arise from higher-order response terms and the
finite trajectory used to fit the readout.

\subsection{QFIM scaling with reservoir size}
\label{sec:S_qfim_scaling}

To compare memory windows of different lengths, we use the extremal
eigenvalues \(\lambda_{\min}(\mathsf H)\) and
\(\lambda_{\max}(\mathsf H)\), the total information \(\Tr\mathsf H\),
and
\begin{equation}
    r_{\rm eff}(\mathsf H)
    =\frac{(\Tr\mathsf H)^2}{\Tr(\mathsf H^2)},
    \qquad
    \operatorname{cond}(\mathsf H)
    =\frac{\lambda_{\max}(\mathsf H)}{\lambda_{\min}(\mathsf H)}.
    \label{eq:S_qfim_screening_metrics}
\end{equation}
These quantities describe complementary aspects of local delay
reconstruction.  A larger \(\Tr\mathsf H\) means greater total sensitivity
to the input history, but does not show how evenly that sensitivity is
distributed.  The smallest eigenvalue controls the least-resolved input
combination and is therefore a worst-case measure; larger
\(\lambda_{\min}\) is favorable.  We plot \(K\lambda_{\min}\) to remove the
trivial \(1/K\) scale associated with comparing windows of different lengths.
The largest eigenvalue measures the best-resolved combination, but a large
\(\lambda_{\max}\) alone may simply indicate that the information is
concentrated in a few directions.  Correspondingly,
\(r_{\rm eff}\in[1,K]\) estimates the number of appreciably resolved
directions, so values near \(K\) are favorable, whereas
\(\operatorname{cond}(\mathsf H)\ge1\) measures spectral imbalance and is
best when close to one.  Thus the favorable trends in
Fig.~\ref{fig:S_qfim_scaling} are large \(\Tr\mathsf H\),
\(K\lambda_{\min}\), and \(r_{\rm eff}/K\), together with a small condition
number.  These are diagnostics of the local QFI model rather than universal
predictors of every nonlinear task.

We next compare the deepest Singer window \(K=2^N+1\) with a random-Ising
ensemble for \(N=2,3,4,5\).  For the Ising reservoirs, \(K\) is a common
evaluation window rather than an intrinsic period.  Restricting both models
to the same \(K\) historical inputs gives QFIMs of the same dimension and
therefore compares their ability to resolve the same reconstruction problem.
The Ising Hamiltonian and parameter ranges are those in
Eqs.~\eqref{eq:S_ising_U} and \eqref{eq:S_ising_H}.  For each \(N\), we retain
\(1000\) samples whose fixed-point residual is below \(10^{-9}\), whose fixed
state is full rank to numerical tolerance, and whose nonstationary spectral
radius is below \(1-10^{-5}\).  The \(K\)-delay QFIM is then evaluated
analytically from the channel derivative at \(s_0=1/2\).
Figure~\ref{fig:S_qfim_scaling} compares the resulting Ising distributions
with the optimized Singer construction; the analysis below explains the
scaling of the Singer points.

For reproducibility, Table~\ref{tab:S_singer_scaling_models} gives one
explicit realization of every Singer curve in
Figs.~\ref{fig:S_qfim_scaling} and~\ref{fig:S_qfim_window_scan}.  For
\(N=2,3\), the Pauli strings refer to the matrices \(M_5\) and \(M_9\) in
Eqs.~\eqref{eq:S_explicit_M5} and~\eqref{eq:S_explicit_M9}.  For
\(N=4,5\), we construct multiplication by an element of order \(K\) in a
polynomial basis of \(\mathbb F_{2^{2N}}\) and transform its invariant
alternating form to the symplectic form in Eq.~\eqref{eq:S_symplectic_form}.
This fixes reproducible binary matrices \(M_{17}\) and \(M_{33}\); their
Clifford lifts are specified below through their action on all single-qubit
Pauli generators.  The listed pairs
produce the selected hit words through Eq.~\eqref{eq:S_eta_word}, and the
fading values obey \(r^2=(h-1)/h\).  A change of symplectic basis changes the
literal Pauli strings but gives a Clifford-equivalent channel with the same
hit word and QFIM.

\begin{table}[!t]
    \centering
    \begin{tabular}{ccccccc}
        \toprule
        \(N\) & \(K\) & route & \(Q\) & \(G\) & \(h\) & \(r\) \\
        \midrule
        2 & 5  & \(V_5\)  & \(YI\)    & \(XI\)    & 2  & \(1/\sqrt2\) \\
        3 & 9  & \(V_9\)  & \(IZI\)   & \(IYI\)   & 2  & \(1/\sqrt2\) \\
        4 & 17 & \(V_{17}\) & \(IIIZ\)  & \(IIIX\)  & 6  & \(\sqrt{5/6}\) \\
        5 & 33 & \(V_{33}\) & \(IIIYI\) & \(IIXXI\) & 12 & \(\sqrt{11/12}\) \\
        \bottomrule
    \end{tabular}
    \caption{Representative Singer reservoirs used in the QFIM scaling and
    history-window comparisons.  The qubit order follows the convention of
    Sec.~S4.  All four pairs satisfy \(\{Q,G\}=0\) and the global hitting
    condition.}
    \label{tab:S_singer_scaling_models}
\end{table}

\begin{figure}[H]
    \centering
    \includegraphics[width=0.94\linewidth]{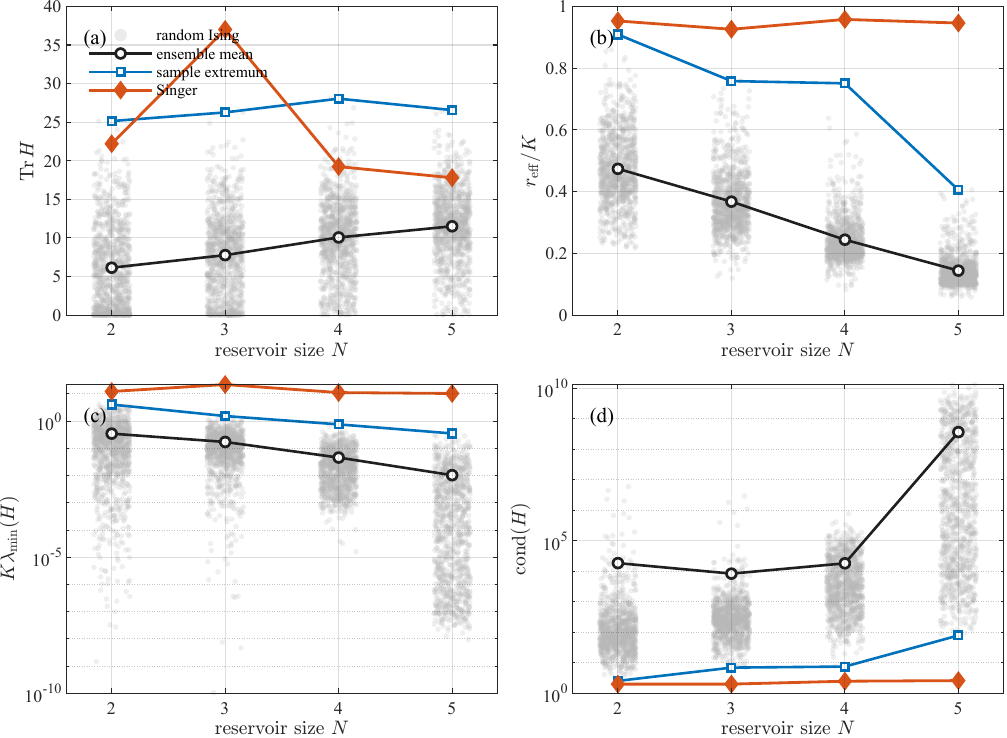}
    \caption{QFIM scaling for Singer reservoirs and ESP-qualified random
    Ising reservoirs.  Gray points show \(1000\) Ising samples for each \(N\);
    black circles show their means, and blue squares show the sample maxima
    in panels (a)--(c) and the sample minimum in panel (d).  Orange diamonds
    show minimum-hit Singer constructions, with \(r\) chosen according to
    Eq.~\eqref{eq:S_singer_deep_qfi_optimum}; the corresponding \(Q,G,r\)
    values are listed in Table~\ref{tab:S_singer_scaling_models}.  The four panels report
    (a) total Fisher information,
    (b) relative effective rank, (c) the weakest-delay scale, and (d) the
    condition number of full-rank QFIMs.}
    \label{fig:S_qfim_scaling}
\end{figure}

Figure~\ref{fig:S_qfim_scaling} separates the total sensitivity to the input
history from its distribution over delay directions.  Although
\(\Tr\mathsf H\) alone does not show a uniform ordering at every reservoir
size, the Singer construction retains a relative effective rank close to one,
a finite weakest-delay scale, and an order-one condition number.  The random
Ising ensemble instead shifts toward lower relative rank and weaker least
resolved directions as \(N\) increases, while its condition-number
distribution broadens.  Hence comparable total QFI does not imply comparable
resolution of the full input history.  The blue extremum is selected
separately in each panel and need not correspond to a single Ising reservoir;
it is included to show the best sampled value of each diagnostic rather than a
simultaneous optimum.

The common cutoff \(K=2^N+1\) in Fig.~\ref{fig:S_qfim_scaling} is not meant
to assign a period to the Ising dynamics.  To verify that the comparison is
not an artifact of this single endpoint, we fix \(N=4,5\) and evaluate every
leading delay window

\begin{equation}
    \mathsf H_L=\mathsf H_{1:L,1:L},
    \qquad L=1,\ldots,2^N+1,
    \label{eq:S_common_history_window}
\end{equation}

for the same \(1000\) Ising reservoirs.  Each \(\mathsf H_L\) describes the
same reconstruction problem for both model classes: resolving the most recent
\(L\) inputs from the final reservoir state.  \textbf{Here \(L\) specifies the
dimension of the reconstruction task; it is not a presumed period of the
Ising dynamics.}  The Singer reservoirs and all Ising samples are held fixed
as \(L\) is varied; neither class is reoptimized for an individual window.
In particular, the \(N=4,5\) Singer curves use the corresponding fixed
\(Q,G,r\) entries in Table~\ref{tab:S_singer_scaling_models} throughout the
scan.

\begin{figure}[H]
    \centering
    \includegraphics[width=0.96\linewidth]{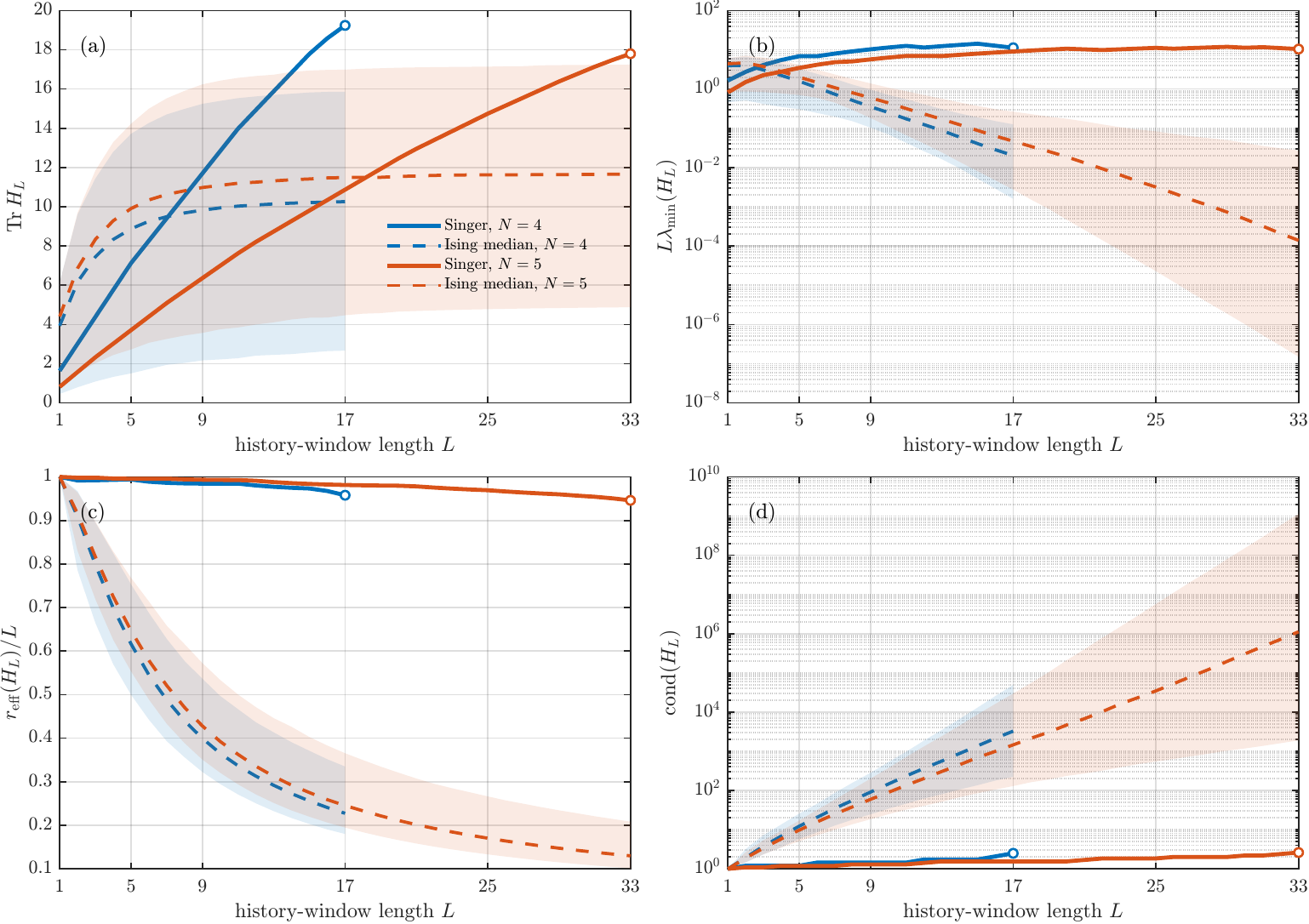}
    \caption{QFIM diagnostics as the common history window \(L\) is enlarged
    at fixed reservoir sizes \(N=4\) and \(N=5\).  Solid curves show the
    fixed Singer constructions listed in
    Table~\ref{tab:S_singer_scaling_models}.  Dashed curves and shaded bands show, respectively,
    the median and 10th--90th percentile range of the same \(1000\) ESP-qualified
    Ising reservoirs used in Fig.~\ref{fig:S_qfim_scaling}.  The panels report
    (a) total QFI, (b) \(L\lambda_{\min}\), (c) relative effective rank, and
    (d) condition number.  All sampled QFIMs are full rank at the numerical
    tolerance used here.}
    \label{fig:S_qfim_window_scan}
\end{figure}

Figure~\ref{fig:S_qfim_window_scan} shows what is hidden by an endpoint-only
comparison.  The Ising median \(\Tr\mathsf H_L\) initially grows
but soon saturates, whereas the added delay directions become progressively
weaker and more uneven.  Consequently,
\(L\lambda_{\min}(\mathsf H_L)\) and \(r_{\rm eff}(\mathsf H_L)/L\) decrease
with \(L\), while the condition number rises rapidly.  The Singer QFIM instead
adds one orthogonal direction at each step.  Its relative effective rank stays
close to one, its weakest-direction scale remains finite, and its condition
number remains of order unity throughout the designed window.  Thus the
advantage at \(L=K\) is the endpoint of a systematic window-length trend, not
a consequence of imposing the Singer period on the Ising model.

The Singer scaling follows in two steps.  We first optimize the QFIM spectrum
at a fixed number of damping hits and then use the Singer hit-word
distribution to determine its dependence on the reservoir size.

\paragraph{Singer spectrum at fixed hit count.}
The Singer scaling can be obtained directly from
Eq.~\eqref{eq:S_many_H_eta}.  Let
\(h=\sum_{j=1}^K\eta_j\ge2\) be the number of damping hits in one period and
cyclically choose the word so that \(\eta_K=1\).  The largest and smallest
diagonal entries are then
\begin{equation}
    \lambda_{\max}(r)=\pi^2(1-r^2),
    \qquad
    \lambda_{\min}(r)=\pi^2(1-r^2)r^{2(h-1)}.
    \label{eq:S_singer_extreme_qfi}
\end{equation}
Optimizing the deepest-delay QFI over the fading parameter gives
\begin{equation}
    r_\ast^2=\frac{h-1}{h},
    \qquad
    \lambda_{\min}^\ast
    =\frac{\pi^2}{h}\left(1-\frac1h\right)^{h-1},
    \qquad
    \operatorname{cond}(\mathsf H_\ast)
    =\left(1-\frac1h\right)^{-(h-1)}<e.
    \label{eq:S_singer_deep_qfi_optimum}
\end{equation}
At this optimum, every QFIM eigenvalue lies between
\(\lambda_{\min}^\ast>\pi^2/(eh)\) and
\(\lambda_{\max}^\ast=\pi^2/h\).  Summing the \(K\) eigenvalues gives the
trace bounds below.  The effective-rank bound follows without introducing
their individual multiplicities.  If the eigenvalues are \(\lambda_k\),
then
\(\sum_k\lambda_k^2\le\lambda_{\max}\sum_k\lambda_k\), and hence
\[
    r_{\rm eff}
    =\frac{(\sum_k\lambda_k)^2}{\sum_k\lambda_k^2}
    \ge\frac{\sum_k\lambda_k}{\lambda_{\max}}
    \ge K\frac{\lambda_{\min}}{\lambda_{\max}}.
\]
Together with \(r_{\rm eff}\le K\), this yields
\begin{equation}
    \frac{\pi^2K}{eh}
    <K\lambda_{\min}^\ast
    \le\Tr\mathsf H_\ast
    \le\frac{\pi^2K}{h},
    \qquad
    \frac{K}{e}<r_{\rm eff}(\mathsf H_\ast)\le K,
    \qquad
    2\le\operatorname{cond}(\mathsf H_\ast)<e.
    \label{eq:S_singer_metric_scaling}
\end{equation}
Thus
\(\Tr\mathsf H_\ast,K\lambda_{\min}^\ast=\Theta(K/h)\),
\(r_{\rm eff}(\mathsf H_\ast)=\Theta(K)\), and
\(\operatorname{cond}(\mathsf H_\ast)=O(1)\).  In particular, increasing the
route period does not make the designed QFIM exponentially ill conditioned.

\paragraph{Minimum hit count and its scaling.}
The size dependence of the Singer spectrum is therefore controlled by the
number \(h\) of damping hits in one period.  The transparent-run constraint in
Sec.~\ref{sec:S_esp} first gives the elementary bound
\(h\ge\lceil K/(2N)\rceil\).  To see this, sum the \(K\) cyclic
length-\(2N\) window inequalities established in
Sec.~\ref{sec:S_esp}; every hit is then counted exactly \(2N\) times, giving
\(2Nh\ge K\).  This elementary bound uses only the spanning property of
consecutive Singer labels and does not rely on a finite-field description.

The elementary argument rules out very sparse hit words but is not tight.
The Singer hit words form a well-known binary cyclic code whose exact weight
distribution is known \cite{SMSchoofVanDerVlugt1991Hecke}.  For every nonzero
word and \(N\ge3\), that result places the hit count within \(2^{N/2}\) of
half the period.  In the present notation,
\begin{equation}
    \left|h-\frac{K}{2}\right|<2^{N/2},
    \qquad
    h=\frac{K}{2}+O(\sqrt K),
    \qquad
    \frac{K}{h}=2+O(K^{-1/2}).
    \label{eq:S_singer_hit_concentration}
\end{equation}
Thus every nonzero Singer hit word has asymptotic density one half.  The cited
weight distribution also determines the exact finite-size minimum; direct
enumeration provides an independent check for the sizes studied below.

Combining Eq.~\eqref{eq:S_singer_hit_concentration} with
Eq.~\eqref{eq:S_singer_metric_scaling} gives the dimension-explicit Singer
scaling
\begin{equation}
    \Tr\mathsf H_\ast, K\lambda_{\min}^\ast=\Theta(1),
    \qquad
    r_{\rm eff}(\mathsf H_\ast)=\Theta(2^N),
    \qquad
    \operatorname{cond}(\mathsf H_\ast)=O(1).
    \label{eq:S_singer_N_scaling}
\end{equation}
The weaker elementary inequality \(h\ge K/(2N)\) by itself gives only
\(\Tr\mathsf H_\ast=O(N)\) and
\(K\lambda_{\min}^\ast=O(N)\).  The cyclic-code result shows that Singer hit
words actually satisfy \(h=\Theta(K)\), sharpening both estimates to
\(\Theta(1)\).

\paragraph{Comparison with random Ising reservoirs.}
The reference construction uses a minimum-hit Singer word, because
\(\lambda_{\min}^\ast\) decreases with \(h\), and then selects, among tied
words, the rotation with the largest effective rank.  The exact cyclic-code
spectrum, supplemented by direct enumeration for \(N=2\), gives
\(h=2,2,6,12\) for \(N=2,3,4,5\), respectively.  Hence the deepest-delay QFI
decreases as \(1/h\), while the compensated quantity
\(K\lambda_{\min}\) follows the \(K/h\) scale in
Eq.~\eqref{eq:S_singer_metric_scaling}.

For \(N=2,\ldots,5\), the Singer routes retain nearly uniform QFIM spectra:
\(r_{\rm eff}/K\) remains above \(0.92\),
\(K\lambda_{\min}(\mathsf H)\) above \(10\), and
\(\operatorname{cond}(\mathsf H)\) below \(2.6\).  By contrast, the
random-Ising ensemble exhibits a decreasing relative effective rank and
weakest-delay information, together with a rapidly broadening
condition-number distribution.  Although this finite-size comparison does
not establish the asymptotic behavior of random Ising reservoirs, it shows
that designed routing maintains a well-resolved delay space as the memory
window grows.

\paragraph{Clifford generators used in the scaling analysis.}
The generators \(V_5\) and \(V_9\) are given in
Eqs.~\eqref{eq:S_explicit_V5} and \eqref{eq:S_explicit_V9}.  To make the
scaling results fully reproducible, we specify the \(N=4,5\) Clifford
generators used in the numerical curves through their action on the
single-qubit Pauli generators, choosing positive Pauli phases.  Up to an
irrelevant global phase,
\begin{equation}
\begin{aligned}
 V_{17}X_1V_{17}^\dagger&=ZXZI,&
 V_{17}Z_1V_{17}^\dagger&=XYXZ,\\
 V_{17}X_2V_{17}^\dagger&=ZYXX,&
 V_{17}Z_2V_{17}^\dagger&=XYII,\\
 V_{17}X_3V_{17}^\dagger&=YXXI,&
 V_{17}Z_3V_{17}^\dagger&=XIXZ,\\
 V_{17}X_4V_{17}^\dagger&=XIYX,&
 V_{17}Z_4V_{17}^\dagger&=IYYY,
\end{aligned}
\label{eq:S_explicit_V17_action}
\end{equation}
and
\begin{equation}
\begin{aligned}
 V_{33}X_1V_{33}^\dagger&=IYXZI,&
 V_{33}Z_1V_{33}^\dagger&=ZYZZX,\\
 V_{33}X_2V_{33}^\dagger&=IYZXZ,&
 V_{33}Z_2V_{33}^\dagger&=IIXIZ,\\
 V_{33}X_3V_{33}^\dagger&=ZXYZX,&
 V_{33}Z_3V_{33}^\dagger&=ZZYZX,\\
 V_{33}X_4V_{33}^\dagger&=ZYZYY,&
 V_{33}Z_4V_{33}^\dagger&=XIXZI,\\
 V_{33}X_5V_{33}^\dagger&=IYZYY,&
 V_{33}Z_5V_{33}^\dagger&=XYZYX.
\end{aligned}
\label{eq:S_explicit_V33_action}
\end{equation}
These images preserve the Pauli commutation relations and therefore define
valid Clifford unitaries.  Together with
Table~\ref{tab:S_singer_scaling_models}, they fully specify the Singer
channels in Figs.~\ref{fig:S_qfim_scaling}
and~\ref{fig:S_qfim_window_scan}.

\subsection{Delayed-product memory along minimum-hit Singer routes}
\label{sec:S_singer_product_scaling}

Section~\ref{sec:S_route_specific_second_order} defined the consecutive
second-order write-in window \(L_2\) and proved the general bound
\(L_2\le2N-2\).  At fixed route period, the first- and second-order objectives
are not identical.  Minimizing the number of damping hits \(h\) maximizes the
weakest first-order QFI after optimizing \(r\), whereas \(L_2\) measures the
longest initial range of delayed products with nonzero leading mixed response.
We therefore ask whether a minimum-hit Singer route can simultaneously retain
a long consecutive delayed-product window.

For \(Q,G\), and a Singer generator \(M\),
\(L_2(Q,G;M)\) is the largest integer for which
\(w_{\rm prod}(\ell+1;Q,G)=1\) for every
\(\ell=1,\ldots,L_2\).  Equivalently, all product delays
\(q=2,\ldots,L_2+1\) have a nonzero leading mixed response.  We examine this
compatibility over all inequivalent Singer-generator classes rather than for
a single reference route.

For a fixed Singer generator \(M\), vary the legal Pauli strings \(Q\ne I\)
and \(G\) with \(\{Q,G\}=0\).  At an attainable hit count \(h\), define
\begin{equation}
    L_2^\star(h;M)
    =
    \max_{\substack{Q\ne I,\ \{Q,G\}=0\\
          \sum_{j=1}^{K}\eta_j(Q,G;M)=h}}
    L_2(Q,G;M).
    \label{eq:S_hit_resolved_product_window}
\end{equation}
This is the longest consecutive delayed-product window compatible with that
amount of damping.  The attainable Singer hit counts are even, so
the scan proceeds through \(h_{\min},h_{\min}+2,\ldots\), rather than through
every integer.  This parity follows because the sum modulo two of a complete
hit word vanishes.

\textbf{Class-resolved enumeration.}  For each \(N\), we choose one Singer
generator \(M_{\rm ref}^{(N)}\) only as a starting point.  Every primitive
power \((M_{\rm ref}^{(N)})^p\), with \(\gcd(p,K)=1\), is included.  Powers
related by \(p\mapsto2^jp\pmod K\) have the same binary characteristic
polynomial and differ only by the associated field-coordinate relabeling, so
one representative of each such doubling orbit is retained
\cite{SMHestenes1970SingerGroups}.
For every representative, we enumerate all nonidentity \(Q\) and all \(G\)
satisfying \(\{Q,G\}=0\), and maximize \(L_2\) at \(h=h_{\min}\).  If a class
does not attain \(N-1\) there, we scan its higher admissible hit counts and
record the smallest increase that reaches \(N-1\).  This is an exact
enumeration of both the Pauli pairs and the inequivalent Singer-generator
classes.

To summarize the dependence on the Singer-generator class, we record four
quantities.  The values \(L_{2,\min}^{\rm cls}\) and
\(L_{2,\max}^{\rm cls}\) are the smallest and largest optimized windows at
\(h=h_{\min}\) across all inequivalent classes.
The quantity \(\Delta h_{\max}^{\rm cls}\) is the largest additional hit count
required by any class to reach \(L_2=N-1\), while
\(n_{\rm opt}/n_{\rm cls}\) gives the fraction of classes that already reach
this value at \(h_{\min}\).  These quantities are independent of the Singer
generator chosen as the initial reference.

\begin{table}[H]
    \centering
    \begin{tabular}{@{}cccccc@{}}
        \toprule
        \(N\) & \(h_{\min}\)
        & \(L_{2,\min}^{\rm cls}\)
        & \(L_{2,\max}^{\rm cls}\) & \(\Delta h_{\max}^{\rm cls}\)
        & \(n_{\rm opt}/n_{\rm cls}\) \\
        \midrule
        2  & 2   & 1  & 1  & 0 & 1/1   \\
        3  & 2   & 2  & 2  & 0 & 1/1   \\
        4  & 6   & 3  & 3  & 0 & 2/2   \\
        5  & 12  & 4  & 4  & 0 & 2/2   \\
        6  & 26  & 5  & 5  & 0 & 4/4   \\
        7  & 54  & 6  & 6  & 0 & 6/6   \\
        8  & 114 & 7  & 7  & 0 & 16/16 \\
        9  & 234 & 7  & 8  & 2 & 17/18 \\
        10 & 482 & 9  & 9  & 0 & 40/40 \\
        11 & 980 & 10 & 10 & 0 & 62/62 \\
        \bottomrule
    \end{tabular}
    \caption{Exact class-resolved delayed-product windows for
    \(N=2,\ldots,11\).  The third and fourth columns give the smallest and
    largest \(L_2^\star(h_{\min};M)\) over all inequivalent Singer-generator
    classes.  The fifth column is the largest additional hit count required
    by any class to reach \(L_2=N-1\).  The final column gives the number
    \(n_{\rm opt}\) of classes already attaining \(N-1\) at \(h_{\min}\),
    divided by the total number \(n_{\rm cls}\) of classes.}
    \label{tab:S_hit_resolved_product_window}
\end{table}

At \(N=9\), one of the 18 period-513 generator classes is exceptional.  A
representative of that class is the original reference generator, for which
\begin{equation}
    L_2^\star(234;M_{\rm ref}^{(9)})=7,
    \qquad
    L_2^\star(236;M_{\rm ref}^{(9)})=8.
    \label{eq:S_N9_hit_tradeoff}
\end{equation}
Thus the next admissible hit count restores \(N-1=8\) for the exceptional
class, giving \(\Delta h_{\max}^{\rm cls}=2\).  The other 17 period-513 classes
already attain \(L_2=8\) at \(h_{\min}=234\).  At every other tested size,
all inequivalent classes attain \(L_2=N-1\) at the minimum hit count.

\textbf{Within the tested range, the first-order choice \(h=h_{\min}\) is
therefore compatible with a consecutive delayed-product window of length
\(N-1\), apart from one ordering class at \(N=9\).}  This is an exact
finite-size statement, not a theorem that strengthens the general bound in
Eq.~\eqref{eq:S_product_initial_run_bound}.  Moreover, \(h\) and \(L_2\)
describe whether the response is available; the final task NRMSE also depends
on response amplitudes and readout noise.

\makeatletter
\let\SM@saved@innerbib\auto@bib@innerbib
\let\SM@saved@frontmatternotes\@FMN@list
\let\auto@bib@innerbib\@empty
\let\@FMN@list\@empty
\makeatother

\makeatletter
\let\auto@bib@innerbib\SM@saved@innerbib
\let\@FMN@list\SM@saved@frontmatternotes
\makeatother

\end{document}